\documentclass[10pt,romanappendices]{IEEEtran}

\usepackage[english]{babel} % Different languages:
                                    % Swap between the languages
                                    % with \selectlanguage{sprache} 
\usepackage{amsmath}                % AMS Fonts
\usepackage{amsfonts}               % AMS Fonts 
\usepackage{amssymb}                % Mathemat. symbols of AMS
\usepackage{amsopn}                 % Further math. operators like
                                    % \varlimsup
\allowdisplaybreaks[3] % Page breaks within eqnarray and IEEEeqnarray
                       % allowed: it modifies
                       % \interdisplaylinepenalty=xxx 
                       % where for [1]: xxx=9999; for [2] xxx=6999;
                       % for [3] xxx=2999; and for [4] xxx=0.
                       % Default ist xxx=10000 (corresponds to breaks 
                       % are not allowed). 
\usepackage{bbm}                    % $\mathbbm{1}$ yields a special ``1''
\usepackage{mathrsfs}               % new mathfonts: \mathscr{ABC}
\usepackage{calc}                   % Calculations within the tex-file
\usepackage[dvips]{graphicx}        % Further graphic-commands (use
                                    % only one of those two packages
                                    % at the same time!)
\usepackage{epsfig}                % Easier syntax for positioning of
                                    % figures 
\usepackage{psfrag}                 % Include PS-graphics with LaTeX-
\usepackage[dvips]{color}           % Coloring of text
\usepackage{fancyhdr}               % Special page layouts
\usepackage{verbatim}               % citations directly included
\usepackage{exscale}                % Improving on a weakness of

\usepackage{macros}

\newtheorem{theorem}{Theorem}
\newtheorem{lemma}{Lemma}
\newtheorem{corollary}{Corollary}
\newtheorem{proposition}{Proposition}

\newtheorem{definition}{Definition}

 %variance
\renewcommand{\const}[1]{\mathcal{#1}}
\renewcommand{\WW}{W}

\title{At Low SNR Asymmetric Quantizers Are Better}

%% \author{\authorblockN{Tobias Koch ~~~~~ Amos Lapidoth}
%% \authorblockA{ETH
%%   Zurich, Switzerland\\
%% Email: \{tkoch, lapidoth\}@isi.ee.ethz.ch}
%% \and 
%% \authorblockN{Paul P.~Sotiriadis}
%% \authorblockA{Johns Hopkins University, Baltimore, USA\\Email:
%% pps@jhu.edu}}

%\author{Tobias Koch\\\small University of Cambridge\\\small Cambridge CB2 1PZ, UK\\\small Email: tobi.koch@eng.cam.ac.uk \and Amos Lapidoth\\\small ETH Zurich\\\small 8092 Zurich, Switzerland\\\small Email: lapidoth@isi.ee.ethz.ch}

\author{Tobias~Koch,~\IEEEmembership{Member,~IEEE,}~and~Amos Lapidoth,~\IEEEmembership{Fellow,~IEEE}%
\thanks{T.~Koch has received funding from the European Community's Seventh Framework Programme (FP7/2007-2013) under grant agreement No. 252663 and from the Ministerio de
  Econom\'ia of Spain (projects DEIPRO, id.~TEC2009-14504-C02-01,
  and COMONSENS, id.~CSD2008-00010). The material in this paper was presented in part at the IEEE International Symposium on Information Theory (ISIT), St. Petersburg, Russia, July 31 -- August 5, 2011 and at the International Zurich Seminar on Communications (IZS), Zurich, Switzerland, February 29 -- March 2, 2012.}
 \thanks{T.~Koch was with the Department of Engineering, University of Cambridge, Cambridge CB2 1PZ, UK. He is now with the Signal Theory and Communications Department, Universidad Carlos III de Madrid, 28911 Legan\'es, Spain (e-mail: koch@tsc.uc3m.es).}
 \thanks{A.~Lapidoth is with the Department of Information Technology and Electrical Engineering, ETH Zurich, 8092 Zurich, Switzerland (e-mail: lapidoth@isi.ee.ethz.ch).}}

%\sloppy

\begin{document}

\maketitle

\begin{abstract}
  We study the capacity of the discrete-time Gaussian channel when its
  output is quantized with a one-bit quantizer. We focus on the low
  signal-to-noise ratio (SNR) regime, where communication at very low
  spectral efficiencies takes place. In this regime a symmetric
  threshold quantizer is known to reduce channel capacity by a factor of $2/\pi$,
  i.e., to cause an asymptotic power loss of approximately two
  decibels. Here it is shown that this power loss can be
  avoided by using asymmetric threshold quantizers and asymmetric
  signaling constellations.  To avoid this power loss,
  flash-signaling input distributions are essential. Consequently,
  one-bit output quantization of the Gaussian channel reduces
  spectral efficiency.

  Threshold quantizers are not only asymptotically optimal: 
  at every fixed SNR a threshold quantizer maximizes capacity
  among all one-bit output quantizers.

  The picture changes on the Rayleigh-fading channel. In the
  noncoherent case a one-bit output quantizer causes an
  unavoidable low-SNR asymptotic power loss. In the coherent case,
  however, this power loss is avoidable provided that we allow the
  quantizer to depend on the fading level.
\end{abstract}

\begin{IEEEkeywords}
  Capacity per unit-energy, channel capacity, Gaussian channel, low signal-to-noise ratio (SNR), quantization.
\end{IEEEkeywords}

\section{Introduction}
\label{sec:intro}
\IEEEPARstart{W}{e} study the effect on channel capacity of quantizing
the output of the discrete-time average-power-limited Gaussian channel
using a one-bit quantizer. This problem arises in communication
systems where the receiver uses digital signal processing techniques,
which require that the analog received signal be quantized using an
analog-to-digital converter (ADC). For ADCs with high resolution, the
effects of quantization are negligible. However, high-resolution ADCs
may not be practical when the bandwidth of the communication system is
large and the sampling rate high \cite{walden99}. In such scenarios, low-resolution ADCs must be used. The capacity of the discrete-time Gaussian channel with one-bit output quantization indicates what communication rates can be achieved when the receiver employs a low-resolution ADC.

We focus on the low signal-to-noise ratio (SNR) regime, where communication at low spectral efficiencies takes place, as in Spread-Spectrum and Ultra-Wideband communications. In this regime, a symmetric threshold quantizer\footnote{A threshold quantizer produces $1$ if its input is above a threshold, and it produces $0$ if it is not. A symmetric threshold quantizer is a threshold quantizer whose threshold is zero.} reduces the capacity by a factor of $2/\pi$, corresponding to a 2dB power loss \cite{viterbiomura79}. Hence the rule of thumb that ``hard decisions cause a 2dB power loss." Here we demonstrate that if we allow for \emph{asymmetric threshold quantizers} with corresponding \emph{asymmetric signal constellations}, then the two decibels can be fully recovered.

This result shows that a threshold (but not necessarily symmetric) quantizer is asymptotically optimal as the SNR tends to zero. We further show that this is not only true asymptotically: for any fixed SNR a threshold quantizer is optimal among all one-bit output quantizers.

While quantizing the output of the Gaussian channel with a
one-bit quantizer does not cause a loss with respect to the low-SNR
asymptotic capacity, it does cause a significant loss with respect to
the spectral efficiency. Indeed, as we show, the low-SNR asymptotic capacity of the quantized Gaussian channel can only be achieved by
\emph{flash-signaling} input distributions \cite[Def.~2]{verdu02}. For the
Gaussian channel (even without output quantization), such input
distributions result in poor spectral efficiency
\cite[Th.~16]{verdu02}: Gaussian inputs or (at low SNR) binary antipodal inputs yield much higher spectral efficiencies \cite[Th.~11]{verdu02}.  Since output quantization cannot increase the spectral efficiency, it follows that flash signaling results in poor spectral efficiency also on the quantized Gaussian channel.
Thus, at low SNR, the Gaussian channel with optimal one-bit output quantization has poor
spectral efficiency.

It should be noted that the discrete-time channel model that we consider implicitly assumes that the channel output is sampled at Nyquist rate. While sampling the output at Nyquist rate incurs no loss in capacity for the additive white Gaussian noise (AWGN) channel \cite{shannon48,gallager68}, it is not necessarily optimal (with respect to capacity) when the channel output is first quantized using a one-bit quantizer. In fact, when a symmetric threshold quantizer is employed, sampling the output above the Nyquist rate increases the low-SNR asymptotic capacity \cite{kochlapidoth10_2}, \cite{kochlapidoth10_1_arxiv} and it increases the capacity in the noiseless case \cite{gilbert93,shamai94}.

The rest of the paper is organized as follows. Section~\ref{sec:channel} introduces the channel model and defines the capacity as well as the capacity per unit-energy. Section~\ref{sec:results} presents the paper's main results. Section~\ref{sec:ppm} demonstrates that the capacity per unit-energy can be achieved by pulse-position modulation (PPM). Section~\ref{sec:EbNo} discusses the implications of our results on the spectral efficiency. Section~\ref{sec:noncoherent} studies the effect on the capacity per unit-energy of quantizing the output of the Rayleigh-fading channel using a one-bit quantizer. Sections~\ref{sec:dobi_thmamos} through \ref{sec:noncoherent_proofs} contain the proofs of our results: Section~\ref{sec:dobi_thmamos} contains the proofs concerning channel capacity, Section~\ref{sec:CUE_proofs} contains the proofs concerning the capacity per unit-energy, Section~\ref{sec:dobi_PP} contains the proofs concerning peak-power-limited channels, and Section~\ref{sec:noncoherent_proofs} contains the proofs concerning Rayleigh-fading channels. Section~\ref{sec:conclusion} concludes the paper with a summary and a discussion.

\section{Channel Model and Capacity}
\label{sec:channel}
\begin{figure}[t]
\centering
\psfrag{X}[cb][cb]{\small $X_k$}
\psfrag{M}[rc][rc]{\small $M$}
 \psfrag{Z}[cc][cc]{\small $Z_k$}
\psfrag{encoder}[cc][cc]{\small encoder}
\psfrag{decoder}[cc][cc]{\small decoder}
\psfrag{quantizer}[cc][cc]{\small quantizer}
\psfrag{Y}[cb][cb]{\small $Y_k$}
\psfrag{N}[lc][lc]{\small $\hat{M}$}
\psfrag{U}[cb][cb]{\small $\tilde{Y}_k$}
\epsfig{file=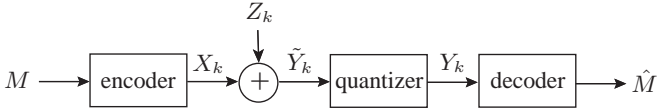, width=0.48\textwidth}
 \caption{System model.}
 \label{fig1}
\end{figure}
We consider the discrete-time communication system depicted in
Figure~\ref{fig1}.  A message~$M$, which is uniformly distributed over
the set $\{1,2,\ldots,\const{M}\}$, is mapped by an encoder to the
length-$n$ real sequence $X_1,X_2,\ldots,X_n \in \Reals$ of channel
inputs. (Here $\Reals$ denotes the set of real numbers.) The channel corrupts this sequence by adding white Gaussian noise to produce the unquantized output sequence
\begin{equation}
\label{eq:channel}
\tilde{Y}_k = X_k + Z_k, \quad k\in\Integers
\end{equation}
where $\{Z_k,\,k\in\Integers\}$ is a sequence of independent and
identically distributed (i.i.d.) Gaussian random variables of zero
mean and variance $\sigma^2$. (Here $\Integers$ denotes the set of
integers.) The unquantized output sequence is then quantized using a
quantizer that is specified by a Borel subset $\set{D}$ of the reals:
it produces $1$ if $\tilde{Y}_k$ is in $\set{D}$ and
produces $0$ if it is not. Denoting the time-$k$ quantizer output by
$Y_{k}$,
\begin{equation*}
  Y_{k} = \left\{\begin{array}{ll} 
    1 \quad & \text{if $\tilde{Y}_k \in \set{D}$,} \\
    0 \quad & \text{if $\tilde{Y}_k \notin \set{D}$.}\end{array} \right.
\end{equation*}
While we only consider deterministic quantizers, it should be noted that our results continue to hold if we allow for randomized quantization rules, i.e., if the quantizer produces $Y_k$ according to some probability distribution $P_{Y|\tilde{Y}}$ with binary $Y$. 

In view of the direct relationship between the set $\set{D}$ and the
quantizer it defines, we shall sometimes abuse notation and refer to
$\set{D}$ as the quantizer.  An example of a one-bit quantizer is the
\emph{threshold quantizer}, which corresponds to the set
\begin{equation}
\label{eq:dobi_thresholdD}
\set{D} = \{\tilde{y}\in\Reals\colon \tilde{y}\geq \Upsilon\}, \quad \Upsilon\in\Reals.
\end{equation}
The decoder observes the quantizer's outputs $Y_1,Y_2,\ldots,Y_n$ and
guesses which message was transmitted.

We impose an average-power constraint on the transmitted sequence: 
for every
realization of the message $M$, the sequence $x_1,x_2,\ldots,x_n$ must satisfy
\begin{equation}
\label{eq:power}
\frac{1}{n} \sum_{k=1}^n x^2_k \leq \const{P}
\end{equation}
for some positive constant $\const{P}$, which we call the \emph{maximal-allowed
  average-power.}

For a fixed quantizer $\set{D}$ and maximal-allowed average-power $\const{P}$,
the capacity $C(\const{P},\set{D})$ is \cite{gallager68,coverthomas91}
\begin{equation}
  \label{eq:def_C_P_D}
  C(\const{P},\set{D}) = \sup_{\E{X^{2}} \leq \const{P}} I(X;Y)
\end{equation}
where the supremum is over all distributions of $X$ under which the
second moment of $X$ does not exceed $\const{P}$. Here and
throughout the paper we omit the time indices where they are
immaterial.

We say that a \emph{rate} $R$ (in nats per channel use) is
\emph{achievable} using power $\const{P}$ and one-bit quantization if for every $\eps>0$
there exists an encoder satisfying \eqref{eq:power} and
\begin{equation}
\frac{\log\const{M}}{n} > R-\eps
\end{equation}
as well as a one-bit quantizer and a decoder such that the probability of
error $\Prob(\hat{M}\neq M)$ tends to zero as $n$ tends to
infinity. Here $\log(\cdot)$ denotes the natural logarithm
function. The \emph{capacity} $C(\const{P})$ is the supremum of all
achievable rates and is given by
\begin{align}
C(\const{P}) &= \sup_{\set{D}} C(\const{P},
\set{D}) \label{eq:am_capacity} \\
& = \sup_{\set{D},\E{X^2}\leq\const{P}} I(X;Y) \label{eq:capacity} 
\end{align}
where the first supremum is over all quantization regions $\set{D}$, and
the second supremum is over all quantization regions $\set{D}$ and over
all distributions of $X$ satisfying $\E{X^2}\leq \const{P}$.

Following \cite{verdu90}, we define the capacity per unit-energy of the
quantizer $\set{D}$ as follows: We say that a \emph{rate per unit-energy}
$\dot{R}(0,\set{D})$ (in nats per energy) is \emph{achievable} with
the quantizer~$\set{D}$ if for every
$\eps>0$ there exists an encoder satisfying
\begin{equation}
\label{eq:energy}
\sum_{k=1}^n x_k^2 \leq \const{E}, \quad \textnormal{for every realization of $M$}
\end{equation}
and
\begin{equation}
\frac{\log\const{M}}{\const{E}} > \dot{R}(0, \set{D}) - \eps
\end{equation}
together with a decoder such that the probability of error
$\Prob(\hat{M}\neq M)$ tends to zero as $\const{E}$ tends to
infinity. The \emph{capacity per unit-energy} $\dot{C}(0,\set{D})$ is
the supremum of all achievable rates per unit-energy with the quantizer $\set{D}$ and is given by \cite[Th.~2]{verdu90}
\begin{align}
\dot{C}(0,\set{D}) & = \sup_{\const{P}>0} \frac{C(\const{P},
  \set{D})}{\const{P}} \label{eq:amosCUElowSNR}\\
& = \lim_{\const{P} \downarrow 0} \frac{C(\const{P},
  \set{D})}{\const{P}} \label{eq:amosCUElowSNR2}
\end{align}
where the second equation follows because, for every $\set{D}$, the capacity $C(\const{P},\set{D})$  is a concave function of $\const{P}$.

The definition of capacity per unit-energy using a one-bit quantizer
is analogous: We say that a \emph{rate per unit-energy} $\dot{R}(0)$
(in nats per energy) is \emph{achievable} using a one-bit quantizer if
for every $\eps>0$ there exists an encoder satisfying \eqref{eq:energy} and
\begin{equation}
\frac{\log\const{M}}{\const{E}} > \dot{R}(0) - \eps
\end{equation}
as well as a one-bit quantizer and a decoder such that the probability of
error $\Prob(\hat{M}\neq M)$ tends to zero as $\const{E}$ tends to
infinity. The \emph{capacity per unit-energy} $\dot{C}(0)$ is the supremum of all achievable rates per
unit-energy.

Extending the proof of Theorem~2 in \cite{verdu90} to account for the
additional maximization over all possible quantizers, we obtain 
%the
%following relationship between the capacity per unit-energy and the
%capacity by \cite[Th.~2]{verdu90}
\begin{equation}
\label{eq:CUElowSNR}
\dot{C}(0) = \sup_{\const{P}>0} \frac{C(\const{P})}{\const{P}}
\end{equation}
which, by \eqref{eq:am_capacity}, can be expressed as 
\begin{equation}
\dot{C}(0) = \sup_{\const{P}>0} \sup_{\set{D}} \frac{C(\const{P},\set{D})}{\const{P}}.
\end{equation}
Exchanging the order of the suprema and applying \eqref{eq:amosCUElowSNR} yields
\begin{IEEEeqnarray}{lCl}
  \dot{C}(0)  & = & \sup_{\set{D}} \dot{C}(0,\set{D}) \label{eq:amos_French}\\
  & = & \sup_{\xi\neq 0, \set{D}} \frac{D\bigl(P_{Y | X=\xi} \bigm\| P_{Y | X=0}\bigr)}{\xi^2} \label{eq:CUEKL}
\end{IEEEeqnarray}
where the last step follows from \cite[Th.~3]{verdu02}. Here $D(\cdot\|\cdot)$ denotes relative entropy
\begin{equation}
D(P \| Q) \triangleq \left\{\begin{array}{ll}\displaystyle \int \log\left(\frac{\d P}{\d Q}\right)\d P, \quad & \textnormal{if }P\ll Q\\ \infty, \quad &\textnormal{otherwise}\end{array}\right.
\end{equation}
(where $P\ll Q$ indicates that $P$ is absolutely continuous with
respect to $Q$), and $P_{Y|X=x}$ denotes the output distribution
corresponding to the input $x$. In our case, since the output of the quantizer
is binary,
\begin{IEEEeqnarray}{lCl}
  \IEEEeqnarraymulticol{3}{l}{D\bigl(P_{Y | X=\xi} \bigm\| P_{Y | X=0}\bigr)}\nonumber\\
  \quad & = & \Prob\bigl(\tilde{Y}\in \set{D}\bigm| X = \xi\bigr) \log \frac{\Prob\bigl(\tilde{Y}
    \in \set{D}\bigm| X = \xi\bigr)}{\Prob\bigl(\tilde{Y}
    \in \set{D}\bigm| X = 0\bigr)}\nonumber\\
    & & {} + \Prob\bigl(\tilde{Y}
    \notin \set{D}\bigm| X = \xi\bigr) \log \frac{\Prob\bigl(\tilde{Y}
    \notin \set{D}\bigm| X = \xi\bigr)}{\Prob\bigl(\tilde{Y}
    \notin \set{D}\bigm| X = 0\bigr)}. \IEEEeqnarraynumspace
\end{IEEEeqnarray}
It follows from \eqref{eq:am_capacity} and \eqref{eq:amosCUElowSNR2} that
\begin{IEEEeqnarray}{lCl}
\lim_{\const{P}\downarrow 0} \frac{C(\const{P})}{\const{P}} & = &
\lim_{\const{P}\downarrow 0} \sup_{\set{D}}
\frac{C(\const{P},\set{D})}{\const{P}} \nonumber \\
& \geq & \sup_{\set{D}} \dot{C}(0,\set{D})
\end{IEEEeqnarray}
which, together with \eqref{eq:CUElowSNR} and \eqref{eq:amos_French}, yields
\begin{equation}
\dot{C}(0) = \lim_{\const{P}\downarrow 0} \frac{C(\const{P})}{\const{P}}.
\end{equation}
Thus, the capacity per unit-energy is equal to the slope at zero of
the capacity-vs-power curve.

By the Data Processing Inequality \cite[Th.~2.8.1]{coverthomas91}, $C(\const{P},\set{D})$ is upper-bounded by the capacity of the unquantized channel \cite{shannon48}
\begin{equation}
\label{eq:capacityGaussian}
C(\const{P}, \set{D}) \leq \frac{1}{2} \log\biggl(1+\frac{\const{P}}{\sigma^2}\biggr).
\end{equation}
Consequently, by \eqref{eq:amosCUElowSNR2} and \eqref{eq:amos_French},
\begin{equation}
\label{eq:CUCUB}
  \dot{C}(0,\set{D}) \leq \frac{1}{2\sigma^2} \quad \textnormal{and} \quad \dot{C}(0) \leq \frac{1}{2\sigma^2}.
\end{equation}
%And by~\eqref{eq:amos_French} 
%%Combining \eqref{eq:capacityGaussian}
%%with \eqref{eq:CUElowSNR}, it follows that the same is true for the
%%capacity per unit-energy, i.e.,
%\begin{equation}
%\label{eq:CUCUB}
%\dot{C}(0) \leq \frac{1}{2\sigma^2}.
%\end{equation}

A ubiquitous quantizer is the \emph{symmetric threshold quantizer}, for which
$\set{D}=\{\tilde{y}\in\Reals\colon \tilde{y}\geq 0\}$. For this
quantizer the capacity $C_{\text{sym}}(\const{P})$ is given by
\cite[Th.~2]{singhdabeermadhow09_2}, \cite[Eq.~(3.4.18)]{viterbiomura79}
\begin{equation}
\label{eq:Csym}
C_{\text{sym}}(\const{P}) = \log 2 - H_b\Biggl(Q\Biggl(\sqrt{\frac{\const{P}}{\sigma^2}}\Biggr)\Biggr)
\end{equation}
where $H_b(\cdot)$ denotes the binary entropy function
\begin{equation}
\label{eq:amos_def_Hb}
H_b(p) \triangleq  -p\log p - (1-p)\log(1-p), \quad 0\leq p \leq 1
\end{equation}
(where we define $0\log 0\triangleq 0$) and $Q(\cdot)$ denotes the $Q$-function
\begin{equation}
Q(x) \triangleq \frac{1}{\sqrt{2\pi}}\int_x^{\infty} e^{-\frac{t^2}{2}}\d t, \quad x\in\Reals.
\end{equation}
The capacity $C_{\text{sym}}(\const{P})$ can be achieved by
transmitting $\sqrt{\const{P}}$ and $-\sqrt{\const{P}}$ equiprobably. 

From \eqref{eq:Csym}, the capacity per unit-energy
$\dot{C}_{\text{sym}}(0)$ for a symmetric threshold quantizer is \cite[Eq.~(3.4.20)]{viterbiomura79}
\begin{equation}
\label{eq:CUEsym}
\dot{C}_{\text{sym}}(0) = \lim_{\const{P}\downarrow 0}\frac{C_{\text{sym}}(\const{P})}{\const{P}} = \frac{1}{\pi\sigma^2}.
\end{equation}
This is a factor of $2/\pi$ smaller than the capacity per unit-energy $1/(2\sigma^2)$ of the Gaussian channel without output quantization. Thus, quantizing the channel output using a symmetric threshold quantizer causes a loss of roughly 2dB.

It is tempting to attribute this loss to the fact that the quantizer
forces the decoder to perform only hard-decision decoding. However, as we shall see, the loss of 2dB is not a consequence of the
hard-decision decoder but of the suboptimal quantizer. In fact, with
an asymmetric threshold quantizer the loss vanishes (Theorem~\ref{thm:No2dB}).

\section{Main Results}
\label{sec:results}
Our main results are presented in the following two
subsections. Section~\ref{sub:capacity} presents the results on
channel capacity. We show that the capacity-achieving
input distribution is discrete with at most three mass points and that
threshold quantizers achieve capacity
(Theorem~\ref{thm:amos}). Furthermore, we provide an expression for
the capacity when the average-power constraint \eqref{eq:power} is
replaced by a peak-power constraint (Proposition~\ref{note:PP}).

Section~\ref{sub:CUC} presents the results on capacity per
unit-energy. We show that asymmetric threshold quantizers
and asymmetric signal constellations can achieve the capacity per unit-energy of
the Gaussian channel (Theorem~\ref{thm:No2dB}), thus
demonstrating that quantizing the output of the Gaussian channel with
a one-bit quantizer does not cause an asymptotic power loss. We
further demonstrate that, in order to achieve this capacity per unit-energy, \emph{flash-signaling} input distributions \cite[Def.~2]{verdu02}
are required (Theorem~\ref{thm:flash}). Finally, we show that if the
average-power constraint \eqref{eq:power} is replaced by a peak-power
constraint, then quantizing the output of the Gaussian channel with a
one-bit quantizer necessarily causes a 2dB power loss (Proposition~\ref{note:PPCUC}).

\subsection{Channel Capacity}
\label{sub:capacity}

\begin{theorem}[Optimal Input Distribution and Quantizer]
\label{thm:amos}
\mbox{}
\begin{enumerate}
\item For any given maximal-allowed average-power $\const{P}$ and any
  Borel set $\set{D}$, the supremum in~\eqref{eq:def_C_P_D}
  defining~$C(\const{P},\set{D})$ is achieved by some input
  distribution that is concentrated on at most three points.
\item For any given maximal-allowed average-power $\const{P}$ the
  supremum in~\eqref{eq:capacity} is achieved by some threshold
  quantizer
  \begin{equation*}
    \set{D}^{\star} = \{\tilde{y} \in \Reals: \tilde{y} \geq \Upsilon\}
  \end{equation*}
  (where $\Upsilon \geq 0$ depends on $\const{P}$ and $\sigma^2$) and by a
  zero-mean, variance-$\const{P}$, input distribution that is
  concentrated on at most three points.
\end{enumerate}
\end{theorem}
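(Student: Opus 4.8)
The plan is to treat the two parts in turn: the first is a support–lemma argument, the second combines it with a convexity property of mutual information.

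\emph{Part 1.} Write $\phi_\sigma$ for the $\mathcal N(0,\sigma^2)$ density and set $\beta(x)\triangleq\Prob\bigl(\tilde Y\in\set D\mid X=x\bigr)=\int_{\set D}\phi_\sigma(y-x)\,\ud y$; conditionally on $X=x$ the output $Y$ is Bernoulli with parameter $\beta(x)$, so
\begin{equation*}
  I(X;Y)=H_b\bigl(\E{\beta(X)}\bigr)-\E{H_b(\beta(X))} .
\end{equation*}
Hence $I(X;Y)$ and the constraint $\E{X^2}\le\const P$ depend on $P_X$ only through the $\Reals^3$-valued mean of the continuous map $x\mapsto\bigl(\beta(x),H_b(\beta(x)),x^2\bigr)$. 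The range of this map is connected, so by the Fenchel--Eggleston strengthening of Carathéodory's theorem every point of its convex hull is a convex combination of at most three of its points; consequently any value of that mean attainable under $\E{X^2}\le\const P$ is attained by some $P_X$ on at most three points, which yields the same $I(X;Y)$ under the same constraint. The supremum in \eqref{eq:def_C_P_D} is attained because $\{P_X:\E{X^2}\le\const P\}$ is tight and (by lower semicontinuity of $\E{X^2}$) weakly closed, hence weakly compact, while $P_X\mapsto I(X;Y)$ is weakly continuous, $\beta$ being bounded and continuous; by the above the maximizer may be taken on at most three points.

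\emph{Part 2, reducing the quantizer.} Exchanging the suprema in \eqref{eq:capacity}, $C(\const P)=\sup_{\set D}C(\const P,\set D)$. Fix a Borel set $\set D$ with $C(\const P,\set D)>0$ (the others do not affect the supremum, which exceeds $C_{\text{sym}}(\const P)>0$), and let $P_X$ attain $C(\const P,\set D)$; by Part 1 take $P_X$ on two or three points (one point gives $I=0$). We treat the three-point case with atoms $x_1<x_2<x_3$; the two-point case is analogous and simpler, the exponential polynomial below then having two terms and at most one zero, so that $\set D'$ is directly a half-line. With $P_X$ fixed and $Y$ binary, the channel is the vector $\bigl(\beta(x_1),\beta(x_2),\beta(x_3)\bigr)\in[0,1]^3$, and $I(X;Y)$ is a convex function of it \cite{coverthomas91}. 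As $\set D$ ranges over the Borel sets this vector ranges over $\{(\mu_1(\set D),\mu_2(\set D),\mu_3(\set D)):\set D\text{ Borel}\}$, where $\mu_i(\set D)\triangleq\int_{\set D}\phi_\sigma(y-x_i)\,\ud y$, a convex compact set by Lyapunov's convexity theorem, the $\mu_i$ being non-atomic. A convex function on a compact convex set attains its maximum at an extreme point, and every extreme point of this set equals $\bigl(\mu_i(\set D')\bigr)_i$ for $\set D'=\{y:\sum_i c_i\phi_\sigma(y-x_i)\ge0\}$ with some $c\ne0$: such an extreme point lies on a supporting hyperplane with normal $c$, hence maximizes $\set D\mapsto\int_{\set D}\sum_i c_i\phi_\sigma(y-x_i)\,\ud y$, whose maximizer is $\set D'$ and is unique modulo Lebesgue-null sets because $\{y:\sum_i c_i\phi_\sigma(y-x_i)=0\}$ is finite. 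Therefore $C(\const P,\set D)\le I\bigl(X;\mathbbm{1}_{\set D'}(\tilde Y)\bigr)\le C(\const P,\set D')$ for such a $\set D'$, so $C(\const P)$ equals the supremum of $C(\const P,\set D')$ over quantizers of this form.

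\emph{Part 2, extracting a threshold.} Since $\sum_i c_i\phi_\sigma(y-x_i)=\phi_\sigma(y)\sum_i\bigl(c_ie^{-x_i^2/(2\sigma^2)}\bigr)e^{x_i y/\sigma^2}$, its sign is that of an exponential polynomial in $y$ with three distinct exponents, which has at most two zeros; so $\set D'$ is, up to a null set, empty, all of $\Reals$, a half-line, a bounded interval, or the complement of a bounded interval. The first two give $I=0$ and drop out; complementing a set leaves $I(X;Y)$ unchanged, so the complement-of-interval case reduces to the interval case; and the reflection $y\mapsto-y$, $X\mapsto-X$ preserves $I(X;Y)$ and $\E{X^2}$, so a half-line may be taken to be $[\Upsilon,\infty)$. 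Granting that the bounded-interval case is redundant (see below), $C(\const P)=\sup_{\Upsilon\in\Reals}C(\const P,[\Upsilon,\infty))=\sup_{\Upsilon\ge0}C(\const P,[\Upsilon,\infty))$, attained at some finite $\Upsilon^\star\ge0$ because $C(\const P,[\Upsilon,\infty))$ is continuous in $\Upsilon$, vanishes as $\Upsilon\to\pm\infty$, and equals $C_{\text{sym}}(\const P)>0$ at $\Upsilon=0$. Taking an optimal (at most three-point) input for $[\Upsilon^\star,\infty)$, shifting it and the quantizer by a common constant to center the input — which preserves $I(X;Y)$, does not increase $\E{X^2}$, and keeps the quantizer a half-line — and reflecting if necessary so the threshold sits at a nonnegative level, one obtains a threshold $\set D^\star=[\Upsilon,\infty)$, $\Upsilon\ge0$, and a zero-mean, at-most-three-point optimal input; that its variance can be taken to equal $\const P$ follows since $C(\const P)$ is strictly increasing in $\const P$.

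\emph{The main obstacle.} The heart of the matter is showing that bounded-interval quantizers are redundant. The route is to exploit the total-positivity / monotone-likelihood-ratio structure of the Gaussian channel. For $\set D=[a,b]$ the function $x\mapsto\beta(x)=\Prob(x+Z\in[a,b])$ is strictly unimodal — indeed log-concave, peaked at $(a+b)/2$ — so this quantizer cannot resolve on which side of $[a,b]$ the output fell, whereas for a threshold $\beta(\cdot)$ is monotone. Combining this with the Kuhn--Tucker conditions for optimality of \emph{both} the quantizer and the input — the former forcing $\beta(x_1),\beta(x_3)<\E{\beta(X)}<\beta(x_2)$ and fixing the signs of $c_1,c_2,c_3$, the latter requiring $x\mapsto D\bigl(P_{Y|X=x}\|P_Y\bigr)-\lambda x^2$ (for some $\lambda\ge0$) to attain its maximum exactly on $\{x_1,x_2,x_3\}$ — one aims to show that any configuration with a bounded-interval quantizer is strictly dominated by a threshold quantizer together with a suitable asymmetric input of the same power. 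I expect the sign-change bookkeeping for the exponential polynomial that defines the optimal quantizer, interlocked with these optimality conditions, to be the most delicate part of the argument.
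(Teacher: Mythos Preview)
Your Part~1 is correct and in fact cleaner than the paper's: the paper introduces an auxiliary peak-power constraint, applies Dubins's Theorem \`a la Witsenhausen, and then argues that the three mass points stay finite as the peak constraint is relaxed; your direct Fenchel--Eggleston reduction together with weak compactness of $\{P_X:\E{X^2}\le\const P\}$ and weak continuity of $I(X;Y)$ achieves the same end without that detour. Your reduction of the quantizer to a half-line or a finite interval via extreme points and the zero-count of the exponential polynomial is also essentially the paper's argument.

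The genuine gap is exactly the step you flag as ``the main obstacle'': you have not ruled out the bounded-interval quantizer, and the KKT/total-positivity route you sketch is neither completed nor obviously tractable. The paper's way around this is entirely different and much more concrete---a perturbation, not an optimality-condition argument. Suppose the optimum is $\set D=[\Upsilon_1,\Upsilon_2]$ with midpoint $\theta\ge 0$ (by the reflection symmetry you already used). Since $x\mapsto W(\Upsilon_1,\Upsilon_2\mid x)$ is symmetric about~$\theta$ and $(\theta-\delta)^2\le(\theta+\delta)^2$, any mass point to the right of $\theta$ can be reflected to the left without changing the channel law or violating the power constraint; hence one may assume $\xi_1<\xi_2<\xi_3\le\theta$. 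Now perturb to $\tilde{\set D}=[\Upsilon_1,\Upsilon_2]\cup[\Gamma,\infty)$ and Taylor-expand $I\bigl(\vect p,\vect W(\tilde{\set D}\mid\bfxi)\bigr)-I\bigl(\vect p,\vect W(\Upsilon_1,\Upsilon_2\mid\bfxi)\bigr)$ in the small quantities $Q((\Gamma-\xi_\ell)/\sigma)$. For $\Gamma\to\infty$ only the term with $\ell=3$ survives at first order, and its coefficient is
\[
p_3\Biggl(\log\frac{1-P(\Upsilon_1,\Upsilon_2)}{1-W(\Upsilon_1,\Upsilon_2\mid\xi_3)}+\log\frac{W(\Upsilon_1,\Upsilon_2\mid\xi_3)}{P(\Upsilon_1,\Upsilon_2)}\Biggr),
\]
which is strictly positive because $\xi\mapsto W(\Upsilon_1,\Upsilon_2\mid\xi)$ is strictly increasing on $(-\infty,\theta)$, so $W(\Upsilon_1,\Upsilon_2\mid\xi_3)>P(\Upsilon_1,\Upsilon_2)$. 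This contradicts optimality of the interval and closes the gap; your proposed KKT bookkeeping is not needed.
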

\begin{proof}
See Section~\ref{sec:dobi_thmamos}.
\end{proof}
The result that the capacity-achieving input distribution is concentrated on at most three mass points is consistent with Theorem~1 in \cite{singhdabeermadhow09_2}, which shows that if the quantization regions of a $K$-bit quantizer partition the real line into $2^{K}$ intervals, then the capacity-achieving input distribution is concentrated on at most $2^{K}+1$ points.

\begin{proposition}
\label{note:PP}
If the average-power constraint \eqref{eq:power} is replaced by the
peak-power constraint
\begin{equation}
\label{eq:Ppower}
X_k^2 \leq \const{P}, \quad k\in\Integers, \quad \textnormal{with
  probability one}
\end{equation}
then the capacity of the channel presented in Section~\ref{sec:channel} is given by
\begin{IEEEeqnarray}{lCl}
C_{\textnormal{PP}}(\const{P}) & = & \max_{\Upsilon\geq 0} \Biggl\{ \log\Bigl(1+e^{-\Theta(\const{P},\Upsilon)}\Bigr) \nonumber\\
\IEEEeqnarraymulticol{3}{r}{\quad {} + Q\Biggl(\frac{\sqrt{\const{P}}+\Upsilon}{\sigma}\Biggr) \Theta(\const{P},\Upsilon) - H_b\Biggl(Q\Biggl(\frac{\sqrt{\const{P}}+\Upsilon}{\sigma}\Biggr)\Biggr) \Biggr\} \IEEEeqnarraynumspace} \label{eq:propPP}
\end{IEEEeqnarray}
where
\begin{equation}
\Theta(\const{P},\Upsilon) \triangleq \frac{H_b\Bigl(Q\Bigl(\frac{\sqrt{\const{P}}-\Upsilon}{\sigma}\Bigr)\Bigr)-H_b\Bigl(Q\Bigl(\frac{\sqrt{\const{P}}+\Upsilon}{\sigma}\Bigr)\Bigr)}{1-Q\Bigl(\frac{\sqrt{\const{P}}-\Upsilon}{\sigma}\Bigr)-Q\Bigl(\frac{\sqrt{\const{P}}+\Upsilon}{\sigma}\Bigr)}.
\end{equation}
The capacity can be achieved by a binary input distribution with mass points at $\sqrt{\const{P}}$ and $-\sqrt{\const{P}}$ and by some threshold quantizer with threshold $\Upsilon\geq 0$.
\end{proposition}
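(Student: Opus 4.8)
The plan is to sandwich $C_{\textnormal{PP}}(\const{P})$ between matching bounds. The lower bound is immediate: for each $\Upsilon\geq 0$, the input taking the values $\pm\sqrt{\const{P}}$ with suitable probabilities together with the threshold quantizer $\set{D}=\{\tilde{y}\geq\Upsilon\}$ obeys the peak-power constraint~\eqref{eq:Ppower}, and a routine optimization over the prior probability gives $I(X;Y)$ equal to the expression inside the braces in~\eqref{eq:propPP}; hence $C_{\textnormal{PP}}(\const{P})\geq\max_{\Upsilon\geq0}\{\cdots\}$. Everything else is the reverse inequality.

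The first step is to reduce to binary inputs. For a fixed quantizer $\set{D}$, put $w(x)=\Prob(\tilde{Y}\in\set{D}\mid X=x)$, so that $I(X;Y)=H_b\bigl(\E[w(X)]\bigr)-\E[H_b(w(X))]$ depends on the input only through the law of $w(X)$. Since $w$ is continuous, its image over the compact interval $[-\sqrt{\const{P}},\sqrt{\const{P}}]$ is itself a compact interval, and every probability measure on that interval arises; by concavity of $H_b$, for a prescribed value of $\E[w(X)]$ the term $\E[H_b(w(X))]$ is smallest for the two-point law on the endpoints, so the supremum of $I(X;Y)$ over admissible inputs is attained by a binary input whose mass points are a minimizer and a maximizer of $w$. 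Thus $C_{\textnormal{PP}}(\const{P})=\sup_{\set{D}}\sup_{\textnormal{binary }P_X}I(X;Y)$; interchanging the two suprema, the second step is to show that, for a fixed binary input on $\{x_0,x_1\}$ with $x_0<x_1$ — after relabelling $Y\mapsto 1-Y$, equivalently $\set{D}\mapsto\set{D}^{c}$, so that $w(x_1)>w(x_0)$ — the mutual information is maximized by a threshold quantizer. Because the two conditional Gaussian densities have equal variance their likelihood ratio is monotone in $\tilde{y}$, so an exchange argument applies: moving a small amount of mass from a point of $\set{D}$ to a strictly larger point of $\set{D}^{c}$ while keeping $\Prob(Y=1)$ fixed strictly increases $I(X;Y)$ unless $\set{D}$ already agrees almost everywhere with some $\{\tilde{y}\geq\Upsilon'\}$. (This is the same mechanism as in the proof of Theorem~\ref{thm:amos}.)

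It then remains to optimize over threshold quantizers and binary inputs. For $\set{D}=\{\tilde{y}\geq\Upsilon'\}$ one has $w(x)=Q\bigl((\Upsilon'-x)/\sigma\bigr)$, strictly increasing in $x$; and whenever $w(x_0)\leq w(x_0')<w(x_1')\leq w(x_1)$ the binary-input channel with crossover probabilities $\bigl(w(x_0'),w(x_1')\bigr)$ is a stochastically degraded version of the one with $\bigl(w(x_0),w(x_1)\bigr)$, so widening the input pair only increases $I(X;Y)$ and the optimal binary input is antipodal at $\pm\sqrt{\const{P}}$. Flipping the signs of $X$ and $\tilde{Y}$ sends $\Upsilon'$ to $-\Upsilon'$ and swaps the two priors, so one may take $\Upsilon'=\Upsilon\geq 0$. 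For the antipodal input with $\Prob(X=\sqrt{\const{P}})=\alpha$, $I(X;Y)$ is concave in $\alpha$; equating the derivative to zero forces $\Prob(Y=1)$ to be $1/\bigl(1+e^{\Theta(\const{P},\Upsilon)}\bigr)$, where $\Theta(\const{P},\Upsilon)$ is precisely the slope of the chord of $H_b$ joining $\Prob(Y=1\mid X=-\sqrt{\const{P}})$ and $\Prob(Y=1\mid X=\sqrt{\const{P}})$, and substituting back collapses the optimum to the bracketed expression in~\eqref{eq:propPP}. That expression is continuous in $\Upsilon$ and vanishes as $\Upsilon\to\infty$, so the supremum over $\Upsilon\geq 0$ is attained, and the two bounds meet.

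The crux is the second step: that, for a fixed binary input, the $I(X;Y)$-maximizing one-bit quantizer is a likelihood-ratio threshold — and hence, for equal-variance Gaussians, an ordinary threshold on $\tilde{y}$. Two points need care. First, the existence of an optimal quantizer; one clean route is Lyapunov's convexity theorem, which makes the set of achievable pairs $\bigl(w(x_0),w(x_1)\bigr)$ convex and compact, so that the continuous function $I(X;Y)$ attains its maximum there, necessarily at an extreme point — and the extreme points of that set are exactly the likelihood-ratio thresholds (together with $\emptyset$ and $\Reals$, which give $I=0$). Second, turning the heuristic ``a misordered pair of points can be improved'' into a rigorous first-order perturbation estimate; the monotone likelihood ratio of equal-variance Gaussians, together with the monotonicity of $w$ for a threshold quantizer, is exactly what both reduces matters to ordinary thresholds on $\tilde{y}$ and forces the antipodal constellation.
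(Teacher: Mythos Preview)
Your proof is correct and follows the same three-step skeleton as the paper's: reduce to binary inputs, then to threshold quantizers, then to the antipodal pair $\pm\sqrt{\const{P}}$, and finally evaluate the binary-asymmetric-channel capacity and use the $\Upsilon\leftrightarrow-\Upsilon$ symmetry to restrict to $\Upsilon\geq 0$. The differences are only in the tools invoked at each step. For the binary-input reduction the paper cites Dubins's Theorem via \cite{witsenhausen80}, whereas you argue directly from the concavity of $H_b$ that, at a prescribed output-probability mean, the two-point law on the endpoints of the range of $w$ minimizes the conditional entropy---more elementary in this particular setting. For the threshold-quantizer reduction the paper identifies the extreme points of the achievable-channel-law set by computing its support function and invoking Krein--Milman (the machinery of Section~\ref{sec:KreinMilman}); your Lyapunov-convexity route reaches the same extreme points with less overhead and is essentially the Neyman--Pearson argument the paper itself uses later in Section~\ref{sub:proof_prop8}. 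For the antipodal step the paper writes each intermediate channel law as a convex combination of the degenerate and the extremal one and invokes convexity of mutual information in the channel; your stochastic-degradation phrasing is an equivalent formulation of the same monotonicity. The paper's machinery is heavier because it is reused for the three-mass-point result of Theorem~\ref{thm:amos}; for Proposition~\ref{note:PP} in isolation your arguments are more self-contained.
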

\begin{proof}
See Section~\ref{sub:proof_note2}.
\end{proof}
Numerical evaluation of \eqref{eq:propPP} suggests that, for every maximal-allowed peak-power $\const{P}$, the maximum is attained for \mbox{$\Upsilon=0$}. In this case, $C_{\textnormal{PP}}(\const{P})$ would specialize to the capacity of the average-power-limited Gaussian channel with symmetric output quantization \eqref{eq:Csym}.

\subsection{Capacity Per Unit-Energy}
\label{sub:CUC}

\begin{theorem}[$\dot{C}(0)=1/(2\sigma^2)$]
\label{thm:No2dB}
The capacity per unit-energy of the channel presented in Section~\ref{sec:channel} is
\begin{equation}
\label{eq:thmCUC}
\dot{C}(0) = \frac{1}{2\sigma^2}.
\end{equation}
\end{theorem}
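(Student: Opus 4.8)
My plan is to prove the two matching bounds separately. The upper bound $\dot{C}(0)\le 1/(2\sigma^{2})$ is already in hand: it is the Data-Processing consequence recorded in \eqref{eq:CUCUB}. The whole task thus reduces to the reverse inequality $\dot{C}(0)\ge 1/(2\sigma^{2})$, and for this I would work from the single-letter characterization \eqref{eq:CUEKL},
\[
 \dot{C}(0) \;=\; \sup_{\xi\neq 0,\,\set{D}} \frac{D\bigl(P_{Y | X=\xi} \bigm\| P_{Y | X=0}\bigr)}{\xi^{2}},
\]
so that it suffices to exhibit a one-parameter family of (threshold quantizer, input level) pairs along which this ratio converges to $1/(2\sigma^{2})$.

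The family I would use: fix a large threshold $\Upsilon>0$, take the threshold quantizer $\set{D}_{\Upsilon}=\{\tilde{y}\in\Reals\colon \tilde{y}\ge\Upsilon\}$ of \eqref{eq:dobi_thresholdD}, and let the nonzero input level be $\xi=\Upsilon+a$ with $a>0$, so that $X=0$ plays the role of an ``off'' symbol and $X=\xi$ of an ``on'' symbol. Then $P_{Y|X=0}$ is Bernoulli with parameter $p=Q(\Upsilon/\sigma)$, and $P_{Y|X=\xi}$ is Bernoulli with parameter $q=Q\bigl((\Upsilon-\xi)/\sigma\bigr)=Q(-a/\sigma)$. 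The point is that ``$Y=1$'' is an exponentially rare event under the off-symbol ($p\downarrow 0$ as $\Upsilon\to\infty$) but has probability approaching one under the on-symbol. Writing the binary relative entropy as $D\bigl(P_{Y|X=\xi}\bigm\|P_{Y|X=0}\bigr)=q\log\frac{1}{p}-H_b(q)-(1-q)\log(1-p)$, using $-(1-q)\log(1-p)\ge 0$, and invoking the Gaussian tail bound $Q(x)\le \frac{1}{x\sqrt{2\pi}}e^{-x^{2}/2}$ (valid for $x>0$), which gives $\log(1/p)\ge \Upsilon^{2}/(2\sigma^{2})$ once $\Upsilon$ is large enough, I obtain
\[
 \frac{D\bigl(P_{Y|X=\xi}\bigm\|P_{Y|X=0}\bigr)}{\xi^{2}} \;\ge\; \frac{q\,\Upsilon^{2}/(2\sigma^{2}) - H_b(q)}{(\Upsilon+a)^{2}}.
\]

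To conclude I would let $\Upsilon\to\infty$ while choosing $a=a(\Upsilon)\to\infty$ slowly enough that $a/\Upsilon\to 0$ — for instance $a=\sqrt{\Upsilon}$. Along this sequence $q=Q(-a/\sigma)\to 1$, so $H_b(q)\to 0$; also $(\Upsilon+a)^{2}/\Upsilon^{2}\to 1$; hence the right-hand side above tends to $1/(2\sigma^{2})$, which yields $\dot{C}(0)\ge 1/(2\sigma^{2})$ and, with the upper bound, proves \eqref{eq:thmCUC}. (Operationally this on/off family is just binary pulse-position modulation, so the PPM achievability of Section~\ref{sec:ppm} provides an alternative route; and the underlying input law — mass $\epsilon$ at $\xi$, mass $1-\epsilon$ at $0$ with $\epsilon\downarrow 0$ — is of the flash-signaling type, in agreement with Theorem~\ref{thm:flash}.) The one place that needs care is the bookkeeping of the lower-order terms: one must verify that $H_b(q)$ and the gap $\log(1/p)-\Upsilon^{2}/(2\sigma^{2})$ are genuinely $o(\Upsilon^{2})$ along the chosen sequence and that the limits $q\to 1$ and $(\Upsilon+a)^{2}/\Upsilon^{2}\to 1$ are realized simultaneously — immediate once $a$ grows like $\sqrt{\Upsilon}$, but worth doing cleanly rather than heuristically.
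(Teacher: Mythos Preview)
Your proposal is correct and is essentially the paper's own argument: both use the upper bound \eqref{eq:CUCUB}, then lower-bound \eqref{eq:CUEKL} via threshold quantizers and on/off signaling, expanding the binary relative entropy and controlling $\log\bigl(1/Q(\Upsilon/\sigma)\bigr)$ by the Gaussian tail bound \eqref{eq:QUB}. The only cosmetic difference is that the paper parametrizes by $\mu=\xi-\Upsilon$ (your $a$) and takes the iterated limit $\xi\to\infty$ with $\mu$ fixed followed by $\mu\to\infty$, whereas you take a single diagonal limit $a=\sqrt{\Upsilon}\to\infty$; both routes reach $1/(2\sigma^{2})$ for the same reason.
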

\begin{proof}
See Section~\ref{sub:proof_thm3}.
\end{proof}
Thus, if we allow for asymmetric threshold quantizers and asymmetric signal
constellations, then quantizing the output of the
average-power-limited Gaussian channel with an optimal one-bit quantizer does
not cause a loss with respect to the capacity per unit-energy.

Considering the symmetry of the probability density function (PDF) of the Gaussian noise, it is perhaps surprising that an asymmetric quantizer yields a larger rate per unit-energy than a symmetric one. However, the input distribution achieving \eqref{eq:thmCUC} is asymmetric (see below). Hence, the PDF of the unquantized channel output is asymmetric, so it seems plausible that the capacity per unit-energy is achieved by some asymmetric quantizer. In fact, even if the PDF of the unquantized channel output were symmetric, this would not necessarily imply that the optimal quantizer is symmetric: There are examples in the source-coding literature of symmetric PDFs for which the optimal one-bit quantizer with respect to the mean squared error is asymmetric, see, e.g., \cite[Ex.~5.2, p.~64--65]{grafluschgy00}.

Theorem~\ref{thm:No2dB} is proved by analyzing \eqref{eq:CUEKL} with a judicious choice of $\set{D}$ and $\xi$. In Section~\ref{sec:ppm} we provide an alternative proof by presenting a PPM scheme that achieves the capacity per unit-energy \eqref{eq:thmCUC}. For this scheme, the error probability can be analyzed directly using the Union Bound and an upper bound on the $Q$-function: there is no need to resort to conventional methods used to prove coding theorems such as the method of types, information-spectrum methods, or random coding exponents.

The capacity per unit-energy \eqref{eq:thmCUC} can be achieved by binary on-off keying, i.e., by binary inputs of probability mass function
\begin{equation}
\label{eq:on-off}
P(X=\xi) = 1- P(X=0) = \frac{\const{P}}{\xi^2}, \quad \xi^2\geq \const{P}
\end{equation}
where the nonzero mass point $\xi$ tends to infinity as $\const{P}$ tends to zero. The distribution of such inputs belongs to the class of \emph{flash-signaling} input distributions, which was defined by Verd\'u \cite[Def.~2]{verdu02} as follows.

\begin{definition}[Flash Signaling]
\label{def:flashsig}
A family of distributions of~$X$ parametrized by $\const{P}$ is said
to be \emph{flash signaling} if it satisfies $\E{X^2} \leq \const{P}$
and for every positive $\nu$
\begin{equation}
\lim_{\const{P}\downarrow 0} \frac{\E{X^2 \I{X^2>\nu}}}{\const{P}} = 1.
\end{equation}
Here $\I{\textnormal{statement}}$ denotes the indicator function: it
  is equal to one if the statement between the curly brackets is true
  and is equal to zero otherwise.
\end{definition}

Flash signaling is described in \cite{verdu02} as ``the mixture of a
probability distribution that asymptotically concentrates its mass at
$0$ and a probability distribution that migrates to infinity; the
weight of the latter vanishes sufficiently fast to satisfy the
vanishing power constraint." The next theorem shows that flash
signaling is necessary to achieve \eqref{eq:thmCUC}.
\begin{theorem}[Flash Signaling Is Required to Achieve $\dot{C}(0)$]
\label{thm:flash}
Every family of distributions of $X$ parametrized by $\const{P}$ that satisfies
$\E{X^2}\leq\const{P}$ and
\begin{equation}
\lim_{\const{P}\downarrow 0} \frac{I(X;Y)}{\const{P}} =
\frac{1}{2\sigma^2} 
\label{eq:1st_ord}
\end{equation}
must be flash signaling.
\end{theorem}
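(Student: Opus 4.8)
The plan is to show that any input family achieving the full first-order slope $1/(2\sigma^2)$ must, after passing through the best possible one-bit quantizer for that input, still extract all the mutual information the unquantized channel offers; this forces the input to concentrate in a flash-signaling manner. The natural route is by contradiction: suppose the family $\{P_X\}$ satisfies $\E{X^2}\le\const{P}$ and \eqref{eq:1st_ord} but is \emph{not} flash signaling. By Definition~\ref{def:flashsig}, failure of flash signaling means there is a positive $\nu$ and a sequence $\const{P}\downarrow 0$ along which $\E{X^2\I{X^2>\nu}}/\const{P}$ stays bounded away from $1$, i.e. $\E{X^2\I{X^2\le\nu}}/\const{P}\ge\delta$ for some $\delta>0$ along that sequence. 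So a non-vanishing fraction of the input energy sits on the ``bounded'' part $\{|X|\le\sqrt\nu\}$.

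The first step is to split the mutual information according to this bounded/unbounded decomposition. Write $X = X\I{X^2\le\nu} + X\I{X^2>\nu}$ and introduce the indicator $B = \I{X^2>\nu}$. Since conditioning on $B$ and then on the two pieces can only help, and using $I(X;Y)\le I(X,B;Y) = H(B) - H(B|Y) + I(X;Y|B) \le H(B) + I(X;Y|B)$, one bounds $I(X;Y)$ by the sum of a term from the ``flash'' part and a term from the ``bounded'' part, plus the binary-entropy cost $H(B)\le H_b(\Prob(B=1))$, which is $O\bigl(\Prob(B=1)\log(1/\Prob(B=1))\bigr)$. Because $\E{X^2\I{X^2>\nu}}\le\const{P}$ forces $\Prob(B=1)\le\const{P}/\nu$, this entropy cost is $o(\const{P})$ and contributes nothing to the first-order slope. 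The flash part carries at most energy $\E{X^2\I{X^2>\nu}}=(1-\delta')\const{P}$ with $\delta'\ge\delta$, so by \eqref{eq:CUCUB} applied conditionally its contribution to $I(X;Y)$ is at most $\frac{1}{2\sigma^2}(1-\delta')\const{P} + o(\const{P})$.

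The crux — and the main obstacle — is to show that the ``bounded'' part $\{|X|\le\sqrt\nu\}$ cannot contribute its full share $\frac{1}{2\sigma^2}\E{X^2\I{X^2\le\nu}}$ to the slope; in fact one needs its contribution to be \emph{strictly} sub-proportional, namely $\le \alpha\,\E{X^2\I{X^2\le\nu}}$ for some constant $\alpha<\frac{1}{2\sigma^2}$ depending only on $\nu$ and $\sigma^2$. Summing the two contributions then gives $\limsup_{\const P\downarrow 0} I(X;Y)/\const{P} \le \frac{1}{2\sigma^2}(1-\delta) + \alpha\delta < \frac{1}{2\sigma^2}$, contradicting \eqref{eq:1st_ord}. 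To prove this sub-proportionality I would bound the bounded-part contribution by a supremum of the conditional capacity per unit-energy over \emph{peak-power-limited} inputs confined to $[-\sqrt\nu,\sqrt\nu]$ and over all one-bit quantizers $\set{D}$: by \eqref{eq:CUEKL} this is $\sup_{0<|\xi|\le\sqrt\nu,\ \set{D}} D\bigl(P_{Y|X=\xi}\|P_{Y|X=0}\bigr)/\xi^2$. The key analytic fact is that this supremum is strictly below $1/(2\sigma^2)$: writing $q(x)=\Prob(\tilde Y\in\set{D}\mid X=x)$, which for any Borel $\set{D}$ is a smooth function of $x$ with uniformly bounded derivatives on any compact set, a second-order Taylor expansion of $D\bigl(P_{Y|X=\xi}\|P_{Y|X=0}\bigr)$ in $\xi$ gives leading term $\tfrac12 \xi^2 q'(0)^2/\bigl(q(0)(1-q(0))\bigr)$, and $q'(0)^2/\bigl(q(0)(1-q(0))\bigr)$ is the Fisher-information-type quantity that is maximized over half-lines at $1/\sigma^2$ only in the limit $\Upsilon\to\pm\infty$ (flash regime); for $\xi$ bounded away from infinity the ratio is bounded away from $1/\sigma^2$, uniformly, and one must also control the higher-order terms so the bound holds not just infinitesimally but for all $|\xi|\le\sqrt\nu$. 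Establishing this uniform strict gap — essentially that a one-bit quantizer can only recover the full Fisher information in the flash limit — is where the real work lies; the power-constraint bookkeeping around it is routine.
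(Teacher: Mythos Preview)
Your overall strategy matches the paper's: both reduce the theorem to the key analytic fact that $\sup_{0<|\xi|\le\sqrt\nu,\,\set{D}} D(P_{Y|X=\xi}\|P_{Y|X=0})/\xi^2<1/(2\sigma^2)$ for every $\nu>0$, and then combine this with an energy split into $\{X^2\le\nu\}$ and $\{X^2>\nu\}$. (The paper's decomposition is slightly cleaner --- it applies Verd\'u's bound $I(X;Y)\le\int D(P_{Y|X=x}\|P_{Y|X=0})\,\mathrm{d}P_X(x)$ directly and splits the integral, avoiding your $H(B)$ bookkeeping --- but yours also works.)

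The genuine gap is in your proof of the strict inequality. Your Taylor/Fisher-information argument controls the ratio only for $\xi$ near zero; as you admit, bounding the higher-order terms uniformly over all $|\xi|\le\sqrt\nu$ and over all quantizers is ``where the real work lies,'' and you have not done it. Moreover, your description of the Fisher quantity is incorrect: for threshold quantizers $q'(0)^2/\bigl(q(0)(1-q(0))\bigr)$ is maximized at $\Upsilon=0$ with value $2/(\pi\sigma^2)$ and \emph{tends to zero} as $|\Upsilon|\to\infty$, not to $1/\sigma^2$. (That $\dot C(0)=1/(2\sigma^2)$ is recovered through the large-$\xi$ behaviour of the relative entropy with $\Upsilon$ growing alongside $\xi$, not through Fisher information, which is a $\xi\to 0$ quantity.) The paper sidesteps the uniformity problem entirely: after reducing to threshold quantizers with $\Upsilon\ge 0$, it introduces $U=\tilde Y\,\I{\tilde Y\ge 0}$ and observes that, by data processing, $D(P_{Y|X=\xi}\|P_{Y|X=0})\le D(P_{U|X=\xi}\|P_{U|X=0})\triangleq\Psi(\xi)$ for every such quantizer. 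Since $\Psi$ does not depend on $\Upsilon$, the supremum over quantizers disappears; the Log-Sum Inequality then gives $\Psi(\xi)<\xi^2/(2\sigma^2)$ strictly for every $\xi\ne 0$, and continuity of $\xi\mapsto\Psi(\xi)/\xi^2$ together with the explicit limit $\lim_{\xi\to 0}\Psi(\xi)/\xi^2=\frac{1}{2\sigma^2}\bigl(\tfrac12+\tfrac1\pi\bigr)$ closes the compactness argument.
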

\begin{proof}
See Section~\ref{sub:proof_thm4}.
\end{proof}

It is easy to show that for flash-signaling input distributions, threshold quantizers with a bounded threshold give rise to zero rate per unit-energy. We thus have the following corollary.
\begin{corollary}[The Thresholds Must Be Unbounded]
\label{cor:unboundthres}
If \eqref{eq:1st_ord} holds for some family of threshold quantizers (parametrized by the average power), then the
thresholds must be unbounded in the average power.
\end{corollary}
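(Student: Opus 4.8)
The plan is to prove the contrapositive. Suppose an input family together with threshold quantizers $\set{D}_{\const{P}} = \{\tilde{y}\in\Reals\colon\tilde{y}\geq\Upsilon_{\const{P}}\}$ satisfies \eqref{eq:1st_ord}, and suppose toward a contradiction that the thresholds stay bounded as $\const{P}\downarrow 0$, say $|\Upsilon_{\const{P}}|\leq\Upsilon_0<\infty$ for all small $\const{P}$. I would combine two facts: first, by Theorem~\ref{thm:flash}, \eqref{eq:1st_ord} forces the input family to be flash signaling in the sense of Definition~\ref{def:flashsig}; second---and this is the step that carries the content---for a flash-signaling family a threshold quantizer with a uniformly bounded threshold makes $I(X;Y)/\const{P}\to 0$, contradicting \eqref{eq:1st_ord}.

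To prove the second fact I would write $p_{\const{P}}(x)\triangleq Q\bigl((\Upsilon_{\const{P}}-x)/\sigma\bigr) = \Prob(Y=1\mid X=x)$ and $q_{\const{P}}\triangleq p_{\const{P}}(0)$, and set $a\triangleq Q(\Upsilon_0/\sigma)\in(0,\tfrac12]$ (taking $\Upsilon_0\geq 0$ without loss of generality); then $q_{\const{P}}\in[a,1-a]$, so $q_{\const{P}}(1-q_{\const{P}})\geq a(1-a)$ and $\min(q_{\const{P}},1-q_{\const{P}})\geq a$. Starting from $I(X;Y)\leq\E{D\bigl(P_{Y\mid X}\,\|\,P_{Y\mid X=0}\bigr)}$ (valid for any reference output law, here $P_{Y\mid X=0}$, which has full support because $\Upsilon_{\const{P}}$ is finite), I would split the expectation at an arbitrary level $\nu>0$. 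On $\{X^2\leq\nu\}$, bound the binary relative entropy by the $\chi^2$-divergence, $D\bigl(P_{Y\mid X=x}\,\|\,P_{Y\mid X=0}\bigr)\leq(p_{\const{P}}(x)-q_{\const{P}})^2/\bigl(q_{\const{P}}(1-q_{\const{P}})\bigr)$, and use $|p_{\const{P}}(x)-q_{\const{P}}|\leq|x|/(\sigma\sqrt{2\pi})$ (mean value theorem, $\sup|Q'|=1/\sqrt{2\pi}$), which bounds this part by $\E{X^2\I{X^2\leq\nu}}/\bigl(2\pi\sigma^2 a(1-a)\bigr)$. On $\{X^2>\nu\}$, use the crude bound $D\bigl(P_{Y\mid X=x}\,\|\,P_{Y\mid X=0}\bigr)\leq-\log\min(q_{\const{P}},1-q_{\const{P}})\leq-\log a$, which bounds this part by $(-\log a)\,\Prob(X^2>\nu)$.

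Dividing by $\const{P}$ and using $\Prob(X^2>\nu)\leq\E{X^2\I{X^2>\nu}}/\nu\leq\const{P}/\nu$ and $\E{X^2\I{X^2\leq\nu}}\leq\const{P}-\E{X^2\I{X^2>\nu}}$ gives
\begin{equation*}
\frac{I(X;Y)}{\const{P}}\;\leq\;\frac{1}{2\pi\sigma^2 a(1-a)}\biggl(1-\frac{\E{X^2\I{X^2>\nu}}}{\const{P}}\biggr)+\frac{-\log a}{\nu}.
\end{equation*}
By flash signaling the parenthesis tends to $0$ as $\const{P}\downarrow 0$, so $\limsup_{\const{P}\downarrow 0} I(X;Y)/\const{P}\leq(-\log a)/\nu$; since $a$ depends only on the fixed $\Upsilon_0$ and $\nu$ is arbitrary, letting $\nu\to\infty$ yields $\lim_{\const{P}\downarrow 0}I(X;Y)/\const{P}=0$, contradicting \eqref{eq:1st_ord}. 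Hence the thresholds cannot remain bounded, which is the corollary.

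The point I expect to matter is precisely the routing through Theorem~\ref{thm:flash}: the bound from $\E{X^2}\leq\const{P}$ alone gives only $\limsup I(X;Y)/\const{P}\leq 1/\bigl(2\pi\sigma^2 a(1-a)\bigr)$, and since $a(1-a)\leq\tfrac14$ this is at least $2/(\pi\sigma^2)>1/(2\sigma^2)$, so it is not enough. It is flash signaling that forces the near-zero part of the second moment to be $o(\const{P})$, while the part migrating to infinity is harmless because there the binary relative entropy is uniformly bounded by the constant $-\log a$. The remaining steps---the two relative-entropy bounds and the order of limits ($\const{P}\downarrow 0$ before $\nu\to\infty$)---are routine, and the sign of $X$ never enters, since both relative-entropy bounds hold for every value $p_{\const{P}}(x)\in[0,1]$.
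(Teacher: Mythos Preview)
Your proof is correct and takes a genuinely different route from the paper's. The paper does not invoke the conclusion of Theorem~\ref{thm:flash}; instead it shows directly that, when the threshold is confined to $[0,\nu]$,
\[
\sup_{\xi\neq 0,\;0\leq\Upsilon\leq\nu}\frac{D\bigl(P_{Y|X=\xi}\bigm\| P_{Y|X=0}\bigr)}{\xi^2}<\frac{1}{2\sigma^2},
\]
which via $I(X;Y)/\const{P}\leq\sup_\xi D/\xi^2$ immediately blocks \eqref{eq:1st_ord}. For finite $\xi$ it recycles the bound $D/\xi^2\leq\Psi(\xi)/\xi^2<1/(2\sigma^2)$ from the proof of Theorem~\ref{thm:flash} (this $\Psi$ is uniform in the threshold), and for $|\xi|\to\infty$ it uses the same crude bound you use, $D\leq\log\bigl(1/Q(\nu/\sigma)\bigr)+\log 2$, to force $D/\xi^2\to 0$.

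Your argument, by contrast, uses Theorem~\ref{thm:flash} as a black box and then shows the stronger statement that flash signaling with a bounded threshold yields rate per unit-energy equal to \emph{zero}---precisely the informal claim the paper makes just before stating the corollary. Your small-$|x|$ estimate via the $\chi^2$ bound and the Lipschitz property of $Q$ is more elementary than the paper's $\Psi$ machinery, and the order-of-limits device (first $\const{P}\downarrow 0$, then $\nu\to\infty$) is clean. What the paper's route buys is that it avoids relying on the conclusion of Theorem~\ref{thm:flash}, though it still leans on an intermediate inequality from that proof; what your route buys is a self-contained, shorter argument once Theorem~\ref{thm:flash} is in hand, together with the sharper conclusion that the limiting rate is zero rather than merely strictly below $1/(2\sigma^2)$.
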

\begin{proof}
See Section~\ref{sub:proof_cor5}.
\end{proof}

Intuitively, the power loss in quantizing the output of the Gaussian channel with a one-bit quantizer can be avoided by using flash-signaling input distributions and asymmetric threshold quantizers because for such input distributions and quantizers the probability that the quantizer causes an error vanishes as the SNR tends to zero. Indeed, by using binary on-off keying \eqref{eq:on-off} and threshold quantizers \eqref{eq:dobi_thresholdD}, and by cleverly choosing the rate at which $\xi$ and $\Upsilon$ grow as $\const{P}$ decreases, we can make the probabilities $\Prob(Y=1|X=0)$ and \mbox{$\Prob(Y=0|X=\xi)$} vanish as $\const{P}$ tends to zero. This suggests that the loss caused by the quantizer disappears with decreasing $\const{P}$. Note, however, that the same argument would also apply to the averaged-power-limited, noncoherent, Rayleigh-fading channel (see Section~\ref{sec:noncoherent}), but for this channel quantizing the output with a one-bit quantizer does cause a loss with respect to the capacity per unit-energy (Theorem~\ref{prop:noncoherent}).

As mentioned in Section~\ref{sec:channel}, the capacity per
unit-energy is equal to the slope at zero of the capacity-vs-power curve. Thus, Theorem~\ref{thm:No2dB} demonstrates that the first
derivative of $C(\const{P})$ at $\const{P}=0$ is equal to
$1/(2\sigma^2)$. Theorem~\ref{thm:flash} implies that the second
derivative of $C(\const{P})$ at $\const{P}=0$ is $-\infty$. 
\begin{corollary}[$\ddot{C}(0) = -\infty$]
  \begin{equation}
    \ddot{C}(0) = 2\lim_{\const{P}\downarrow 0}
    \frac{C(\const{P}) - \const{P}\, \dot{C}(0)}{\const{P}^2} =
    -\infty. \label{eq:amos_ddotC}
  \end{equation}
\end{corollary}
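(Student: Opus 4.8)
The plan is to deduce this corollary directly from Theorem~\ref{thm:flash} by contradiction, combined with the characterization of flash signaling and the spectral-efficiency penalty it incurs. First I would recall that, by the concavity of $C(\const{P})$ and the identification of $\dot{C}(0)$ with the slope at zero, the quantity $2\bigl(C(\const{P})-\const{P}\,\dot{C}(0)\bigr)/\const{P}^2$ is monotone in $\const{P}$, so the limit in \eqref{eq:amos_ddotC} exists in $[-\infty,0]$ and coincides with $\ddot{C}(0)$ whenever the second derivative exists; hence it suffices to rule out the possibility that this limit is a finite number. Assume for contradiction that there is a finite constant $\beta$ with $C(\const{P}) = \const{P}/(2\sigma^2) + \beta\const{P}^2/2 + o(\const{P}^2)$ as $\const{P}\downarrow 0$. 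For each $\const{P}$ pick an input distribution $P_X^{(\const{P})}$ (and quantizer) essentially achieving $C(\const{P})$, i.e.\ with $I(X;Y)$ within $o(\const{P}^2)$ of $C(\const{P})$; in particular such a family satisfies \eqref{eq:1st_ord}, so by Theorem~\ref{thm:flash} it must be flash signaling.

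Next I would quantify how much mutual information a flash-signaling family can deliver through a \emph{one-bit} channel and show it is asymptotically negligible compared to $\const{P}$ at second order. Because the quantizer output is binary, $I(X;Y)\leq \log 2$ in nats regardless of $\const{P}$; more usefully, I would bound $I(X;Y)$ in terms of the input second moment by going through \eqref{eq:CUEKL}–type estimates: for any fixed quantizer $\set{D}$ the per-symbol divergence $D(P_{Y|X=\xi}\|P_{Y|X=0})$ grows at most linearly in $\xi^2$ with a coefficient bounded by $1/(2\sigma^2)$ (this is exactly \eqref{eq:CUCUB}), but the \emph{approach} to that coefficient from the flash-signaling structure forces the large-$\xi$ mass — which carries essentially all the power by Definition~\ref{def:flashsig} — to contribute only a vanishing fraction of $\const{P}/(2\sigma^2)$ beyond the first order. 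Concretely, I would split the input into its near-zero part and its escaping part, use that the escaping part has probability $\Theta(\const{P}/\xi^2)\to 0$ and that for a binary channel each such symbol contributes at most $\log 2$ nats, and conclude $I(X;Y) = \const{P}/(2\sigma^2) - \omega(\const{P}^2)$, i.e.\ the second-order deficit is not merely negative but unbounded when normalized by $\const{P}^2$. This contradicts the assumed finite $\beta$, establishing $\ddot{C}(0)=-\infty$.

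An alternative, possibly cleaner route that I would pursue in parallel is to reuse Verdú's machinery directly: \cite[Th.~16]{verdu02} shows that for the unquantized Gaussian channel flash signaling forces the $E_b/N_0$ \emph{wideband slope} $S_0$ to be zero, which is equivalent to $\ddot{C}(0)=-\infty$ for any family attaining the first-order optimal slope; since output quantization can only decrease mutual information, and since Theorem~\ref{thm:flash} says \emph{every} family attaining \eqref{eq:1st_ord} on the quantized channel is flash signaling, the same zero-slope conclusion applies here. In this version the key step is just to note that $C(\const{P})$ is the supremum over flash-signaling families (asymptotically) of $I(X;Y)$, that each such family has $S_0=0$ hence $\ddot{\ }=-\infty$ for its own rate curve, and that the supremum of such curves still has $\ddot{C}(0)=-\infty$ because a finite $\ddot{C}(0)$ would, by the monotone-difference-quotient argument above, yield a non-flash near-optimal family, contradicting Theorem~\ref{thm:flash}.

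The main obstacle I anticipate is the last interchange-of-limit-and-supremum point: $C(\const{P})$ is a supremum over input distributions and quantizers, and a priori no single family achieves it, so I cannot simply "apply Theorem~\ref{thm:flash} to the optimizer." The fix is the standard $\eps$-optimal-family construction together with the monotonicity of the normalized second-order difference quotient: choosing, for a sequence $\const{P}_n\downarrow 0$, distributions that are $\const{P}_n^2$-close to capacity and then extending to a family satisfying \eqref{eq:1st_ord}, so that Theorem~\ref{thm:flash} forces flash signaling and hence the unbounded deficit. Making this family well-defined for \emph{all} small $\const{P}$ (not just along a sequence) while preserving both the near-optimality and the second moment bound is the one technical point that needs care; everything else is a short estimate using the binary-output bound $I(X;Y)\le\log 2$ and Definition~\ref{def:flashsig}.
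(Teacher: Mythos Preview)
Your ``alternative, possibly cleaner route'' is precisely the paper's proof, and it is indeed much cleaner than your first route. The paper argues in three lines: by the Data Processing Inequality, $I(X;Y)\le I(X;\tilde{Y})$, so
\[
\lim_{\const{P}\downarrow 0}\frac{I(X;Y)-\const{P}/(2\sigma^2)}{\const{P}^2}
\;\le\;
\lim_{\const{P}\downarrow 0}\frac{I(X;\tilde{Y})-\const{P}/(2\sigma^2)}{\const{P}^2};
\]
by Theorem~\ref{thm:flash} any family achieving the first-order slope must be flash signaling; and by \cite[Th.~16]{verdu02} the right-hand side equals $-\infty$ for every flash-signaling family on the unquantized Gaussian channel. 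That is the whole proof.

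Your anticipated obstacle---the interchange of limit and supremum, and the need for an $\eps$-optimal family defined for all small $\const{P}$---is not actually present. Theorem~\ref{thm:amos} guarantees that for each $\const{P}$ the supremum defining $C(\const{P})$ is \emph{achieved} by some input distribution and quantizer, so you may simply take, for each $\const{P}$, the capacity-achieving pair. This family satisfies $I(X;Y)=C(\const{P})$ exactly, hence \eqref{eq:1st_ord} holds, Theorem~\ref{thm:flash} applies directly, and no $o(\const{P}^2)$ slack or sequence-to-family extension is needed.

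Your first route, based on the binary-output bound $I(X;Y)\le\log 2$ and a direct splitting of the flash-signaling input, is unnecessary and the sketch as written is not convincing: the statement ``each such symbol contributes at most $\log 2$ nats'' does not by itself yield a second-order deficit of order $\omega(\const{P}^2)$, and making that precise would require essentially reproving the relevant part of \cite[Th.~16]{verdu02}. Since the Data Processing Inequality already lets you invoke that theorem wholesale, there is no reason to go this way.
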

\begin{proof}
  By the Data Processing Inequality, for every family of
  distributions of $X$ parametrized by $\const{P}$
  \begin{IEEEeqnarray}{lCl}
    \lim_{\const{P}\downarrow 0} \frac{I(X;Y)-\frac{\const{P}}{2\sigma^2}}{\const{P}^2}
    & \leq & \lim_{\const{P}\downarrow 0} \frac{I(X;\tilde{Y})-\frac{\const{P}}{2\sigma^2}}{\const{P}^2}. \label{eq:dobi_Cddot}
  \end{IEEEeqnarray}
  To achieve $\dot{C}(0)$ it is necessary to use flash signaling (Theorem~\ref{thm:flash}). And for all flash-signaling input distributions the right-hand side (RHS) of \eqref{eq:dobi_Cddot} is $-\infty$ (\cite[Th.~16]{verdu02}). Consequently, so is its left-hand side (LHS). 
  \end{proof}

Note that, for the Gaussian channel, the first and second derivative of the capacity are \cite{shannon48}
\begin{equation}
\label{eq:dotC_ddotC_G}
\dot{C}(0)=\frac{1}{2\sigma^2} \quad \textnormal{and} \quad \ddot{C}_{\textnormal{G}}(0) = -\frac{1}{2\sigma^4}
\end{equation}
(where ``G'' stands for ``Gaussian"). Thus, while quantizing the
output of the Gaussian channel with a one-bit quantizer does not cause
a loss with respect to the first derivative of the capacity-vs-power
curve, it causes a substantial loss in terms of the second
derivative. The implications on the spectral efficiency are discussed
in Section~\ref{sec:EbNo}.

\begin{proposition}
\label{note:PPCUC}
If the average-power constraint \eqref{eq:power} is replaced by the peak-power constraint
\begin{equation}
X_k^2 \leq \const{P}, \quad k\in\Integers, \quad \textnormal{with probability one}
\end{equation}
then the slope at zero of the capacity-vs-power curve is
\begin{equation}
\lim_{\const{P}\downarrow 0}\frac{C_{\textnormal{PP}}(\const{P})}{\const{P}} = \frac{1}{\pi\sigma^2}.
\end{equation}
%Thus, quantizing the output of the peak-power-limited Gaussian channel with a one-bit quantizer causes a 2dB power loss.
\end{proposition}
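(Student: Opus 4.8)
The plan is to sandwich $C_{\textnormal{PP}}(\const{P})$. For the lower bound, note that $\Upsilon=0$ is admissible in the maximization \eqref{eq:propPP}, and for it $\Theta(\const{P},0)=0$, so the bracketed expression reduces to $\log 2 - H_b\bigl(Q(\sqrt{\const{P}}/\sigma)\bigr)=C_{\textnormal{sym}}(\const{P})$ (cf.~\eqref{eq:Csym}). Thus $C_{\textnormal{PP}}(\const{P})\geq C_{\textnormal{sym}}(\const{P})$, and \eqref{eq:CUEsym} gives $\liminf_{\const{P}\downarrow0}C_{\textnormal{PP}}(\const{P})/\const{P}\geq 1/(\pi\sigma^{2})$. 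All the work is in the matching upper bound; morally it asserts that, in contrast with Corollary~\ref{cor:unboundthres} for the average-power channel, letting the threshold grow as $\const{P}\downarrow0$ cannot beat the symmetric quantizer asymptotically.

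For the upper bound, Proposition~\ref{note:PP} lets me restrict attention to threshold quantizers: $C_{\textnormal{PP}}(\const{P})$ is attained by a binary input with mass points $\pm\sqrt{\const{P}}$ and by a threshold quantizer with threshold $\Upsilon^{\star}=\Upsilon^{\star}(\const{P})\geq 0$. For any threshold $\Upsilon\geq 0$ put $\phi(x)=\tfrac{1}{\sqrt{2\pi}}e^{-x^{2}/2}$ and $f(x)=\Prob(Y=1\mid X=x)=Q\bigl((\Upsilon-x)/\sigma\bigr)$; since $Q$ is decreasing, $f$ maps $[-\sqrt{\const{P}},\sqrt{\const{P}}]$ into $[\beta,\alpha]$ with $\alpha=Q\bigl((\Upsilon-\sqrt{\const{P}})/\sigma\bigr)$ and $\beta=Q\bigl((\Upsilon+\sqrt{\const{P}})/\sigma\bigr)$. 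For any $X$ with $|X|\leq\sqrt{\const{P}}$, a second-order Taylor expansion of $H_b$ about $q=\E{f(X)}$ with Lagrange remainder gives $I(X;Y)=H_b(q)-\E{H_b(f(X))}=\tfrac12\,\E{\bigl(-H_b''(\xi_X)\bigr)\bigl(f(X)-q\bigr)^{2}}$ with $\xi_X\in[\beta,\alpha]$; since $-H_b''(\xi)=1/\bigl(\xi(1-\xi)\bigr)$ and $\Var{f(X)}\leq(\alpha-\beta)^{2}/4$ (Popoviciu), this yields
\begin{equation*}
I(X;Y)\;\leq\;\frac{(\alpha-\beta)^{2}}{8\,\min_{x\in[\beta,\alpha]}x(1-x)}\,.
\end{equation*}
Applied at $\Upsilon^{\star}$ (and using $\Upsilon^{\star}\geq0$), this gives $C_{\textnormal{PP}}(\const{P})\leq\sup_{\Upsilon\geq0}\frac{(\alpha-\beta)^{2}}{8\min_{x\in[\beta,\alpha]}x(1-x)}$.

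It remains to show that the right-hand side, divided by $\const{P}$, has $\limsup_{\const{P}\downarrow0}$ at most $1/(\pi\sigma^{2})$. The case $0\leq\Upsilon<\sqrt{\const{P}}$ is easy: then $\tfrac12\in[\beta,\alpha]$, so $\alpha,\beta\to\tfrac12$ and $\alpha-\beta\leq\tfrac{2\sqrt{\const{P}}}{\sigma}\phi(0)$, which already gives a contribution $\leq\tfrac{1}{\pi\sigma^{2}}(1+o(1))$. For $\Upsilon\geq\sqrt{\const{P}}$ we have $[\beta,\alpha]\subseteq[0,\tfrac12]$, so the minimum is $\beta(1-\beta)$; parametrizing by $u=(\Upsilon-\sqrt{\const{P}})/\sigma\geq0$ and $\epsilon=2\sqrt{\const{P}}/\sigma$, the normalized bound is $\tfrac{1}{2\sigma^{2}}F_{\epsilon}(u)$ with $F_{\epsilon}(u)=\dfrac{\bigl(Q(u)-Q(u+\epsilon)\bigr)^{2}}{\epsilon^{2}\,Q(u+\epsilon)\bigl(1-Q(u+\epsilon)\bigr)}$, and one needs $\limsup_{\epsilon\downarrow0}\sup_{u\geq0}F_{\epsilon}(u)\leq 2/\pi$. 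On any compact $[0,u_{0}]$, $F_{\epsilon}(u)\to\phi(u)^{2}/\bigl(Q(u)(1-Q(u))\bigr)$ uniformly; for $u>u_{0}$, using $1-Q(u+\epsilon)\geq\tfrac12$, $Q(u)-Q(u+\epsilon)\leq\epsilon\phi(u)$ and $Q(v)\geq\phi(v+1)$, one gets $F_{\epsilon}(u)\leq 2\phi(u)^{2}/\phi(u+\epsilon+1)$, which for $\epsilon\leq1$ is bounded by an explicit Gaussian term tending to $0$ as $u\to\infty$; hence for $u_{0}$ large the supremum over $(u_{0},\infty)$ is below $2/\pi$. Everything thus reduces to the inequality $\phi(u)^{2}\leq\tfrac{2}{\pi}\,Q(u)\bigl(1-Q(u)\bigr)$ for all $u\geq0$.

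This last inequality — equivalently $Q(t)\bigl(1-Q(t)\bigr)\geq\tfrac14 e^{-t^{2}}$ for all real $t$, sharp at $t=0$ — is the crux, and I would prove it from the log-convexity of the Mills ratio $R(t)=Q(t)/\phi(t)$: the substitution $s=t+w$ shows $R(t)=\int_{0}^{\infty}e^{-tw}e^{-w^{2}/2}\,\mathrm{d}w$, a Laplace transform of the nonnegative function $w\mapsto e^{-w^{2}/2}$, which is log-convex by Cauchy--Schwarz, so $R(t)R(-t)\geq R(0)^{2}=\pi/2$; since $Q(t)\bigl(1-Q(t)\bigr)=Q(t)Q(-t)=\phi(t)^{2}R(t)R(-t)$ and $\phi(t)^{2}=\tfrac{1}{2\pi}e^{-t^{2}}$, this is exactly the claim. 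Together with the lower bound this yields $\lim_{\const{P}\downarrow0}C_{\textnormal{PP}}(\const{P})/\const{P}=1/(\pi\sigma^{2})$. I expect the main obstacle to be the uniformity over $\Upsilon$ in the second regime — ruling out any cleverly growing threshold — with the elementary Gaussian tail estimate and the Mills-ratio bound doing the essential work.
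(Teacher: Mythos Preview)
Your proof is correct and, in its overall architecture, parallels the paper's: both invoke Proposition~\ref{note:PP} to restrict to threshold quantizers with the binary $\pm\sqrt{\const{P}}$ input, both perform a second-order analysis of $I(X;Y)$ as $\const{P}\downarrow 0$ uniformly in $\Upsilon\geq 0$, and both reduce the matter to the same pointwise inequality
\[
\frac{\phi(u)^{2}}{Q(u)\bigl(1-Q(u)\bigr)}\;\leq\;\frac{2}{\pi},\qquad u\geq 0,
\]
with equality at $u=0$. The differences are in how each step is executed, and here your route is noticeably cleaner. The paper carries out separate Taylor expansions of $H_{b}$ and of $Q$ with explicit Lagrange remainders and then spends Appendix~\ref{app:proof_note3} controlling those remainders uniformly in $\Upsilon$; you replace all of that by a single second-order expansion of $H_{b}$ with Lagrange remainder, combined with Popoviciu's variance bound, obtaining directly the estimate $I(X;Y)\leq(\alpha-\beta)^{2}/\bigl(8\min_{[\beta,\alpha]}\xi(1-\xi)\bigr)$. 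Your case split $\Upsilon\lessgtr\sqrt{\const{P}}$ and the Gaussian tail estimate $Q(v)\geq\phi(v+1)$ then give the required uniformity over $\Upsilon$ with much less bookkeeping. For the key inequality itself, the paper proves it by showing that $u\mapsto e^{-u^{2}}/\bigl(Q(u)(1-Q(u))\bigr)$ is monotone decreasing, which it verifies analytically for $u\geq 2$ but only numerically (via a plot) for $0\leq u\leq 2$; your Mills-ratio argument---writing $R(t)=Q(t)/\phi(t)=\int_{0}^{\infty}e^{-tw}e^{-w^{2}/2}\,\ud w$ as a Laplace transform, invoking log-convexity to get $R(t)R(-t)\geq R(0)^{2}=\pi/2$, and noting $Q(t)(1-Q(t))=\phi(t)^{2}R(t)R(-t)$---is fully analytic and sharper in that it dispenses with the monotonicity claim altogether. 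What the paper's approach buys in return is the monotonicity statement itself (suggesting $\Upsilon=0$ is optimal for every $\const{P}$, not just asymptotically), though this is only established modulo a numerical check.
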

\begin{proof}
See Section~\ref{sub:proof_note3}.
\end{proof}
As was shown by Shannon \cite{shannon48}, the capacity of the peak-power-limited unquantized Gaussian channel satisfies
\begin{equation}
\lim_{\const{P}\downarrow 0} \frac{C_{\textnormal{G,PP}}(\const{P})}{\const{P}} = \frac{1}{2\sigma^2}.
\end{equation}
Thus, in contrast to the average-power-limited case, quantizing the output of the peak-power-limited Gaussian channel with a one-bit quantizer does cause a 2dB power loss.

\section{Pulse-Position Modulation}
\label{sec:ppm}
We next demonstrate that the capacity per unit-energy
\eqref{eq:thmCUC} can be achieved using a PPM scheme---no
random-coding arguments are needed. For such a scheme the encoder produces the $\const{M}$
channel inputs $x_1(m),x_2(m),\ldots,x_{\const{M}}(m)$ for each message
$m$ in $\{1,2,\ldots,\const{M}\}$, where
\begin{equation}
  x_k(m) = \begin{cases}
    \xi & \text{if $k=m$}, \\
    0 & \text{if $k \neq m$}, 
  \end{cases}
  \quad k = 1, 2,\ldots, \const{M}
\end{equation}
and where $\xi^2=\const{E}$. For a fixed rate per
unit-energy \[\dot{R}(0) = \frac{\log\const{M}}{\const{E}}\] we have
\begin{equation}
\label{eq:xi}
\xi^2 = \const{E} = \frac{\log\const{M}}{\dot{R}(0)}.
\end{equation}
Note that, while the \emph{rate per unit-energy} is fixed, the
\emph{rate} of this scheme is $\frac{\log\const{M}}{\const{M}}$ and
tends to zero as $\const{M}$ tends to infinity. 
%Nevertheless, the
%capacity per unit-energy is related to the capacity through
%\eqref{eq:CUElowSNR}. It thus follows from Theorem~\ref{thm:No2dB}
%that there also exists a transmission scheme of nonzero rate that
%achieves \eqref{eq:thmCUC}.

We employ a threshold quantizer \eqref{eq:dobi_thresholdD} with the threshold $\Upsilon$ chosen so that for an arbitrary $0<\eps< 1$ the probability that the quantizer produces~$0$ given that $X=\xi$ is equal to $\eps$. Thus,
\begin{equation}
\label{eq:doby_41}
\Upsilon = \xi -\sigma Q^{-1}(\eps)
\end{equation}
which yields
\begin{subequations}
\label{eq:amos_rain}
\begin{IEEEeqnarray}{lCl}
P\bigl(Y_k=0 \bigm| X_k=\xi\bigr) & = & \eps \label{eq:amos_rainA}\\
P\bigl(Y_k=1 \bigm| X_k=0\bigr) & = & Q\left(\frac{\xi-\sigma
    Q^{-1}(\eps)}{\sigma}\right). \label{eq:amos_rainB}
\end{IEEEeqnarray}
\end{subequations}
In \eqref{eq:doby_41}, $Q^{-1}(\cdot)$ denotes the inverse $Q$-function. 

The decoder guesses ``$\hat{M}=m$'' provided that $Y_m=1$ and that
$Y_k=0$ for all $k\neq m$. If $Y_k=1$ for more than one $k$, or
if $Y_k=0$ for all $k=1,2,\ldots,\const{M}$, then the decoder
declares an error.

Suppose that message $M=m$ was transmitted. Then the probability of an error is upper-bounded by
\begin{IEEEeqnarray}{lCl}
\IEEEeqnarraymulticol{3}{l}{\Prob\bigl(\hat{M}\neq M \bigm| M=m\bigr)} \nonumber\\
\quad & = & \Prob\left(\left.\bigcup_{k\neq m}(Y_k=1)\cup (Y_m=0)\right| M=m\right)\nonumber\\
& \leq & \sum_{k\neq m} P\bigl(Y_k=1\bigm| X_k=0\bigr) + P\bigl(Y_m=0\bigm|X_m=\xi\bigr) \nonumber\\
& = & \sum_{k\neq m} P\bigl(Y_k=1\bigm|X_k=0\bigr) + \eps \nonumber\\
& = & (\const{M}-1)\,  P\bigl(Y_1=1\bigm|X_1=0\bigr) + \eps \label{eq:union}
\end{IEEEeqnarray}
where the second step follows from the Union Bound; the third step follows from \eqref{eq:amos_rainA}; and the fourth step follows because the channel is
memoryless which implies that \mbox{$\Prob(Y_k=1|X_k=0)$}
does not depend on $k$. Since the RHS of \eqref{eq:union} does not
depend on $m$, it follows that also the probability of error
\begin{equation*}
\Prob(\hat{M}\neq M) = \frac{1}{\const{M}}\sum_{m=1}^{\const{M}} \Prob\bigl(\hat{M}\neq M\bigm| M=m \bigr)
\end{equation*}
is upper-bounded by \eqref{eq:union}.

The first term on the RHS of \eqref{eq:union} can be evaluated using
\eqref{eq:amos_rainB} and \eqref{eq:xi}:
\begin{IEEEeqnarray}{lCl}
\IEEEeqnarraymulticol{3}{l}{(\const{M}-1)\, P\bigl(Y_1=1\bigm|X_1=0\bigr)} \nonumber\\
\quad & = & (\const{M}-1) \,Q\left(\frac{\xi-\sigma Q^{-1}(\eps)}{\sigma}\right) \nonumber\\
& = & (\const{M}-1) \,Q\left(\frac{\sqrt{\log\const{M}}-\sigma Q^{-1}(\eps)\sqrt{\dot{R}(0)}}{\sigma\sqrt{\dot{R}(0)}}\right). \IEEEeqnarraynumspace \label{eq:P10}
\end{IEEEeqnarray}
%where the second step follows from \eqref{eq:xi}. 
We continue by showing that if
\begin{equation*}
\dot{R}(0) < \frac{1}{2\sigma^2}
\end{equation*}
then, for every fixed $0<\eps<1$, the RHS of \eqref{eq:P10} tends to zero as $\const{M}$ tends to infinity.  Indeed, 
\begin{IEEEeqnarray}{lCl}
\IEEEeqnarraymulticol{3}{l}{\lim_{\const{M}\to\infty}(\const{M}-1) \,Q\left(\frac{\sqrt{\log\const{M}}-\sigma Q^{-1}(\eps)\sqrt{\dot{R}(0)}}{\sigma\sqrt{\dot{R}(0)}}\right)}\nonumber\\
\,\,\,\,\, & \leq & \lim_{\alpha\to\infty} \exp\left(\sigma^2\dot{R}(0)\left(\alpha+Q^{-1}(\eps)\right)^2\right) Q(\alpha) \nonumber\\
& \leq & \lim_{\alpha\to\infty}\frac{1}{\sqrt{2\pi}\alpha}\exp\left(\sigma^2\dot{R}(0)\left(\alpha+Q^{-1}(\eps)\right)^2-\frac{1}{2}\alpha^2\right)\IEEEeqnarraynumspace\label{eq:Mlarge}
\end{IEEEeqnarray}
where the first step follows by upper-bounding $\const{M}-1 < \const{M}$ and by substituting
\begin{equation*}
\alpha=\frac{\sqrt{\log\const{M}}-\sigma Q^{-1}(\eps)\sqrt{\dot{R}(0)}}{\sigma \sqrt{\dot{R}(0)}};
\end{equation*}
and the second step follows from the inequality \cite[Prop.~19.4.2]{lapidoth09}
\begin{equation}
\label{eq:QUB}
Q(\alpha) < \frac{1}{\sqrt{2\pi}\alpha} e^{-\alpha^2/2}, \quad \alpha>0.
\end{equation}
The RHS of \eqref{eq:Mlarge} is zero for $\dot{R}(0)<\frac{1}{2\sigma^2}$.

Combining \eqref{eq:Mlarge} with \eqref{eq:union}, we obtain that if
$\dot{R}(0)<\frac{1}{2\sigma^2}$, then the probability of error tends
to $\eps$ as $\const{E}$---and hence, by \eqref{eq:xi}, also
$\const{M}$---tends to
infinity. Since $\eps$ can be chosen arbitrarily small, the
probability of error can be made arbitrarily small,
% by choosing $\const{M}$ and $\Upsilon$ 
%sufficiently large, 
thus proving that the capacity per unit-energy \eqref{eq:thmCUC} is
achievable with the above PPM scheme.

The fact that PPM achieves the capacity per unit-energy of the Gaussian channel with a threshold quantizer follows also from the analysis of the probability of error for block orthogonal signals shown in \cite[p.~342--346]{wozencraftjacobs65}. The threshold $a\geq 0$ introduced to bound the RHS of (5.97d) in \cite{wozencraftjacobs65} can be identified as the threshold~$\Upsilon$ of the quantizer.

\section{Spectral Efficiency}
\label{sec:EbNo}
The discrete-time channel presented in Section~\ref{sec:channel} is closely related to the
continuous-time AWGN channel with one-bit output
quantization. Indeed, suppose that the input to the latter
channel is bandlimited to $\WW$ Hz and that its average-power is limited by
$\const{P}$, and suppose that the Gaussian noise is of double-sided
power spectral density $\Nzero/2$. Then, the discrete-time channel
\eqref{eq:channel} with noise-variance 
\begin{equation}
\label{eq:amos_sigma}
\sigma^2=\WW\Nzero
\end{equation}
results from sampling the AWGN channel's output at the Nyquist rate
$2\WW$. The capacity (in bits per second) of the AWGN channel with
Nyquist sampling and one-bit output quantization is given by
\begin{equation}
\label{eq:AWGN}
C_{\textnormal{AWGN}}^{(2\WW)}(\const{P}) = \frac{2\WW}{\log 2} C(\const{P})
\end{equation}
where $C(\const{P})$ is the capacity \eqref{eq:capacity} of the discrete-time channel in nats per channel use. Note, however, that when the
channel output is quantized, sampling at the Nyquist rate need not be
optimal with respect to capacity: see, e.g.,
\cite{kochlapidoth10_2}--\cite{shamai94} for scenarios where sampling
the quantizer's output above the Nyquist rate provides capacity
gains. Consequently, $C_{\textnormal{AWGN}}^{(2\WW)}(\const{P})$ is,
in general, a lower bound on the capacity of the
AWGN channel with one-bit output quantization.

The energy per
information-bit when communicating with power $\const{P}$ at rate
$C_{\textnormal{AWGN}}^{(2\WW)}(\const{P})$ is defined as
\begin{equation}
\frac{\const{E}_{\textnormal{b}}}{\Nzero} \triangleq \frac{\const{P}}{C_{\textnormal{AWGN}}^{(2\WW)}(\const{P})} \frac{1}{\Nzero} \label{eq:EB_doby}
\end{equation}
which, by \eqref{eq:amos_sigma} and \eqref{eq:AWGN}, is equal to
\begin{equation}
 \label{eq:def_Ebno}
 \frac{\const{E}_{\textnormal{b}}}{\Nzero} = \frac{\log 2}{2\sigma^2} \frac{\const{P}}{C(\const{P})}.
\end{equation}
The spectral efficiency $\bar{C}(\cdot)$ (in bits per second per Hz)
is defined as
\begin{equation}
\label{eq:cbar1}
\bar{C}\biggl(\frac{\const{E}_{\textnormal{b}}}{\Nzero}\biggr) \triangleq \frac{C_{\textnormal{AWGN}}^{(2\WW)}(\const{P})}{\WW}
\end{equation}
which, by \eqref{eq:AWGN}, is
\begin{equation}
\label{eq:cbar2}
\bar{C}\biggl(\frac{\const{E}_{\textnormal{b}}}{\Nzero}\biggr) = \frac{2}{\log 2} C(\const{P}).
\end{equation}
In \eqref{eq:cbar1} and \eqref{eq:cbar2}, $\const{P}$ is the solution to \eqref{eq:EB_doby}, namely,
\begin{equation}
\frac{\const{E}_{\textnormal{b}}}{\Nzero} =\frac{\const{P}}{C_{\textnormal{AWGN}}^{(2\WW)}(\const{P})}\frac{1}{\Nzero}.
\end{equation}
See \cite{verdu02} for a more thorough discussion of spectral
efficiency. (Note that, in contrast to \eqref{eq:channel}, the channel
considered in \cite{verdu02} is complex-valued. Therefore, the
expressions for $\const{E}_{\textnormal{b}}/\Nzero$ and
$\bar{C}\bigl(\const{E}_{\textnormal{b}}/\Nzero\bigr)$ differ by a
factor of two.)

The \emph{minimum} $\const{E}_{\textnormal{b}}/\Nzero$ 
required for reliable communication is determined by taking the
infimum over $\const{P}$ of the RHS of \eqref{eq:def_Ebno}. By \eqref{eq:CUElowSNR} this yields
\cite[Eq.~(35)]{verdu02}
\begin{equation}
\biggl(\frac{\const{E}_{\textnormal{b}}}{\Nzero}\biggr)_{\textnormal{min}} = \frac{\log 2}{2\sigma^2}\frac{1}{\dot{C}(0)}.
\end{equation}
Furthermore, the slope of $\const{E}_{\textnormal{b}}/\Nzero\mapsto\bar{C}\bigl(\const{E}_{\textnormal{b}}/\Nzero\bigr)$ at $(\const{E}_{\textnormal{b}}/\Nzero)_{\textnormal{min}}$ in bits per second per Hz per 3dB is given by \cite[Th.~9]{verdu02}\footnote{Again, the channel considered in \cite{verdu02} is complex-valued and the expressions for $\bigl(\const{E}_{\textnormal{b}}/\Nzero\bigr)_{\textnormal{min}}$ and $\const{S}_0$ therefore differ by a factor of two. Nevertheless, since the capacity of the complex-valued channel is twice the capacity of the real-valued channel, it follows that the numerical values of $\bigl(\const{E}_{\textnormal{b}}/\Nzero\bigr)_{\textnormal{min}}$ and $\const{S}_0$ are the same as in \cite{verdu02}.}
\begin{equation}
\const{S}_0 = \frac{4\bigl[\dot{C}(0)\bigr]^2}{-\ddot{C}(0)}.
\end{equation}
\begin{figure}[t]
\centering
\psfrag{C}[c][c]{$\bar{C}(\const{E}_{\textnormal{b}}/\Nzero)$ [bps/Hz]}
\psfrag{Eb}[ct][cc]{$\frac{\const{E}_{\textnormal{b}}}{\Nzero}$ [dB]}
 \psfrag{Gaussian channel}[ll][ll]{\footnotesize Gaussian channel}
\psfrag{Quantized output}[ll][ll]{\footnotesize Quantized output}
\psfrag{Symmetric quantizer}[ll][ll]{\footnotesize Symmetric quantizer}
\psfrag{Optimal quantizer}[ll][ll]{\footnotesize Optimal quantizer}
%\begin{center}
%\epsfig{file=Gaussian_vs_onebit.eps, width=0.9\textwidth}
%\end{center}
%\begin{center}
%\epsfig{file=opt_vs_symmetric.eps, width=0.9\textwidth}
%\end{center}
\begin{center}
\epsfig{file=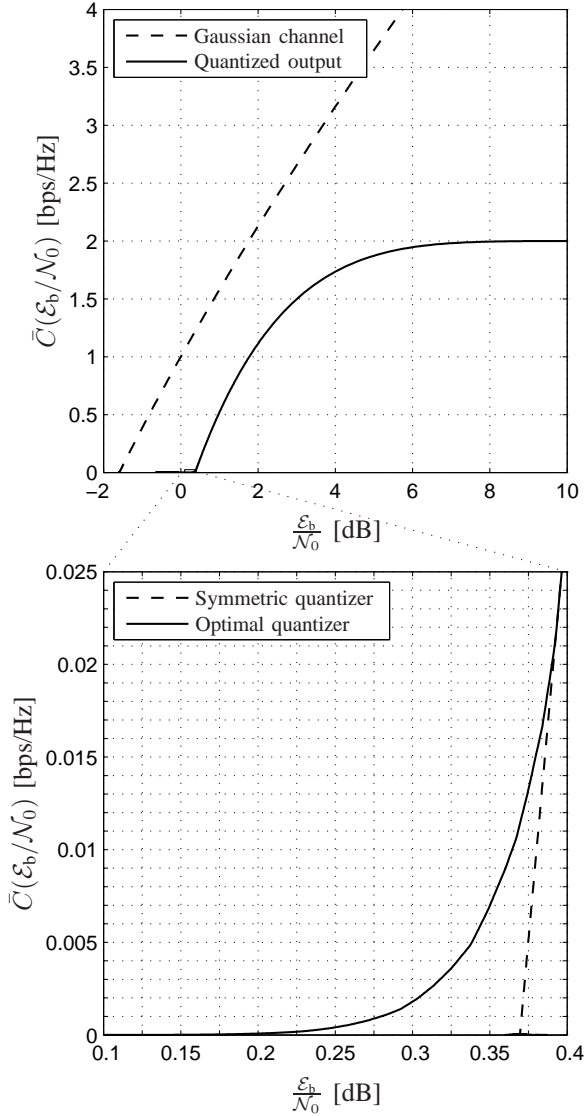,width=0.44\textwidth}
\end{center}
 \caption{Spectral efficiency versus energy per information-bit. The top subfigure shows the spectral efficiencies of the Gaussian channel with and without one-bit output quantization. The bottom subfigure shows the spectral efficiencies for the optimal one-bit quantizer and for the symmetric threshold quantizer.}
 \label{fig:spectraleff}
\end{figure}

By \eqref{eq:thmCUC} and \eqref{eq:amos_ddotC}, we have for the average-power-limited Gaussian channel with one-bit output quantization
\begin{equation}
\dot{C}(0) = \frac{1}{2\sigma^2} \quad \textnormal{and} \quad \ddot{C}(0) =-\infty
\end{equation}
which yields
\begin{subequations}
\begin{IEEEeqnarray}{rCl}
%\label{eq:Ebmin_opt}
\biggl(\frac{\const{E}_{\textnormal{b}}}{\Nzero}\biggr)_{\textnormal{min}} & = & \log 2 = -1.59\textnormal{ dB} \\
\const{S}_0 & = & 0 \textnormal{ }\frac{\textnormal{bps}/\textnormal{Hz}}{3\textnormal{dB}}.
\end{IEEEeqnarray}
\end{subequations}
In comparison, for the unquantized Gaussian channel \eqref{eq:dotC_ddotC_G}
\begin{equation}
\dot{C}_{\textnormal{G}}(0) = \frac{1}{2\sigma^2} \quad \textnormal{and} \quad \ddot{C}_{\textnormal{G}}(0) = -\frac{1}{2\sigma^4}
\end{equation}
and for the Gaussian channel with \emph{symmetric} one-bit output quantization \eqref{eq:Csym}
\begin{equation}
\dot{C}_{\textnormal{sym}}(0) = \frac{1}{\pi\sigma^2} \quad \textnormal{and} \quad \ddot{C}_{\textnormal{sym}}(0) = \frac{2}{3\pi\sigma^4} \biggl(\frac{1}{\pi}-1\biggr).
\end{equation}
This yields
\begin{subequations}
\begin{IEEEeqnarray}{rCl}
\biggl(\frac{\const{E}_{\textnormal{b}}}{\Nzero}\biggr)_{\textnormal{min,G}} & = & \log 2 = -1.59 \textnormal{ dB} \label{eq:Ebmin_G}\\
\const{S}_{0,\textnormal{G}} & = & 2 \textnormal{ }\frac{\textnormal{bps}/\textnormal{Hz}}{3\textnormal{dB}} \label{eq:S0_G}
\end{IEEEeqnarray}
\end{subequations}
and
\begin{subequations}
\begin{IEEEeqnarray}{rCl}
\biggl(\frac{\const{E}_{\textnormal{b}}}{\Nzero}\biggr)_{\textnormal{min,sym}} & = & \frac{\pi}{2} \log 2 = 0.37 \textnormal{ dB} \label{eq:Ebmin_sym}\\
 \const{S}_{0,\textnormal{sym}} & = & \frac{6}{\pi-1} = 2.8\textnormal{ }\frac{\textnormal{bps}/\textnormal{Hz}}{3\textnormal{dB}}.
\end{IEEEeqnarray}
\end{subequations}
Comparing \eqref{eq:Ebmin_sym} with \eqref{eq:Ebmin_G}, we see once more that quantizing the output of the Gaussian channel with a symmetric threshold quantizer causes a power loss of roughly 2dB. We further see that with an asymmetric threshold quantizer we can recover the loss in terms of $\bigl(\const{E}_{\textnormal{b}}/\Nzero\bigr)_{\text{min}}$, but there is still a substantial loss in terms of spectral efficiency. Indeed, for the Gaussian channel with one-bit output quantization, the wideband slope $\const{S}_0$ is zero, whereas for the unquantized Gaussian channel it is $2$ bits per second per Hz per $3$dB.

The above spectral efficiencies are shown in Figure~\ref{fig:spectraleff}. The top subfigure shows the spectral efficiencies of the Gaussian channel with and without one-bit output quantization. The bottom subfigure compares the spectral efficiency $\bar{C}(\cdot)$ for the optimal one-bit quantizer with the spectral efficiency $\bar{C}_{\textnormal{sym}}(\cdot)$ for the symmetric threshold quantizer. We observe that, even though the minimum energy per information-bit is the same with and without one-bit output quantization,\footnote{For numerical reasons, the spectral efficiency of the Gaussian channel with one-bit output quantization can only be shown for $\const{E}_{\textnormal{b}}/\Nzero$ above $-0.5$dB.} the corresponding spectral efficiencies differ substantially for all $\const{E}_{\textnormal{b}}/\Nzero$. We further observe that for spectral efficiencies above $0.02$~bits per second per Hz a symmetric threshold quantizer is nearly optimal.

We conclude that, for communication systems that operate at very low spectral efficiencies, asymmetric quantizers are beneficial, although for most practical scenarios the potential power gain is significantly smaller than $2$dB. For example, at a spectral efficiency of $0.001$ bits per second per Hz, allowing for asymmetric quantizers with corresponding asymmetric signal constellations provides a power gain of roughly $0.1$dB.

\section{One-Bit Quantizers for Fading Channels}
\label{sec:noncoherent}
For the average-power-limited (real-valued) Gaussian channel, we have
demonstrated that by allowing for asymmetric threshold quantizers with
corresponding asymmetric signal constellations, one can achieve the
capacity per unit-energy of the unquantized channel. The same holds
for the average-power-limited \emph{complex-valued} Gaussian channel
\cite{zhangwillemshuang11}: using binary on-off keying
\eqref{eq:on-off} and a \emph{radial quantizer} (which produces $1$ if the magnitude
of the channel output is above some threshold and produces $0$
otherwise), one can achieve the capacity per unit-energy of the
unquantized channel by judiciously choosing the threshold and the
nonzero mass point as functions of the SNR.

In this section we briefly discuss the effect of one-bit quantization
on the capacity per unit-energy of the discrete-time,
average-power-limited, \emph{Rayleigh-fading channel}. This channel's
unquantized output $\tilde{Y}_{k}$ is given by
\begin{equation}
\label{eq:complex_channel}
\tilde{Y}_k = H_k X_k + Z_k, \quad k\in\Integers
\end{equation}
where $\{H_k,\,k\in\Integers\}$ and $\{Z_k,\,k\in\Integers\}$ are
independent sequences of i.i.d., zero-mean, circularly-symmetric,
complex Gaussian random variables, the former with unit-variance and
the latter with variance $\sigma^2$. We say that the channel is
\emph{coherent} if the receiver is cognizant of the realization of
$\{H_k,\,k\in\Integers\}$ and that it is \emph{noncoherent} if the receiver
is only cognizant of the statistics of $\{H_k,\,k\in\Integers\}$.
The unquantized output $\tilde{Y}_{k}$ is quantized using a one-bit
quantizer that is specified by a Borel subset $\set{D}$ of the complex
field $\Complex$: it produces $1$ if $\tilde{Y}_k$ is in $\set{D}$,
and it produces~$0$ if it is not. 

The capacities $C(\const{P}, \set{D})$ and $C(\const{P})$ are defined as in Section~\ref{sec:channel} but with  the average-power constraint \eqref{eq:power} replaced by
\begin{equation}
\frac{1}{n}\sum_{k=1}^n |x_k|^2\leq \const{P}.
\end{equation}
Likewise, the capacities per unit-energy $\dot{C}(0,\set{D})$ and $\dot{C}(0)$
are defined as in Section~\ref{sec:channel} but with the energy constraint \eqref{eq:energy} replaced by
\begin{equation}
\label{eq:fading_energy}
\sum_{k=1}^n |x_k|^2 \leq \const{E}.
\end{equation}

\subsection{Coherent Fading Channels}
Using the same arguments as in Section~\ref{sec:channel}, it can be
shown that, for a fixed quantizer $\set{D}$, we have for the coherent channel \cite[Th.~3]{verdu90}, \cite{verdu02}
\begin{equation}
  \label{eq:CUC_coherent_D}
  \dot{C}(0,\set{D}) = \sup_{\xi\neq 0} 
  \frac{D\bigl(P_{Y|H,X=\xi}\bigm\| P_{Y|H,X=0}\bigm| P_H\bigr)}{|\xi|^2}
\end{equation}
where $D(\cdot\|\cdot|\cdot)$ denotes conditional relative entropy
\begin{IEEEeqnarray}{lCl}
  \IEEEeqnarraymulticol{3}{l}{D\bigl(P_{Y|H,X=\xi}\bigm\| P_{Y|H,X=0}\bigm| P_H\bigr)} \nonumber\\
  \quad & = &
  \int D\bigl(P_{Y|H=h,X=\xi}\bigm\| P_{Y|H=h,X=0}\bigr) \d P_H(h); \IEEEeqnarraynumspace
\end{IEEEeqnarray}
$P_H$ denotes the distribution of the fading $H$; and $P_{Y|H=h,X=x}$
denotes the distribution of $Y$ conditioned on $(H,X)=(h,x)$.\footnote{This can be shown along the lines of the proof of Theorem~3 in \cite{verdu90} but with
the mutual information $I(X;Y)$ replaced by the conditional mutual
information $I(X;Y|H)$. That the RHS of \eqref{eq:CUC_coherent_D} is
an upper bound on $\dot{C}(0,\set{D})$ follows then immediately from
\cite[Eq.~(15)]{verdu90}. Showing that this holds with equality
requires swapping the order of taking the limit as $\const{P}$
tends to zero and of computing the expectation over the fading.}
It can be further shown that
\begin{equation}
\label{eq:CUC_coherent}
\dot{C}(0) = \sup_{\xi\neq 0, \set{D}} \frac{D\bigl(P_{Y|H,X=\xi}\bigm\| P_{Y|H,X=0}\bigm| P_H\bigr)}{|\xi|^2}.
\end{equation}

By the Data Processing Inequality, the capacity per unit-energy is
upper-bounded by that of the unquantized channel \cite{lapidothshamai02,verdu02}
\begin{equation}
\label{eq:CUC_unquant}
\dot{C}(0) \leq \frac{1}{\sigma^2}.
\end{equation}
We next show that, by choosing the one-bit quantizer as a function of $H$ and the SNR, this upper bound can be achieved.

\begin{theorem}[Coherent Case]
\label{prop:coherent}
The capacity per unit-energy of the coherent Rayleigh-fading channel 
is given by
\begin{equation}
\label{eq:prop7}
\dot{C}(0) = \frac{1}{\sigma^2}.
\end{equation}
It is achieved by a family of radial quantizers parametrized by
$\const{P}$ with thresholds that are
proportional to $|H|$.
\end{theorem}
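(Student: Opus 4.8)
The plan is to establish \eqref{eq:prop7} by combining the upper bound \eqref{eq:CUC_unquant} with a matching lower bound obtained from \eqref{eq:CUC_coherent} by exhibiting a family of radial quantizers and a sequence of input levels $\xi$ for which the ratio $D\bigl(P_{Y|H,X=\xi}\bigm\| P_{Y|H,X=0}\bigm| P_H\bigr)/|\xi|^2$ approaches $1/\sigma^2$. The key observation is that, conditioned on $H=h$, the coherent Rayleigh channel is just the complex Gaussian channel $\tilde{Y} = h\xi + Z$ with $Z \sim \mathcal{CN}(0,\sigma^2)$, so the magnitude $|\tilde{Y}|$ is Rician-distributed with non-centrality parameter $|h\xi|$ and scale $\sigma$. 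I would choose a radial quantizer $\set{D} = \{\tilde{y} : |\tilde{y}| \geq \rho |h|\}$ whose threshold scales linearly with $|h|$, with the proportionality constant $\rho$ itself a (slowly growing) function of $\const{P}$; this makes the conditional probabilities $\Prob(Y=1|H=h,X=\xi)$ and $\Prob(Y=1|H=h,X=0)$ depend on $h$ only through, respectively, the ratio $|\xi|$ (to $\rho$) and $\rho$ alone, decoupling the $h$-dependence in a way that lets the expectation over $P_H$ be carried out cleanly.

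First I would fix the quantizer threshold at $\rho|h|$ and write out the two conditional crossover probabilities explicitly in terms of Marcum's $Q$-function (or, more simply, bound them using the tail of the Rician/Gaussian distribution exactly as in the PPM analysis of Section~\ref{sec:ppm}). The structure is the same asymmetric on-off story: with $X=0$ the output magnitude is Rayleigh with parameter $\sigma$, so $\Prob(Y=1|H=h,X=0) = \exp(-\rho^2|h|^2/(2\sigma^2))$ — wait, more carefully, $\Prob(Y=1|H=h,X=0)=\Prob(|Z|\ge \rho|h|)=e^{-\rho^2|h|^2/(2\sigma^2)}$ for circularly-symmetric complex $Z$ of variance $\sigma^2$ per the two real dimensions — whereas with $X=\xi$ the mean is pushed out to $|h\xi|$, and by choosing $|\xi|$ comparable to but slightly above $\rho$ we can force $\Prob(Y=0|H=h,X=\xi)$ to be a fixed small $\eps$ (as in \eqref{eq:doby_41}) uniformly in $h$, since the Rician deviation from its mean scales with $\sigma$, not with $|h|$. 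Then the conditional relative entropy $D\bigl(P_{Y|H=h,X=\xi}\bigm\|P_{Y|H=h,X=0}\bigr)$ is dominated, for small crossover probabilities, by the term $(1-\eps)\log\bigl((1-\eps)/\Prob(Y=1|H=h,X=0)\bigr)\approx (1-\eps)\,\rho^2|h|^2/(2\sigma^2)$. Taking the expectation over $P_H$ replaces $|h|^2$ by $\E{|H|^2}=1$, giving roughly $(1-\eps)\rho^2/(2\sigma^2)$, and dividing by $|\xi|^2\approx\rho^2$ yields $(1-\eps)/(2\sigma^2)$ — hmm, this is off by a factor of two from the target $1/\sigma^2$.

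The factor-of-two reconciliation is exactly the point where I must be careful, and I expect it to be the main obstacle. The resolution is that on the complex channel the "energy" $|\xi|^2$ already buys twice the relative-entropy slope compared to the naive real computation, because the complex Gaussian has twice the degrees of freedom; concretely, $\Prob(|Z|\ge r)=e^{-r^2/\sigma^2}$ for a circularly-symmetric complex $Z$ with $\E{|Z|^2}=\sigma^2$ (each real part has variance $\sigma^2/2$), so the exponent is $\rho^2|h|^2/\sigma^2$, not $\rho^2|h|^2/(2\sigma^2)$, and the ratio comes out to $(1-\eps)/\sigma^2$. Letting $\rho\to\infty$ (equivalently $\const{P}\downarrow 0$ with $\xi^2=\const{E}\to\infty$ while keeping $\E{X^2}=\const{P}$ via on-off keying) and then $\eps\downarrow 0$ drives the bound to $1/\sigma^2$, matching \eqref{eq:CUC_unquant}. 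The remaining routine steps are: verifying that the neglected crossover terms (the $\Prob(Y=0|X=\xi)=\eps$ term and the lower-order pieces of the relative entropy) are indeed $o(\rho^2)$ after integration over $P_H$ — this needs a uniform-in-$h$ tail bound plus the finiteness of $\E{|H|^2}$ and $\E{\log|H|}$-type moments of the Rayleigh distribution — and confirming that the chosen $\rho(\const{P})$ and $\xi(\const{P})$ are jointly admissible, i.e.\ that the linear-in-$|H|$ threshold is a legitimate quantizer depending only on $H$ and $\const{P}$, as the theorem statement permits. Finally, one may note (as in Section~\ref{sec:ppm}) that the same conclusion follows from a direct PPM argument with a per-realization threshold $\rho|H_k|$, bypassing \eqref{eq:CUC_coherent} entirely.
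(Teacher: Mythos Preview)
Your approach coincides with the paper's: restrict to radial quantizers with threshold proportional to $|H|$ (the paper writes it as $\Upsilon=\mu|h||\xi|$ for fixed $0<\mu<1$, i.e.\ your $\rho=\mu|\xi|$), lower-bound the conditional relative entropy by its dominant term $\Prob(Y=1|H=h,X=\xi)\cdot\Upsilon^2/\sigma^2$, average over $H$, divide by $|\xi|^2$, send $|\xi|\to\infty$ and then $\mu\uparrow 1$.

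One correction: you cannot force $\Prob(Y=0\mid H=h,X=\xi)$ to equal a fixed $\eps$ \emph{uniformly in $h$}. With threshold $\rho|h|$ and signal mean $|h||\xi|$, the miss probability genuinely depends on $|h|$ (roughly $Q\bigl(\sqrt{2}\,|h|(|\xi|-\rho)/\sigma\bigr)$ for large $|h|$, and $\approx\rho^2|h|^2/\sigma^2$ for small $|h|$); no single choice of $|\xi|$ makes it constant. The paper never attempts this: it keeps the $h$-dependence explicit, expresses $\Prob(Y=1|H=h,X=\xi)$ via the Marcum $Q_1$-function, applies a closed-form lower bound on $Q_1$, and then observes that the resulting integrand is dominated by $|H|^2$ so that $\E{\Prob(Y=1|H,X=\xi)\,|H|^2}\to\E{|H|^2}=1$ as $|\xi|\to\infty$. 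Replace your PPM-style uniform-$\eps$ step by this pointwise-in-$h$ argument and the proof goes through exactly as the paper's.
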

\begin{proof}
See Section~\ref{sub:proof_prop7}.
\end{proof}
The assumption that the fading $H$ is Gaussian is not essential. In fact, Theorem~\ref{prop:coherent} holds for every fading distribution having unit variance.

\subsection{Noncoherent Fading Channels}

Using the same arguments as in Section~\ref{sec:channel}, it can be
shown that in the noncoherent case
\begin{equation}
\dot{C}(0,\set{D}) = \sup_{\xi\neq 0} \frac{D\bigl(P_{Y|X=\xi}\bigm\| P_{Y|X=0} \bigr)}{|\xi|^2}
\end{equation}
and
\begin{equation}
\label{eq:CUC_noncoherent}
\dot{C}(0) = \sup_{\xi\neq 0, \set{D}} \frac{D\bigl(P_{Y|X=\xi}\bigm\| P_{Y|X=0} \bigr)}{|\xi|^2}.
\end{equation}
Since the capacity per unit-energy of the unquantized Rayleigh-fading channel equals $1/\sigma^2$ irrespective of whether the channel is coherent or not \cite{lapidothshamai02,verdu02}, it follows from the Data Processing Inequality that \eqref{eq:CUC_unquant} holds also in the noncoherent case.

The capacity per unit-energy \eqref{eq:prop7} of the coherent channel with one-bit output quantization is achieved using binary on-off keying where the nonzero mass point tends to infinity as the SNR tends to zero. This result
might mislead one to think that \eqref{eq:prop7} also holds in the noncoherent case. Indeed, in the absence of a quantizer,
binary on-off keying with diverging nonzero mass point achieves the capacity per
unit-energy $1/\sigma^2$ irrespective of whether the receiver is cognizant of the fading realization or not \cite{verdu02,lapidothshamai02}. It
might therefore seem plausible that also in the noncoherent case
quantizing the channel output with a one-bit quantizer would cause no loss
in the capacity per unit-energy. But this is not the
case:

%We restrict ourselves to radial  quantization regions of the form
%\begin{equation}
%\label{eq:circsymthres}
%\set{D} = \bigl\{\tilde{y}\in\Complex\colon |\tilde{y}|\geq\Upsilon\bigr\}, \quad \textnormal{for some $\Upsilon> 0$.}
%\end{equation}
%Such regions have the practical advantage of not requiring 
%knowledge of the phase of $\tilde{y}$. Furthermore, we show in Section~\ref{sub:proof_prop8} that, in the noncoherent case, such regions are optimal in the sense
%that they maximize the relative entropy on the RHS of
%\eqref{eq:CUC_noncoherent} for every $\xi\neq 0$. In the coherent case
%such regions need not be optimal in the above sense. However, they
%suffice to achieve the capacity per unit-energy.

\begin{theorem}[Noncoherent Case]
\label{prop:noncoherent}
For the noncoherent Rayleigh-fading channel with one-bit output quantization
\begin{equation}
\dot{C}(0) < \frac{1}{\sigma^2}.
\end{equation}
\end{theorem}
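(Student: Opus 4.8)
The plan is to show that the supremum in \eqref{eq:CUC_noncoherent} is strictly less than $1/\sigma^2$ by establishing a strict inequality in the Data Processing Inequality that holds \emph{uniformly} over all quantizers $\set{D}$ and all nonzero input levels $\xi$. The starting point is the observation that, conditioned on $X=\xi$, the unquantized output is $\tilde Y = \xi H + Z$, which is a zero-mean circularly-symmetric complex Gaussian with variance $|\xi|^2+\sigma^2$; conditioned on $X=0$ it is zero-mean circularly-symmetric complex Gaussian with variance $\sigma^2$. Thus the two hypotheses differ only in scale, and for a one-bit quantizer specified by $\set{D}$ the relevant probabilities are $p_\xi(\set{D}) = \Prob(\tilde Y\in\set{D}\mid X=\xi)$ and $p_0(\set{D}) = \Prob(\tilde Y\in\set{D}\mid X=0)$, with $D(P_{Y|X=\xi}\|P_{Y|X=0}) = d\bigl(p_\xi(\set{D})\,\|\,p_0(\set{D})\bigr)$, where $d(\cdot\|\cdot)$ is the binary relative entropy.

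First I would reduce to a one-parameter family of ``extremal'' quantizers. Writing $R = |\tilde Y|^2/\sigma^2$, under $X=0$ the variable $R$ is exponential with mean $1$, and under $X=\xi$ it is exponential with mean $1+s$ where $s \triangleq |\xi|^2/\sigma^2$. For \emph{any} Borel set $\set{D}\subseteq\Complex$, the pair $(p_\xi,p_0)$ is dominated (in the sense of making $d(p_\xi\|p_0)$ largest for given $p_0$, or equivalently achieving the Neyman--Pearson boundary) by a set of the form $\{|\tilde y|^2 \ge \tau\}$ or $\{|\tilde y|^2 \le \tau\}$, since the likelihood ratio $\ud P_{\tilde Y|X=\xi}/\ud P_{\tilde Y|X=0}$ is monotonic in $|\tilde y|^2$. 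Hence it suffices to bound, uniformly in $s>0$ and $\tau\ge 0$, the ratio
\[
  \frac{d\bigl(e^{-\tau/(1+s)}\,\big\|\,e^{-\tau}\bigr)}{s}
\]
and its complementary (lower-tail) version, and to show each is bounded away from $1/\sigma^2 \cdot \sigma^2 = 1$ — i.e., strictly below $1$ — as $\tau$ and $s$ range over $[0,\infty)\times(0,\infty)$. (The unquantized bound $1/\sigma^2$ corresponds to the normalized value $1$.) The key asymptotic regime is $s\downarrow 0$ with $\tau\to\infty$ suitably coupled (this is the on-off-keying / flash-signaling regime that is known to saturate the bound on the coherent and unquantized channels), so the heart of the argument is a careful expansion of the displayed ratio in that regime.

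The main obstacle is precisely this coupled limit: one must show that even the best scaling $\tau = \tau(s)$ yields $\limsup_{s\downarrow 0} d\bigl(e^{-\tau/(1+s)}\|e^{-\tau}\bigr)/s \le \kappa$ for some explicit $\kappa<1$, and then argue compactness/continuity to upgrade this from a $\limsup$ statement to a genuine uniform strict bound (for $s$ bounded away from $0$ the ratio is continuous and the naive DPI gives $<1$ pointwise, but one needs it bounded away from $1$, which follows once the $s\downarrow 0$ behavior is controlled). I would parametrize $\tau = (1+s)\log(1/q)$ so that $p_\xi = q$ and $p_0 = q^{1+s}$, reducing the ratio to $d(q\|q^{1+s})/s$ with $q\in(0,1)$, $s>0$; expanding $q^{1+s} = q\,(1 + s\log q + O(s^2))$ and using $d(q\|q(1+\delta)) = \tfrac{\delta^2}{2}\,\tfrac{1-q}{q} + O(\delta^3)\cdot(\text{small})$ for small $\delta$ — but with $\delta = s\log q$ not necessarily small when $q\to 0$ — one is led to study $\sup_{q\in(0,1)} (1-q)\,q\,(\log q)^2$-type quantities, together with the exact (non-asymptotic) binary-KL expression when $|\log q|$ and $s$ are comparable. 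A clean route is to bound $d(q\|q^{1+s})$ above by combining $\log(1/q^{1+s}) - \log(1/q) = s\log(1/q)$ with the elementary inequality $d(a\|b)\le (a-b)\log(a/b)/\min\{a,b\}$ (or a sharper convexity bound), extract the factor $s$, and then maximize the remaining expression over $q\in(0,1)$, showing the supremum is a constant strictly less than $1$ — which is exactly the quantitative loss claimed. Finally, since radial quantizers of lower-tail type ($\{|\tilde y|^2\le\tau\}$) are easily seen to give ratio bounded away from $1$ by the same method, taking the max over the two families and over $s$ completes the proof.
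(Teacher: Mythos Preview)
Your reduction to radial quantizers via Neyman--Pearson is correct and is exactly what the paper does. But you have misidentified the critical regime, and this is a genuine gap. With $s=|\xi|^2/\sigma^2$, the Data Processing bound from the unquantized channel reads
\[
\frac{D\bigl(P_{Y|X=\xi}\bigm\|P_{Y|X=0}\bigr)}{|\xi|^2}\;\le\;\frac{1}{\sigma^2}\Bigl(1-\tfrac{\log(1+s)}{s}\Bigr),
\]
which tends to $0$ as $s\downarrow 0$ but to $1/\sigma^2$ as $s\to\infty$. So the dangerous regime is $s\to\infty$, i.e.\ $|\xi|\to\infty$, and this is precisely the flash-signaling regime: when the average power $\const{P}$ vanishes, the nonzero mass point $\xi$ \emph{diverges}, hence $s\to\infty$, not $s\downarrow 0$. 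Your compactness argument (``for $s$ bounded away from $0$ the naive DPI gives $<1$ pointwise, so uniformity follows once $s\downarrow 0$ is controlled'') therefore fails: it is the $s\to\infty$ end of the interval where the DPI bound degenerates to $1$, and your small-$s$ expansion says nothing there.

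The paper handles exactly this: for bounded $|\xi|$ it invokes the DPI bound above, and for $|\xi|\to\infty$ it computes directly. With $p_\xi=e^{-\Upsilon^2/(|\xi|^2+\sigma^2)}$ and $p_0=e^{-\Upsilon^2/\sigma^2}$ one gets $D\le \tfrac{\Upsilon^2}{\sigma^2}e^{-\Upsilon^2/(|\xi|^2+\sigma^2)}+\tfrac{1}{e}$, and maximizing over $\Upsilon$ yields $D\le |\xi|^2/(e\sigma^2)+2/e$, hence $\limsup_{|\xi|\to\infty}D/|\xi|^2\le 1/(e\sigma^2)<1/\sigma^2$. In your $(q,s)$ parametrization this is simply the computation $\lim_{s\to\infty}d(q\|q^{1+s})/s=q\log(1/q)$, whose maximum over $q\in(0,1)$ is $1/e$. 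So your framework can be made to work, but you must redirect the analysis to $s\to\infty$; the small-$s$ expansion and the ``$\sup_q(1-q)q(\log q)^2$'' you propose are not the relevant quantities.
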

\begin{proof}
See Section~\ref{sub:proof_prop8}.
\end{proof}

The case where the real and imaginary parts of the fading channel's output are quantized separately using a one-bit quantizer for each was studied, e.g., in  \cite{mezghaninossek07}--\nocite{mezghaninossek08}\nocite{mezghaninossek09}\nocite{kronefettweis10}\cite{kochlapidoth12_1}. However, in \cite{mezghaninossek07}--\nocite{mezghaninossek08}\nocite{mezghaninossek09}\cite{kronefettweis10} only symmetric threshold quantizers are considered.

\section{Proof of Theorem~\ref{thm:amos}}
\label{sec:dobi_thmamos}
We prove Theorem~\ref{thm:amos} in five steps:
\begin{enumerate}
\item We first show that for any given maximal-allowed average-power $\const{P}$ and any Borel set $\set{D}$, the supremum in~\eqref{eq:def_C_P_D} defining $C(\const{P},\set{D})$ is achieved by some input distribution that is concentrated on at most three points (Section~\ref{sub:dobi_3MP}).
\item We next show that for every three-mass-points input distribution, the supremum over all quantizers can be replaced with the supremum over all threshold quantizers and all quantizers whose quantization region consists of a finite interval (Section~\ref{sec:KreinMilman}).
\item We continue by showing that the supremum in \eqref{eq:capacity} defining $C(\const{P})$ is achieved (Section~\ref{sub:dobi_sup=max}).
\item We then show that threshold quantizers are optimal by demonstrating that quantization regions consisting of a finite interval are suboptimal (Section~\ref{sub:dobi_threshold}).
\item We finally show that the capacity-achieving input distribution must be centered and must satisfy the average-power constraint with equality (Section~\ref{sub:dobi_input}).
\end{enumerate}

\subsection{Input Distributions Consisting of Three Mass Points}
\label{sub:dobi_3MP}
Generalizing the proof of Theorem~1 in \cite{singhdabeermadhow09_2} to arbitrary quantizers, we prove that for every fixed quantizer $\set{D}$ and maximal-allowed
average-power $\const{P}$, the capacity $C(\const{P},\set{D})$ is
achieved by an input distribution consisting of three (or fewer) mass
points. To this end, we first argue that we can introduce an additional
peak-power constraint without reducing capacity, provided that we
allow the maximal-allowed peak-power to tend to infinity. Thus, we
show that $C(\const{P}, \set{D})$, which is defined in
\eqref{eq:def_C_P_D} without a peak-power constraint, can also be
expressed as
\begin{IEEEeqnarray}{lCl}
C(\const{P}, \set{D}) = \lim_{\const{A} \to \infty} 
\sup_{\substack{\E{X^2}\leq\const{P},\\ |X|\leq\const{A}}} I\bigl(P_X,W_{\set{D}}\bigr) \label{eq:am_proof_prop1_compact}
\end{IEEEeqnarray}
where $W_{\set{D}}$ denotes the channel law corresponding to
the quantization region $\set{D}$, and where
$I\bigl(P_X,W_{\set{D}}\bigr)$ denotes the mutual information of a
channel with law $W_{\set{D}}$ when its input is distributed according
to $P_X$. Clearly, the RHS of \eqref{eq:am_proof_prop1_compact} cannot
exceed its LHS, because imposing an additional peak-power constraint
cannot increase capacity. It remains to prove that the LHS cannot
exceed the RHS. 

By Fano's Inequality \cite[Th.~2.11.1]{coverthomas91} and the Data
Processing Inequality, we have that,
for every blocklength $n$, every encoder $m \mapsto
\bigl(x_1(m),\ldots,x_n(m)\bigr)$ of rate $R=\frac{\log\const{M}}{n}$
that satisfies the average-power constraint, and every quantization
region $\set{D}$, the probability of error is lower-bounded by
\cite[Sec.~8.9]{coverthomas91}
\begin{equation}
\label{eq:proof_prop1_converse}
\Prob(\hat{M}\neq M) \geq 1 - \frac{1}{nR} \sum_{k=1}^n I\bigl(X_k(M);Y_k\bigr)-\frac{1}{n R}.
\end{equation}
Let $\const{A}_{n}$ be the largest magnitude of the symbols that the encoder
can produce
\begin{equation}
\label{eq:am_fog130}
\const{A}_{n} \triangleq \max_{\substack{1\leq k \leq n,\\1\leq m
    \leq\const{M}}} |x_{k}(m)|
\end{equation}
so
\begin{equation}
\label{eq:am_fog30}
|x_k(m)| \leq \const{A}_{n}, \quad \bigl(k=1,2,\ldots,n,\,m=1,2,\ldots,\const{M}\bigr).
\end{equation}
With this notation, we have for every blocklength~$n$ and every quantizer~$\set{D}$,
\begin{align}
\frac{1}{n}\sum_{k=1}^n I\bigl(X_k(M);Y_k\bigr) & \leq 
\sup_{\substack{ \E{X^2}\leq\const{P},\\
    |X|\leq\const{A}_n}} I\bigl(P_X,W_{\set{D}}\bigr) \nonumber\\
& \leq \sup_{\const{A}>0} \sup_{\substack{
    \E{X^2}\leq\const{P},\\|X|\leq\const{A}}}
I\bigl(P_X,W_{\set{D}}\bigr) \label{eq:am_fog40}
\end{align}
where the first inequality follows from \eqref{eq:am_fog30} and by
the concavity of
\begin{equation*}
  \const{P} \mapsto \sup_{\substack{ \E{X^2}\leq\const{P},\\
    |X|\leq\const{A}_n}} I\bigl(P_X,W_{\set{D}}\bigr).
\end{equation*}
Thus, the RHS of \eqref{eq:proof_prop1_converse} is
bounded away from zero whenever $R$ exceeds the RHS of
\eqref{eq:am_fog40}, and the inequality 
\begin{equation}
  C(\const{P}, \set{D}) \leq \sup_{\const{A}>0} \sup_{\substack{
    \E{X^2}\leq\const{P},\\ |X|\leq\const{A}}}
I\bigl(P_X,W_{\set{D}}\bigr) \label{eq:dobi_2012}
\end{equation}
is established. Since the inner supremum on the RHS of \eqref{eq:dobi_2012} is monotonically nondecreasing in $\const{A}$, we can replace the outer supremum by a limit and thus establish \eqref{eq:am_proof_prop1_compact}. 

%%%%%%

Introducing a peak-power constraint in
\eqref{eq:am_proof_prop1_compact} allows us next to establish the
existence of a capacity-achieving input distribution of three mass
points using Dubins's Theorem as follows. Recall that
by~\eqref{eq:am_proof_prop1_compact}
\begin{equation}
  C(\const{P}, \set{D})= \lim_{\const{A} \to \infty} 
  C_{\set{D},\const{A}}(\const{P})
\end{equation}
where $C_{\set{D},\const{A}}(\const{P})$ denotes the capacity of the
memoryless channel $\Prob\bigl(\tilde{Y}\in\set{D}\bigm|X=x\bigr)$
with the input $X$ taking values in the interval
$[-\const{A},\const{A}]$ and with the binary output~$Y$:
\begin{equation}
  \label{eq:amos_def_CDAP}
C_{\set{D},\const{A}}(\const{P}) \triangleq \sup_{\substack{
    \E{X^2}\leq\const{P},\\ |X|\leq\const{A}}}
I\bigl(P_X,W_{\set{D}}\bigr).
\end{equation}
Proceeding along the lines of \cite[Sec.~II-C]{witsenhausen80} but
accounting for the additional average-power constraint, it can be
shown that $C_{\set{D},\const{A}}(\const{P})$ is achieved by an input
distribution consisting of three mass points.
%the capacity achieving input distribution is discrete with
%at most three mass points.  
Indeed, since $\const{P}\mapsto C_{\set{D},\const{A}}(\const{P})$ is concave it is continuous, so there exists some
$\const{P}' \leq \const{P}$ such that 
\begin{equation}
  \label{eq:pad10}
  C_{\set{D},\const{A}}(\const{P}) = \sup_{\substack{
    \E{X^2} = \const{P}',\\|X|\leq\const{A}}}
I\bigl(P_X,W_{\set{D}}\bigr).
\end{equation}
The input distribution achieving $C_{\set{D},\const{A}}(\const{P})$
must be concentrated on the interval $[-\const{A}, \const{A}]$
and additionally satisfy
\begin{equation}
\label{eq:a_hyperplane2}
\int x^2 \d P_X(x) = \const{P}'.
\end{equation}
The arguments in \cite[Sec.~II-C]{witsenhausen80} thus go through with
the set~$A$ in \cite[Sec.~II-C]{witsenhausen80} replaced by the set of
input distributions that induce the given output distribution and that
additionally lie on the hyperplane \eqref{eq:a_hyperplane2}. 

% Consequently, the set $\set{V}$ of input distributions satisfying \eqref{eq:hyperplane1} and \eqref{eq:hyperplane2} is the intersection of $\set{S}$ with the two hyperplanes \eqref{eq:hyperplane1} and \eqref{eq:hyperplane2}, and it follows from Dubins's Theorem that every extreme point of $\set{V}$ is a convex combination of three or fewer extreme points of $\set{S}$. Since $I(X;Y)$ is linear in $P_X(\cdot)\in\set{V}$, it attains its maximum at an extreme point of $\set{V}$, implying that the capacity-achieving input distribution is discrete with at most three mass points.

Having established that under an additional peak-power constraint
capacity is achieved by a three-mass-points input distribution, we now
study what happens to these three mass points as the allowed
peak-power tends to infinity.  We thus study how the 
three mass points at locations
\begin{equation*}
  \bfxi = (\xi_{\text{L}}, \xi_{\text{M}}, \xi_{\text{R}})
\end{equation*}
%(where $\xi_{\text{L}} < \xi_{\text{M}} < \xi_{\text{R}}$) 
with corresponding masses
\begin{equation*}
  \mathbf{p} = (p_{\text{L}}, p_{\text{M}}, p_{\text{R}})
\end{equation*}
behave as $\const{A}$ tends to infinity.

By possibly considering a subsequence of peak powers, we can assume
that, as $\const{A}$ tends to infinity, $\bfxi$ converges to some
$\bfxi^{\star} = (\xi_{\text{L}}^{\star}, \xi_{\text{M}}^{\star},
\xi_{\text{R}}^{\star})$ whose components are on the extended real line
$\Reals \cup \{\pm \infty\}$. Likewise we can assume that $\mathbf{p}$
converges to some probability vector $\mathbf{p}^{\star}$.  Since the
input distributions must satisfy the average-power constraint, if any
of the components of $\bfxi^{\star}$ is $\pm \infty$, then the
corresponding component of $\mathbf{p}^{\star}$ must be zero.
By Lemma~\ref{lem:amos_cont} (Appendix~\ref{app:amos}), $\Prob(\tilde{Y}\in\set{D}|X=\xi_{\ell})$ converges to $\Prob(\tilde{Y}\in\set{D}|X=\xi_{\ell}^{\star})$ whenever $\xi_{\ell}^{\star}\in\Reals$, and  the continuity of
\begin{IEEEeqnarray}{lCl}
C_{\set{D},\const{A}}(\const{P}) & = & H_b\Biggl(\sum_{\ell \in
  \{\text{L}, \text{M}, \text{R}\}} p_{\ell} \, \Prob\bigl(\tilde{Y}\in\set{D}\bigm|X=\xi_{\ell}\bigr)\Biggr) \nonumber\\
& & {} - \sum_{\ell \in
  \{\text{L}, \text{M}, \text{R}\}} p_{\ell} \, H_b\Bigl(\Prob\bigl(\tilde{Y}\in\set{D}\bigm|X=\xi_{\ell}\bigr)\Bigr)\nonumber
\end{IEEEeqnarray}
demonstrates that $\lim_{\const{A} \to \infty} C_{\set{D},
  \const{A}}(\const{P})$ (which equals $C(\const{P}, \set{D})$
by~\eqref{eq:am_proof_prop1_compact}) equals the
mutual information corresponding to $(\mathbf{p}^{\star}, \bfxi^{\star})$
provided that in computing the latter the mass points of zero mass are
ignored. Since the mass points at $\pm \infty$ are of zero mass (by
the average-power constraint), those are ignored, and we conclude that
$C(\const{P}, \set{D})$ is achieved by (at most) three
\emph{finite} mass point. For sufficiently large $\const{A}$
(exceeding the largest of these mass points) the peak-power constraint
is inactive.

\subsection{Quantizers for Three-Mass-Points Input Distributions}
\label{sec:KreinMilman}
Having established that for any quantizer $\set{D}$ the capacity
$C(\const{P}, \set{D})$ is achieved by a three-mass-points
input distribution, we now fix some arbitrary three-mass-points
input distribution\footnote{Every two-mass-points distribution can be viewed as a three-mass-points distribution with one of the masses being zero.} $P_{X}$ concentrated at $(\xi_{1}, \xi_{2},
\xi_{3})$ and study the quantizer that maximizes the mutual
information $I\bigl(P_X,W_{\set{D}}\bigr)$ corresponding to it. (Without loss of generality, we assume that $\xi_1\neq\xi_2$, $\xi_1\neq\xi_3$, and $\xi_2\neq\xi_3$.) We
will show that when $P_{X}$ is a three-mass-points input distribution, we have
\begin{equation}
  \label{eq:dobi_proof_thm2_1}
  \sup_{\set{D}} I\bigl(P_X,W_{\set{D}}\bigr) = \sup_{\Upsilon_1\leq\Upsilon_2} I\bigl(P_X,W_{\set{D}(\Upsilon_1,\Upsilon_2)}\bigr)
\end{equation}
where the quantizer $\set{D}(\Upsilon_1,\Upsilon_2)$ is defined as
\begin{equation}
\label{eq:three_regions}
\set{D}(\Upsilon_1,\Upsilon_2) \triangleq \{\tilde{y}\in\Reals\colon \Upsilon_1\leq \tilde{y}\leq\Upsilon_2\}, \quad \Upsilon_1\leq\Upsilon_2
\end{equation}
with
\begin{subequations}
\label{eq:amos_rays}
\begin{IEEEeqnarray}{lCl}
\set{D}(-\infty,\Upsilon_2) & \triangleq & \{\tilde{y}\in\Reals\colon \tilde{y}\leq\Upsilon_2\}, \quad \Upsilon_2\in\Reals \label{eq:D1}\\
\set{D}(\Upsilon_1,\infty)  & \triangleq & \{\tilde{y}\in\Reals\colon \tilde{y}\geq\Upsilon_1\}, \quad \Upsilon_1\in\Reals \label{eq:D2}\\
\set{D}(-\infty,\infty) & \triangleq & \Reals \label{eq:D3} \\
\set{D}(-\infty,-\infty) & = & \set{D}(\infty,\infty) \triangleq \varnothing.\label{eq:D4}
\end{IEEEeqnarray}
\end{subequations}
(Here $\varnothing$ denotes the empty set.) Needless to say, the case $\Upsilon_1=\Upsilon_2$ and the forms \eqref{eq:D3} and \eqref{eq:D4} yield zero
mutual information and are thus uninteresting.

Define
\begin{IEEEeqnarray}{ll}
\set{W} \triangleq \Bigl\{ & (\omega_1,\omega_2,\omega_3)\in [0,1]^3\colon \nonumber\\
& \,\omega_{\ell}=\Prob\bigl(\tilde{Y}\in\set{D}\bigm| X=\xi_{\ell}\bigr), \set{D}\subset\Reals\Bigr\}
\end{IEEEeqnarray}
as the set of possible channel laws that different quantizers
can induce for the inputs $(\xi_{1}, \xi_{2}, \xi_{3})$, and let
$\overline{\set{W}}$ denote the closure of the convex hull of
$\set{W}$. With this notation
\begin{IEEEeqnarray}{lCl}
  \sup_{\set{D}} I(P_{X}, W_{\set{D}}) & = & 
  \sup_{W\in\set{W}} I\bigl(P_X,W\bigr)\nonumber \\
  & \leq & \sup_{W\in\overline{\set{W}}}
  I\bigl(P_X,W\bigr) \label{eq:dobi_proof_thm2_2}
\end{IEEEeqnarray}
where the second step follows because $\set{W}\subseteq
\overline{\set{W}}$. Recall that an extreme point of
$\overline{\set{W}}$ is a channel in $\overline{\set{W}}$ that cannot
be written as a convex combination of two different
channels in~$\overline{\set{W}}$. By the Krein-Milman Theorem
\cite[Cor.~18.5.1]{rockafellar70}, every channel law
$W\in\overline{\set{W}}$ can be written as a convex combination of
extreme points of $\overline{\set{W}}$. Since mutual information is
convex in the channel law (when the input distribution is held fixed) \cite[Th.~2.7.4]{coverthomas91},
it follows that on the RHS of~\eqref{eq:dobi_proof_thm2_2} we can
replace the supremum over the set $\overline{\set{W}}$ with the
supremum over its extreme points.

We next show that the extreme points of $\overline{\set{W}}$
correspond to quantizers of the
form~\eqref{eq:three_regions}. Once we show this, it will follow
that~\eqref{eq:dobi_proof_thm2_2} holds with equality, because these
extreme points of $\overline{\set{W}}$ are in fact in $\set{W}$. This will prove \eqref{eq:dobi_proof_thm2_1}.\footnote{Note that $\overline{\set{W}}$ is the set of possible channel laws that different quantizers can induce for the inputs $(\xi_1,\xi_2,\xi_3)$, provided that we allow for randomized quantization rules. It thus follows that \eqref{eq:dobi_proof_thm2_1} continues to hold if on the LHS, instead of maximizing over all deterministic quantizers $\set{D}$, we maximize over all probability distributions $P_{Y|\tilde{Y}}$ with $Y$ binary.}

To prove that the extreme points of $\overline{\set{W}}$ are indeed the channel laws corresponding to quantizers of the form \eqref{eq:three_regions}, we consider the
\emph{support function} of $\overline{\set{W}}$ \cite[Sec.~13]{rockafellar70}
\begin{equation}
f(\bflambda) \triangleq \sup_{(\omega_1,\omega_2,\omega_3)\in\overline{\set{W}}} \{\lambda_1 \,\omega_1 + \lambda_2 \,\omega_2 + \lambda_3\,\omega_3\}
\end{equation}
for $\bflambda=(\lambda_1,\lambda_2,\lambda_3)\in\Reals^3$. Since $\overline{\set{W}}$ is the closure of all convex combinations of the elements of $\set{W}$ \cite[Th.~2.3]{rockafellar70}, the support function of $\overline{\set{W}}$ is the same as that of $\set{W}$ and
\begin{IEEEeqnarray}{lCl}
f(\bflambda) %& = & \sup_{0\leq\mu\leq 1}\sup_{\set{D}_1,\set{D}_2} \Bigl\{\mu \bigl[\lambda_1\omega_1(\set{D}_1) +\lambda_2 \omega_2(\set{D}_1)+\lambda_3 \omega_3(\set{D}_1)\bigr] \nonumber\\
%& & {} + (1-\mu) \bigl[\lambda_1\omega_1(\set{D}_2) +\lambda_2 \omega_2(\set{D}_2)+\lambda_3 \omega_3(\set{D}_2) \bigr] \Bigr\} \nonumber\\
& = & \sup_{\set{D}} \bigl\{\lambda_1 \,\omega_1(\set{D}) + \lambda_2 \,\omega_2(\set{D}) + \lambda_3\,\omega_3(\set{D})\bigr\}\label{eq:supp}
\end{IEEEeqnarray}
where
\begin{equation}
\omega_{\ell}(\set{D}) \triangleq \Prob\bigl(\tilde{Y}\in\set{D}\bigm|X=\xi_{\ell}\bigr), \quad \ell=1,2,3.
\end{equation}
We rewrite \eqref{eq:supp} as
\begin{equation}
\label{eq:supp_int}
f(\bflambda) = \sup_{\set{D}} \frac{1}{\sqrt{2\pi\sigma^2}} \int_{\set{D}} g_{\bflambda}(\tilde{y})\d\tilde{y}
\end{equation}
where
\begin{IEEEeqnarray}{lCl}
g_{\bflambda}(\tilde{y}) & \triangleq & \lambda_1 e^{-\frac{(\tilde{y}-\xi_1)^2}{2\sigma^2}} \nonumber\\
& & {} + \lambda_2 e^{-\frac{(\tilde{y}-\xi_2)^2}{2\sigma^2}} + \lambda_3 e^{-\frac{(\tilde{y}-\xi_3)^2}{2\sigma^2}}, \quad \tilde{y}\in\Reals.\IEEEeqnarraynumspace
\end{IEEEeqnarray}
The integral on the RHS of \eqref{eq:supp_int} is maximized when $\set{D}$ is the set 
\begin{equation}
\set{D}^{\star}(\bflambda)=\bigl\{\tilde{y}\in\Reals\colon
g_{\bflambda}(\tilde{y})\geq 0\bigr\}.
\end{equation} 
The structure of $\set{D}^{\star}(\bflambda)$ depends on the
zeros of $g_{\bflambda}(\cdot)$, which we study next.

Our study of the zeros of $g_{\bflambda}(\cdot)$ depends on the signs
of $\lambda_1, \lambda_2,\lambda_3$ and on how many of them are zero. The case where $\lambda_1,
\lambda_2,\lambda_3$ are all zero is trivial, because in this case
$f(\bflambda)$ is zero irrespective of $\set{D}$. We will see that in
all other cases the set $\set{D}$ that achieves $f(\bflambda)$ is
unique up to Lebesgue measure zero. If exactly two $\lambda$'s, say
$\lambda_1$ and $\lambda_2$, are zero, then the set $\set{D}$ that
achieves $f(\bflambda)$ is either $\Reals$ or $\varnothing$, depending
on whether $\lambda_3$ is positive or negative.
We next consider the case where exactly one of the $\lambda$'s, say
$\lambda_3$, is zero. In this case
\begin{equation}
\label{eq:glambda_lambda30}
g_{\bflambda}(\tilde{y}) = \lambda_1 e^{-\frac{(\tilde{y}-\xi_1)^2}{2\sigma^2}} + \lambda_2 e^{-\frac{(\tilde{y}-\xi_2)^2}{2\sigma^2}}, \quad \tilde{y}\in\Reals
\end{equation}
which is either positive (if $\lambda_1>0$ and $\lambda_2>0$), negative (if $\lambda_1<0$ and $\lambda_2<0$), or has a zero at
\begin{equation}
\label{eq:zeros_lambda30}
\tilde{y} = \frac{\xi_1+\xi_2}{2} + \frac{\sigma^2}{\xi_2-\xi_1} \log\biggl|\frac{\lambda_1}{\lambda_2}\biggr|
\end{equation}
(if $\lambda_1$ and $\lambda_2$ have opposite signs). Consequently, if exactly one of the $\lambda$'s is zero, then
the set $\set{D}$ that achieves $f(\bflambda)$ is either the entire real line, the empty set, or a ray, i.e., of the form
$(-\infty, \Upsilon)$ or $(\Upsilon, \infty)$, where $\Upsilon$ is the
RHS of~\eqref{eq:zeros_lambda30}.

We finally turn to the case where all the $\lambda$'s are nonzero. If
they are all of equal sign, then $f(\bflambda)$ has no zeros and the
set $\set{D}$ that maximizes $f(\bflambda)$ is either the entire real
line $\Reals$ or the empty set, depending on whether the $\lambda$'s
are all positive or all negative. It remains to study the case where
the $\lambda$'s are nonzero but not of equal sign. Changing the sign
of all the $\lambda$'s is tantamount to multiplying
$g_{\bflambda}(\cdot)$ by $-1$ and therefore does not change the
locations of the zeros, so we can assume without loss of generality that
one of the $\lambda$'s, say $\lambda_{1}$, is positive and that the
remaining two $\lambda_{2}, \lambda_{3}$ are negative. In this case
\begin{IEEEeqnarray}{lCl}
g_{\bflambda}(\tilde{y}) %& = & \lambda_1 e^{-\frac{(\tilde{y}-\xi_1)^2}{2\sigma^2}} + \lambda_2 e^{-\frac{(\tilde{y}-\xi_2)^2}{2\sigma^2}} + \lambda_3 e^{-\frac{(\tilde{y}-\xi_3)^2}{2\sigma^2}} \nonumber\\
& = & \lambda_1 e^{-\frac{(\tilde{y}-\xi_1)^2}{2\sigma^2}} h_{\bflambda}(\tilde{y}), \quad \tilde{y}\in\Reals
\end{IEEEeqnarray}
where
\begin{IEEEeqnarray}{lCl}
h_{\bflambda}(\tilde{y}) & \triangleq & 1-\biggl|\frac{\lambda_2}{\lambda_1}\biggr| e^{\frac{\xi_1^2-\xi_2^2}{2\sigma^2}}e^{\tilde{y}\frac{\xi_2-\xi_1}{\sigma^2}} \nonumber\\
& & {} - \biggl|\frac{\lambda_3}{\lambda_1}\biggr| e^{\frac{\xi_1^2-\xi_3^2}{2\sigma^2}}e^{\tilde{y}\frac{\xi_3-\xi_1}{\sigma^2}}, \quad \tilde{y}\in\Reals.
\end{IEEEeqnarray}
Note that the zeros of $g_{\bflambda}(\cdot)$
are the same as the zeros of $h_{\bflambda}(\cdot)$. Further note that
$h_{\bflambda}(\cdot)$ is a nonzero analytic function whose second
derivative
\begin{IEEEeqnarray}{lCl}
\frac{\partial^2}{\partial \tilde{y}^2} h_{\bflambda}(\tilde{y}) & = &
-\frac{(\xi_2-\xi_1)^2}{\sigma^4}\biggl|\frac{\lambda_2}{\lambda_1}\biggr|
e^{\frac{\xi_1^2-\xi_2^2}{2\sigma^2}}e^{\tilde{y}\frac{\xi_2-\xi_1}{\sigma^2}}\nonumber\\
&& {} -
\frac{(\xi_3-\xi_1)^2}{\sigma^4}\biggl|\frac{\lambda_3}{\lambda_1}\biggr|
e^{\frac{\xi_1^2-\xi_3^2}{2\sigma^2}}e^{\tilde{y}\frac{\xi_3-\xi_1}{\sigma^2}}, \quad\!\! \tilde{y}\in\Reals\IEEEeqnarraynumspace
\label{eq:amosRolle}
\end{IEEEeqnarray}
 is strictly negative. Consequently, $h_{\bflambda}(\cdot)$---and hence also
$g_{\bflambda}(\cdot)$---can have at most two zeros. (If it had three
or more, then by Rolle's Theorem its derivative would have at least
two zeros, and its second derivative would therefore have a zero in
contradiction to~\eqref{eq:amosRolle}.) If $h_{\bflambda}(\cdot)$ has at most one zero, then the set $\set{D}$ achieving $f(\bflambda)$ is either the entire real line, the empty set, or a ray. If it has two zeros, then $\set{D}$ comprises two disjoint rays or
else a finite interval---either way, $\set{D}$ or its complement is a
finite interval.

We next show that for every $\bflambda\neq \bfzero$ the quantization
region achieving $f(\bflambda)$ is unique up to sets of Lebesgue
measure zero. Let $\set{D}^{\star}(\bflambda)$ be the quantization
region that achieves $f(\bflambda)$, and let $\set{D}_1$
be any other quantization region. Then
\begin{IEEEeqnarray}{lCl}
\IEEEeqnarraymulticol{3}{l}{\int_{\set{D}^{\star}(\bflambda)}g_{\bflambda}(\tilde{y})\d\tilde{y}-\int_{\set{D}_1}g_{\bflambda}(\tilde{y})\d\tilde{y}} \nonumber\\
\quad & = & \int_{\set{D}^{\star}(\bflambda)\cap\set{D}_1^c} g_{\bflambda}(\tilde{y})\d\tilde{y} - \int_{\set{D}_1\cap\set{D}^{\star}(\bflambda)^c}g_{\bflambda}(\tilde{y})\d\tilde{y} \nonumber\\
& \geq & \int_{\set{D}^{\star}(\bflambda)\cap\set{D}_1^c} g_{\bflambda}(\tilde{y})\d\tilde{y} \nonumber\\
& \geq & 0
\end{IEEEeqnarray}
where the second step follows because for every
$\tilde{y}\in\set{D}^{\star}(\bflambda)^c$ we have
$g_{\bflambda}(\tilde{y})<0$; and the last step follows because for
every $\tilde{y}\in\set{D}^{\star}(\bflambda)$ we have
$g_{\bflambda}(\tilde{y})\geq 0$. (Here $\set{A}^c$ denotes the complement of the set $\set{A}$.) Furthermore, since the zeros of
$g_{\bflambda}(\cdot)$ are isolated, it is nonzero almost everywhere,
so the inequalities hold with equality if, and only if,
$\set{D}^{\star}(\bflambda)\cap\set{D}_1^c$ and
$\set{D}_1\cap\set{D}^{\star}(\bflambda)^c$ have both Lebesgue measure
zero.

Because quantizers that differ on a set of Lebesgue measure zero induce identical channel laws,
the uniqueness (up to sets of Lebesgue measure zero) of the set
$\set{D}$ achieving $f(\bflambda)$ (for $\bflambda \neq \bfzero$)
implies that for every $\bflambda\neq \bfzero$ the tuple
$(\omega_1^{\star},\omega_2^{\star},\omega_3^{\star})$ that achieves
$f(\bflambda)$ is unique. 
%Indeed, suppose that both
%$\set{D}^{\star}(\bflambda)$ and $\set{D}_1\subset\Reals$ achieve
%$f(\bflambda)$. Then, by the above argument,
%$\set{D}^{\star}(\bflambda)\cap\set{D}_1^c$ and
%$\set{D}_1\cap\set{D}^{\star}(\bflambda)^c$ must have both Lebesgue
%measure zero, and we obtain for $\ell=1,2,3$
%\begin{IEEEeqnarray}{lCl}
%\IEEEeqnarraymulticol{3}{l}{\Prob\bigl(\tilde{Y}\in\set{D}^{\star}(\bflambda)\bigm|X=\xi_{\ell}\bigr) - \Prob\bigl(\tilde{Y}\in\set{D}_1\bigm|X=\xi_{\ell}\bigr)}\nonumber\\
%\quad\! & = & \int_{\set{D}^{\star}(\bflambda)} \frac{1}{\sqrt{2\pi\sigma^2}} e^{-\frac{(\tilde{y}-\xi_{\ell})^2}{2\sigma^2}}\d\tilde{y} - \int_{\set{D}_1} \frac{1}{\sqrt{2\pi\sigma^2}} e^{-\frac{(\tilde{y}-\xi_{\ell})^2}{2\sigma^2}}\d\tilde{y} \nonumber\\
%& = & \int_{\set{D}^{\star}(\bflambda)\cap\set{D}_1^c} \frac{1}{\sqrt{2\pi\sigma^2}} e^{-\frac{(\tilde{y}-\xi_{\ell})^2}{2\sigma^2}}\d\tilde{y} \nonumber\\
%& & {} - \int_{\set{D}_1\cap\set{D}^{\star}(\bflambda)^c}\frac{1}{\sqrt{2\pi\sigma^2}} e^{-\frac{(\tilde{y}-\xi_{\ell})^2}{2\sigma^2}}\d\tilde{y}\nonumber\\
%& = & 0.\label{eq:uniquetuple}
%\end{IEEEeqnarray}

We next  note that, by \cite[Th.~13.1]{rockafellar70}, every $(\omega_1,\omega_2,\omega_3)\in\overline{\set{W}}$ satisfying
\begin{equation*}
\lambda_1 \omega_1+\lambda_2 \omega_2 + \lambda_3 \omega_3 < f(\bflambda), \quad \textnormal{for every $\bflambda\neq 0$}
\end{equation*}
must be an interior point of $\overline{\set{W}}$. Since an interior point cannot be an extreme point, it follows that every extreme point of a compact convex set achieves the
supremum defining $f(\bflambda)$ at some $\bflambda\neq 0$. Furthermore, since for a
given $\bflambda\neq 0$ the support function $f(\bflambda)$ is achieved
uniquely by a channel law that is induced by a quantizer of the
form~\eqref{eq:three_regions} or their complement, it follows that the
extreme points of $\overline{\set{W}}$ are all achieved by quantizers
of this form or their complement.  Recalling that mutual information is
maximized over $\overline{\set{W}}$ (for a given input distribution)
at an extreme point, and noting that the mutual information corresponding to
the quantizer $\set{D}$ is the same as that corresponding to its
complement, we conclude that---for any fixed three-mass-points input distribution---the supremum over all quantizers can be replaced with the supremum over all quantizers of the form \eqref{eq:three_regions}, thus proving \eqref{eq:dobi_proof_thm2_1}.

\subsection{The Supremum Defining $C(\const{P})$ Is Achieved}
\label{sub:dobi_sup=max}
Having established that to each quantizer the optimal
input distribution is of three mass points, and having established
that to each three-mass-points input distribution the optimal
quantizer is of the form~\eqref{eq:three_regions}, we conclude that we
can express $C(\const{P})$ of \eqref{eq:capacity} as
\begin{equation}
\label{eq:app_supmax}
C(\const{P}) =
\sup_{\substack{(\vect{p},\bfxi)\colon\E{X^2}\leq\const{P},\\\Upsilon_1\leq\Upsilon_2}}
I\bigl(\vect{p},\vect{W}(\Upsilon_1,\Upsilon_2|\bfxi)\bigr)
\end{equation}
where $(\vect{p},\bfxi)$ denotes the three-mass-points distribution of
masses \[\vect{p} = (p_{1}, p_{2}, p_{3}) \in [0,1]^{3}\] and locations
\[\bfxi = (\xi_{1}, \xi_{2}, \xi_{3}) \in \Reals^{3}\] and where $\vect{W}(\Upsilon_1,\Upsilon_2|\bfxi)$ denotes the channel law corresponding to the quantizer
$\set{D}(\Upsilon_{1}, \Upsilon_{2})$ and to the mass points $\xi_{\ell}$, $\ell=1,2,3$:
\begin{equation}
W\bigl(\Upsilon_1,\Upsilon_2 \bigm| \xi_{\ell}\bigr) \triangleq \Prob\bigl(\tilde{Y}\in\set{D}(\Upsilon_1,\Upsilon_2)\bigm| X= \xi_{\ell}\bigr).
\end{equation}
We next show that this supremum is achieved.

By the definition of the supremum, there exists a sequence
$\bigl\{(\vect{p}_i,\bfxi_i,\Upsilon_{1,i},\Upsilon_{2,i}),\,i\in\Naturals\bigr\}$
(where $\Naturals$ denotes the set of positive integers) such that
\begin{equation}
  \label{eq:erasure10}
\lim_{i\to\infty} I\bigl(\vect{p}_i,\vect{W}(\Upsilon_{1,i},\Upsilon_{2,i}|\bfxi_i)\bigr) = C(\const{P}).
\end{equation}
By taking a subsequence (if needed), we may assume without loss of
generality that $\vect{p}_{i}$ converges to some $\vect{p}^{\star}$,
that $\bfxi_{i}$ converges to some $\bfxi^{\star}$ (whose components
may be $\pm \infty$) and that $\Upsilon_{1,i}$ and
$\Upsilon_{2,i}$ converge to $\Upsilon_{1}^{\star}$ and
$\Upsilon_{2}^{\star}$, both of which may be $\pm \infty$.  
From the continuity of the cumulative distribution
function of the Normal distribution, it follows that, whenever
$\xi^{\star}_{\ell}$ is finite,
\begin{IEEEeqnarray}{lCl}
  \IEEEeqnarraymulticol{3}{l}{\lim_{i \to \infty} \Prv{\Upsilon_{1,i} \leq \xi_{\ell,i} + Z
    \leq \Upsilon_{2,i}}} \nonumber\\
    \qquad\qquad\quad & = & \Prv{\Upsilon_{1}^{\star} \leq \xi_{\ell}^{\star} + Z
    \leq \Upsilon_{2}^{\star}}\label{eq:amos_cont_sup}
\end{IEEEeqnarray}
where we recall that $Z$ is a centered Gaussian random variable of positive variance
$\sigma^{2}$.

Since the mass $p_{\ell}^{\star}$ corresponding to nonfinite locations
$\xi_{\ell}^{\star}$ is zero (by the average-power constraint), and
since $p_{\ell,i}$ converges to $p_{\ell}^{\star}$,
\eqref{eq:amos_cont_sup} and the continuity of the binary entropy function allow us to infer that
\begin{IEEEeqnarray}{lCl}
  \IEEEeqnarraymulticol{3}{l}{\lim_{i \to \infty} I\bigl(\vect{p}_i,\vect{W}(\Upsilon_{1,i},\Upsilon_{2,i}|\bfxi_i)\bigr)}\nonumber\\
  \quad & = & \lim_{i \to \infty} \Biggl\{H_b\Biggl(\sum_{\ell=1}^3 p_{\ell,i} \,W(\Upsilon_{1,i},\Upsilon_{2,i} | \xi_{\ell,i})\Biggr) \nonumber\\
  & & \qquad\quad {} - \sum_{\ell=1}^3 p_{\ell,i}\, H_b\bigl(W(\Upsilon_{1,i},\Upsilon_{2,i} | \xi_{\ell,i})\bigr) \Biggr\} \nonumber\\
  & = &  I\bigl(\vect{p}^{\star},\vect{W}(\Upsilon_{1}^{\star},\Upsilon_{2}^{\star}|\bfxi^{\star})\bigr)
\end{IEEEeqnarray}
provided that in computing the mutual information on the LHS of \eqref{eq:sup=max} the mass points of zero mass are ignored. This combines with~\eqref{eq:erasure10} to imply that
\begin{equation}
\label{eq:sup=max}
 I\bigl(\vect{p}^\star,\vect{W}(\Upsilon_{1}^{\star},\Upsilon_{2}^{\star}|\bfxi^\star)\bigr)
  = C(\const{P}).
\end{equation}
Noting that the mass points at $\pm \infty$ are of zero mass and therefore ignored, we conclude that $C(\const{P})$ is achieved by an input distribution of (at most) three \emph{finite} mass points and by a quantizer of the form \eqref{eq:three_regions}.

\subsection{A Threshold Quantizer Is Optimal}
\label{sub:dobi_threshold}
Having established that $C(\const{P})$ is achieved by a
three-mass-points input distribution and a quantizer of the
form~\eqref{eq:three_regions}, we now prove that $C(\const{P})$ is in fact achieved by a three-mass-points input distribution and a threshold quantizer, i.e., a quantizer of the form \eqref{eq:D2}.
Clearly $\Upsilon_{1}$ and $\Upsilon_{2}$ cannot be both nonfinite, as this
would result in zero mutual information, whereas $C(\const{P})$ is
strictly positive whenever $\const{P}$ is positive\footnote{This can be verified by noting that a symmetric threshold quantizer
and an equiprobable $\pm \sqrt{\const{P}}$ input distribution yield
positive mutual information for every positive $\const{P}$, cf.~\eqref{eq:Csym}.}
\begin{equation}
  \label{eq:dobi_proof_threshold_pos}
  C(\const{P}) > 0, \quad \const{P} > 0.
\end{equation}
For the same reason we can assume, without loss of optimality, that $\Upsilon_1\neq\Upsilon_2$. Since \eqref{eq:D1} is the complement of a set of the form \eqref{eq:D2}---which gives rise to the same mutual information---it remains to rule out the case where $\Upsilon_1$ and $\Upsilon_2$ are both finite.

We shall prove this by
contradiction. We shall assume that the quantization region
$\set{D}(\Upsilon_1,\Upsilon_2)$ for some finite
$\Upsilon_1 < \Upsilon_2$ is optimal and derive a contradiction to
optimality. Assume then that $\Upsilon_{1}$ and $\Upsilon_{2}$ are
both finite with $\Upsilon_{1} < \Upsilon_{2}$. Define
\begin{equation}
\label{eq:theta}
\theta \triangleq \frac{\Upsilon_1+\Upsilon_2}{2}.
\end{equation}
Let $\bfxi$ be the mass points of the capacity-achieving
input distribution, and let $\vect{p}$ be the corresponding
probabilities. Note that there is no loss in optimality in
assuming that $\theta$ is nonnegative
\begin{equation}
  \theta \geq 0
\end{equation}
because if $\theta$ is negative, then we can consider the input
$(\vect{p},-\bfxi)$ (whose second moment is identical to that of
$(\vect{p}, \bfxi)$) and the quantizer
$\set{D}(-\Upsilon_2,-\Upsilon_1)$ (whose midpoint is of opposite sign
to that of $\set{D}(\Upsilon_1,\Upsilon_2)$) which give rise to the
same mutual information as the input $(\vect{p}, \bfxi)$ and the
quantizer $\set{D}(\Upsilon_1,\Upsilon_2)$.

Assume that the mass points are ordered, i.e., $\xi_1<\xi_2<\xi_3$. Since the locations of mass points of zero mass have no effect on the mutual information, there is no loss in optimality in assuming that the probability of the largest mass point satisfies $p_3>0$. Furthermore, $p_3<1$ since $p_3=1$ would imply that $C(\const{P})=0$, $\const{P}>0$ in contradiction to \eqref{eq:dobi_proof_threshold_pos}.

We continue by noting that the symmetry of the Normal distribution
implies that 
\begin{equation}
\label{eq:symmetry_theta}
W\bigl(\Upsilon_1,\Upsilon_2\bigm| \theta-\delta\bigr) = W\bigl(\Upsilon_1,\Upsilon_2\bigm|\theta+\delta\bigr), \quad \delta\geq 0.
\end{equation}
Indeed, defining $\Delta\triangleq(\Upsilon_2-\Upsilon_1)/2$ (so $\Upsilon_1=\theta-\Delta$ and $\Upsilon_2=\theta+\Delta$), we have
\begin{IEEEeqnarray}{lCl}
W\bigl(\Upsilon_1,\Upsilon_2\bigm|\theta-\delta\bigr) & = & \int_{\theta-\Delta}^{\theta+\Delta} \frac{1}{\sqrt{2\pi\sigma^2}} e^{-\frac{(\tilde{y}-\theta+\delta)^2}{2\sigma^2}} \d\tilde{y} \nonumber\\
& = & \int_{\theta-\Delta}^{\theta+\Delta} \frac{1}{\sqrt{2\pi\sigma^2}} e^{-\frac{(-\tau+\theta+\delta)^2}{2\sigma^2}} \d\tau \IEEEeqnarraynumspace\nonumber\\
& = & W\bigl(\Upsilon_1,\Upsilon_2\bigm| \theta+\delta\bigr)
\end{IEEEeqnarray}
where we made the substitution $\tau=-\tilde{y}+2\theta$. Furthermore, since
$\theta\geq 0$, 
\begin{equation}
\label{eq:below_theta}
(\theta-\delta)^2 %= \theta^2+\delta^2-2\theta\delta  \leq
                  %\theta^2+\delta^2+2\theta\delta = 
\leq (\theta+\delta)^2, \quad \delta\geq 0.
\end{equation} 
As we next argue, \eqref{eq:symmetry_theta} and \eqref{eq:below_theta}
imply that there is no loss in optimality in assuming that
\begin{equation}
\label{eq:theta_ordered}
\xi_1 < \xi_2 < \xi_3 \leq \theta.
\end{equation}
Indeed, suppose $\xi_3>\theta$. Then $\xi_3$ can be written as
$\theta+\delta$, for some $\delta>0$. However,
$\tilde\xi_3=\theta-\delta$ gives rise to the same channel law
\eqref{eq:symmetry_theta} but has a smaller cost
\eqref{eq:below_theta}. Thus, for every $\xi_3>\theta$ we can find a
$\tilde\xi_3<\theta$ satisfying the power constraint that achieves the
same rate.

We next show that \eqref{eq:theta_ordered} leads to a contradiction
by considering a perturbation of the quantizer. For every 
$\Gamma>\Upsilon_2$ define the perturbed quantization region
\begin{equation}
\tilde{\set{D}} \triangleq (\Upsilon_1, \Upsilon_2) \cup [\Gamma, +\infty)
\end{equation}
and denote the channel law corresponding to $\tilde{\set{D}}$ and
$\bfxi$ by $\vect{W}(\tilde{\set{D}}|\bfxi)$:
\begin{align}
W\bigl(\tilde{\set{D}}\bigm| \xi_{\ell}\bigr) & 
\triangleq \Prob\bigl(\tilde{Y}\in\tilde{\set{D}}\bigm|
X=\xi_{\ell}\bigr) \nonumber\\
& = W(\Upsilon_1,\Upsilon_2 | \xi_{\ell}) +
Q\biggl(\frac{\Gamma-\xi_{\ell}}{\sigma}\biggr)\label{eq:WD}
\end{align}
for $\ell=1,2,3$. We will contradict the optimality of the input $(\vect{p},\bfxi)$
and the quantizer $\set{D}(\Upsilon_1,\Upsilon_2)$
by showing that for $(\vect{p},\bfxi)$
satisfying \eqref{eq:theta_ordered}, we can find a sufficiently large 
$\Gamma$ exceeding $\Upsilon_{2}$ such that
\begin{equation}
\label{eq:contradiction}
I\bigl(\vect{p},\vect{W}(\tilde{\set{D}}|\bfxi)\bigr) > I\bigl(\vect{p},\vect{W}(\Upsilon_1,\Upsilon_2|\bfxi)\bigr).
\end{equation}
%contradicting the assumption that $\set{D}(\Upsilon_1,\Upsilon_2)$
%with finite $\Upsilon_1\leq\Upsilon_2$ achieves the capacity.
To show this we use \eqref{eq:WD} to express the mutual information on the LHS of \eqref{eq:contradiction} as
\begin{IEEEeqnarray}{lCl}
I\bigl(\vect{p},\vect{W}(\tilde{\set{D}}|\bfxi)\bigr) & = & H_b\bigl(P(\Upsilon_1,\Upsilon_2) + P(\Gamma)\bigr) \nonumber\\
\IEEEeqnarraymulticol{3}{r}{\qquad {} -  \sum_{\ell=1}^3 p_{\ell} H_b\Biggl(W\bigl(\Upsilon_1,\Upsilon_2\bigm| \xi_{\ell}\bigr)+Q\biggl(\frac{\Gamma-\xi_{\ell}}{\sigma}\biggr)\Biggr) \IEEEeqnarraynumspace}
\end{IEEEeqnarray}
where
\begin{subequations}
\begin{IEEEeqnarray}{rCl}
P(\Upsilon_1,\Upsilon_2) & \triangleq & \sum_{\ell=1}^3 p_{\ell}\, W(\Upsilon_1,\Upsilon_2 | \xi_{\ell}) \\
P(\Gamma) & \triangleq & \sum_{\ell=1}^3 p_{\ell}\, Q\biggl(\frac{\Gamma-\xi_{\ell}}{\sigma}\biggr). \label{eq:tobi_PGamma}
\end{IEEEeqnarray}
\end{subequations}
A Taylor series expansion of $H_b(p+\eps)$ around $p$ yields
\begin{equation}
H_b(p+\eps) = H_b(p) + \eps \log\frac{1-p}{p} + \mathsf{R}(p,\eps)
\end{equation}
for $0<p<1-\eps$ and some remainder $\mathsf{R}(p,\eps)$ satisfying
\begin{equation}
\label{eq:remainder}
|\mathsf{R}(p,\eps)| \leq \frac{\eps^2}{2} \frac{1}{p(1-p-\eps)}.
\end{equation}
With this, we obtain
\begin{IEEEeqnarray}{lCl}
\IEEEeqnarraymulticol{3}{l}{I\bigl(\vect{p},\vect{W}({\tilde{\set{D}}|\bfxi})\bigr)} \nonumber\\
\quad & = & H_b\bigl(P(\Upsilon_1,\Upsilon_2)\bigr) + P(\Gamma)\log\frac{1-P(\Upsilon_1,\Upsilon_2)}{P(\Upsilon_1,\Upsilon_2)} \nonumber\\
& & {} -  \sum_{\ell=1}^3 p_{\ell} H_b\Bigl(W\bigl(\Upsilon_1,\Upsilon_2\bigm| \xi_{\ell}\bigr)\Bigr) \nonumber\\
& & {} -\sum_{\ell=1}^3 p_{\ell} Q\biggl(\frac{\Gamma-\xi_{\ell}}{\sigma}\biggr) \log\frac{1-W\bigl(\Upsilon_1,\Upsilon_2\bigm| \xi_{\ell}\bigr)}{W\bigl(\Upsilon_1,\Upsilon_2\bigm| \xi_{\ell}\bigr)} \nonumber\\
& &{} + \mathsf{K}(\vect{p},\bfxi,\Gamma) \nonumber\\
& = & I\bigl(\vect{p},\vect{W}(\Upsilon_1,\Upsilon_2|\bfxi)\bigr) + P(\Gamma)\log\frac{1-P(\Upsilon_1,\Upsilon_2)}{P(\Upsilon_1,\Upsilon_2)} \nonumber\\
& & {} - \sum_{\ell=1}^3 p_{\ell} Q\biggl(\frac{\Gamma-\xi_{\ell}}{\sigma}\biggr) \log\frac{1-W\bigl(\Upsilon_1,\Upsilon_2\bigm| \xi_{\ell}\bigr)}{W\bigl(\Upsilon_1,\Upsilon_2\bigm| \xi_{\ell}\bigr)} \nonumber\\
& & {}  + \mathsf{K}(\vect{p},\bfxi,\Gamma) \label{eq:88}
\end{IEEEeqnarray}
where
\begin{IEEEeqnarray}{lCl}
\mathsf{K}(\vect{p},\bfxi,\Gamma) & \triangleq & \mathsf{R}\bigl(P(\Upsilon_1,\Upsilon_2),P(\Gamma)\bigr) \nonumber\\
\IEEEeqnarraymulticol{3}{r}{\qquad{}  - \sum_{\ell=1}^3 p_{\ell}
\mathsf{R}\Biggl(W\bigl(\Upsilon_1,\Upsilon_2\bigm|
\xi_{\ell}\bigr),Q\biggl(\frac{\Gamma-\xi_{\ell}}{\sigma}\biggr)\Biggr).\IEEEeqnarraynumspace}
\label{eq:amos_zik}
\end{IEEEeqnarray}

Since the LHS of \eqref{eq:WD} is strictly smaller than~$1$ so is its
RHS and it follows upon averaging over $\vect{p}$ that for every
\mbox{$\const{P}>0$} and every $\Upsilon_1\leq\Upsilon_2 < \Gamma$
\begin{equation}
P(\Upsilon_1,\Upsilon_2) + P(\Gamma)< 1.
\end{equation}
Furthermore, $P(\Upsilon_1,\Upsilon_2)$ is strictly positive since $W(\Upsilon_1,\Upsilon_2|\xi_{\ell})>0$ for $\ell=1,2,3$.
Using \eqref{eq:remainder}, it thus follows that
\begin{IEEEeqnarray}{lCl}
\IEEEeqnarraymulticol{3}{l}{\lim_{\Gamma\to\infty} \frac{\bigl|\mathsf{R}\bigl(P(\Upsilon_1,\Upsilon_2),P(\Gamma)\bigr)\bigr|}{Q\Bigl(\frac{\Gamma-\xi_3}{\sigma}\Bigr)}} \nonumber\\
\quad & \leq & \lim_{\Gamma\to\infty} \frac{[P(\Gamma)]^2}{Q\Bigl(\frac{\Gamma-\xi_3}{\sigma}\Bigr)} \frac{1}{2\, P(\Upsilon_1,\Upsilon_2)\bigl(1-P(\Upsilon_1,\Upsilon_2)-P(\Gamma)\bigr)} \nonumber\\
& \leq & \lim_{\Gamma\to\infty}  \frac{Q\Bigl(\frac{\Gamma-\xi_3}{\sigma}\Bigr)}{2\, P(\Upsilon_1,\Upsilon_2)\bigl(1-P(\Upsilon_1,\Upsilon_2)-P(\Gamma)\bigr)} \nonumber\\
& = & 0 \label{eq:dobi_remainder1}
\end{IEEEeqnarray}
where the second step follows because $\xi_1<\xi_2<\xi_3$, which implies that
\begin{IEEEeqnarray}{rCll}
P(\Gamma) & \leq & Q\biggl(\frac{\Gamma-\xi_3}{\sigma}\biggr)\nonumber
\end{IEEEeqnarray}
and where the last step follows because $P(\Gamma)$ and \mbox{$Q\bigl((\Gamma-\xi_{3})/\sigma\bigr)$} both tend to zero as $\Gamma$ tends to infinity. Along the same lines, it can be shown that for $\ell=1,2,3$
\begin{IEEEeqnarray}{lCl}
\lim_{\Gamma\to\infty} \frac{\biggl|\mathsf{R}\biggl(W(\Upsilon_1,\Upsilon_2|\xi_{\ell}),Q\Bigl(\frac{\Gamma-\xi_{\ell}}{\sigma}\Bigr)\biggr)\biggr|}{Q\Bigl(\frac{\Gamma-\xi_3}{\sigma}\Bigr)} & = & 0.\label{eq:dobi_remainder2}
\end{IEEEeqnarray}
It thus follows from \eqref{eq:amos_zik}, \eqref{eq:dobi_remainder1}, \eqref{eq:dobi_remainder2}, and the Triangle Inequality that
\begin{IEEEeqnarray}{lCl}
\IEEEeqnarraymulticol{3}{l}{\lim_{\Gamma\to\infty} \frac{\bigl|\mathsf{K}(\vect{q},\bfxi,\Gamma)\bigr|}{Q\Bigl(\frac{\Gamma-\xi_3}{\sigma}\Bigr)}} \nonumber\\
\quad & \leq & \lim_{\Gamma\to\infty} \frac{\bigl|\mathsf{R}\bigl(P(\Upsilon_1,\Upsilon_2),P(\Gamma)\bigr)\bigr|}{Q\Bigl(\frac{\Gamma-\xi_3}{\sigma}\Bigr)}\nonumber\\
& & {} + \lim_{\Gamma\to\infty} \sum_{\ell=1}^3 p_{\ell} \frac{\biggl|\mathsf{R}\biggl(W(\Upsilon_1,\Upsilon_2|\xi_{\ell}),Q\Bigl(\frac{\Gamma-\xi_{\ell}}{\sigma}\Bigr)\biggr)\biggr|}{Q\Bigl(\frac{\Gamma-\xi_3}{\sigma}\Bigr)} \nonumber\\
& = & 0. \label{eq:eps2}
\end{IEEEeqnarray}
%where the second step follows because $\xi_1<\xi_2<\xi_3$, which implies that
%\begin{equation}
%Q\biggl(\frac{\Gamma-\xi_{\ell}}{\sigma}\biggr) \leq Q\biggl(\frac{\Gamma-\xi_3}{\sigma}\biggr), \quad \ell=1,2 \qquad \textnormal{and} \qquad P(\Gamma) \leq Q\biggl(\frac{\Gamma-\xi_3}{\sigma}\biggr);
%\end{equation}
%and where the last step follows because $\lim_{\Gamma\to\infty} Q\bigl((\Gamma-\xi_{\ell})/\sigma\bigr)=0$, $\ell=1,2,3$. 
We further have by \cite[Prop.~19.4.2]{lapidoth09} that for $\ell=1,2$
\begin{IEEEeqnarray}{lCl}
\lim_{\Gamma\to\infty} \frac{Q\Bigl(\frac{\Gamma-\xi_{\ell}}{\sigma}\Bigr)}{Q\Bigl(\frac{\Gamma-\xi_3}{\sigma}\Bigr)} & \leq & %\lim_{\Gamma\to\infty} \frac{\Gamma-\xi_3}{\Gamma-\xi_{\ell}}\frac{e^{-\frac{(\Gamma-\xi_{\ell})^2}{2\sigma^2}}}{\Bigl(1-\frac{\sigma^2}{(\Gamma-\xi_3)^2}\Bigr) e^{-\frac{(\Gamma-\xi_3)^2}{2\sigma^2}}} \nonumber\\
\lim_{\Gamma\to\infty} \frac{\Gamma-\xi_3}{\Gamma-\xi_{\ell}}\frac{e^{\frac{\xi_3^2-\xi_{\ell}^2}{2\sigma^2}}}{1-\frac{\sigma^2}{(\Gamma-\xi_3)^2}} e^{-\Gamma\frac{\xi_3-\xi_{\ell}}{\sigma^2}} \nonumber\\
& = & 0. \label{eq:eps1}
\end{IEEEeqnarray}
We thus obtain from \eqref{eq:tobi_PGamma}, \eqref{eq:88}, \eqref{eq:eps2}, and \eqref{eq:eps1} that
\begin{IEEEeqnarray}{lCl}
\IEEEeqnarraymulticol{3}{l}{\lim_{\Gamma\to\infty} \frac{I\bigl(\vect{p},\vect{W}(\tilde{\set{D}}|\bfxi)\bigr)-I\bigl(\vect{p},\vect{W}(\Upsilon_1,\Upsilon_2|\bfxi)\bigr)}{Q\Bigl(\frac{\Gamma-\xi_3}{\sigma}\Bigr)}}\nonumber\\
\quad & = & p_3 \log\frac{1-P(\Upsilon_1,\Upsilon_2)}{P(\Upsilon_1,\Upsilon_2)} - p_3 \log\frac{1-W\bigl(\Upsilon_1,\Upsilon_2\bigm| \xi_3\bigr)}{W\bigl(\Upsilon_1,\Upsilon_2\bigm| \xi_3\bigr)} \nonumber\\
& = & p_3 \Biggl(\log\frac{1-P(\Upsilon_1,\Upsilon_2)}{1-W\bigl(\Upsilon_1,\Upsilon_2\bigm| \xi_3\bigr)} + \log\frac{W\bigl(\Upsilon_1,\Upsilon_2\bigm| \xi_3\bigr)}{P(\Upsilon_1,\Upsilon_2)}\Biggr) \nonumber\\
& > & 0 \label{eq:positive}
\end{IEEEeqnarray}
where the inequality follows from the assumption $p_3>0$ and by noting that \[\xi\mapsto W\bigl(\Upsilon_1,\Upsilon_2\bigm| \xi\bigr)\] is strictly increasing on $(-\infty,\theta)$ (see Appendix~\ref{app:increasing}), which together with $p_3<1$ implies that
\begin{equation}
\label{eq:positive_II}
W\bigl(\Upsilon_1,\Upsilon_2\bigm| \xi_3 \bigr) > P(\Upsilon_1,\Upsilon_2).
\end{equation}
Consequently, for a sufficiently
large $\Gamma$,
$I\bigl(\vect{p},\vect{W}(\tilde{\set{D}}|\bfxi)\bigr)$ is strictly
larger than
$I\bigl(\vect{p},\vect{W}(\Upsilon_1,\Upsilon_2|\bfxi)\bigr)$,
contradicting the assumption that $\set{D}(\Upsilon_1,\Upsilon_2)$
with finite \mbox{$\Upsilon_1\leq\Upsilon_2$} achieves $C(\const{P})$.

\subsection{Centered, Variance-$\const{P}$ Input Distribution}
\label{sub:dobi_input}
We have shown that the supremum in \eqref{eq:capacity} is achieved by some input distribution that is concentrated on at most three points and by some threshold quantizer:
\begin{equation}
\label{eq:dobi_input_1}
C(\const{P}) = I\bigl(\vect{p}^{\star},\vect{W}(\Upsilon^{\star}|\bfxi^{\star})\bigr)
\end{equation}
where $\bfxi^{\star} \in \Reals^3$ is the location of the mass points,
$\vect{p}^{\star}$ is their corresponding probabilities, $\Upsilon^{\star}$ is
the threshold of the quantizer, and $\vect{W}(\Upsilon^{\star}|\bfxi^{\star})$
is the resulting channel law.
% denotes the channel law that follows when the input distribution is
% concentrated at the three points $\bfxi=(\xi_1,\xi_2,\xi_3)$ and
% when a threshold quantizer with threshold $\Upsilon$ is employed.
We next show that the input distribution $(\vect{p}^{\star}, \bfxi^{\star})$
must be centered and must satisfy the average-power constraint with
equality:
\begin{subequations}
\begin{IEEEeqnarray}{rCl}
  \sum_{\ell=1}^{3} p^{\star}_{\ell} \, \xi^{\star}_{\ell} & = & 0 \\
  \sum_{\ell=1}^{3} p^{\star}_{\ell} \bigl(\xi^{\star}_{\ell}\bigr)^{2} & = & \const{P}.
\end{IEEEeqnarray}
\end{subequations}
To show this we note that, for a fixed threshold quantizer~$\Upsilon^{\star}$, the capacity as a function of the maximal-allowed
average-power is a concave nondecreasing function that is strictly
smaller than~$1$~bit per channel use, and that tends to~$1$~bit per
channel use as the maximal-allowed average-power tends to
infinity. Consequently, this capacity-cost function must be strictly
increasing and the second moment of $(\vect{p}^{\star},
\bfxi^{\star})$ must therefore be $\const{P}$. By noting that the capacity is achieved by some threshold quantizer, this argument also
proves that $C(\const{P})$ must be strictly increasing in $\const{P}$. This further implies that $(\vect{p}^{\star}, \bfxi^{\star})$ must be centered
because otherwise we could shift $\bfxi^{\star}$ and
$\Upsilon^{\star}$ by the mean and thus reduce the second moment
without changing the mutual information.

\section{Proofs: Capacity Per Unit-Energy}
\label{sec:CUE_proofs}

\subsection{Proof of Theorem~\ref{thm:No2dB}}
\label{sub:proof_thm3}
We will lower-bound the RHS of \eqref{eq:CUEKL} by restricting the
supremum to threshold quantizers \eqref{eq:dobi_thresholdD}
and thus demonstrate that
\begin{equation}
\label{eq:CUCLB}
\dot{C}(0) \geq \frac{1}{2\sigma^2}.
\end{equation}
Together with the upper bound~\eqref{eq:CUCUB}, this will prove
Theorem~\ref{thm:No2dB}. 

To prove \eqref{eq:CUCLB}, we first note that a threshold quantizer induces the channel
\begin{IEEEeqnarray}{lCl}
P\bigl(Y =1 \bigm| X =x) %& = & \int_{\Upsilon}^{\infty} \frac{1}{\sqrt{2\pi\sigma^2}}e^{-\frac{(t-x)^2}{2\sigma^2}}\d t\nonumber\\
& = & Q\left(\frac{\Upsilon-x}{\sigma}\right), \quad x\in\Reals\label{eq:dobi_thres_ch}
\end{IEEEeqnarray}
and $P\bigl(Y=0\bigm|X=x\bigr)=1-P\bigl(Y=1\bigm|
X=x)$. By~\eqref{eq:CUEKL}, we thus obtain
\begin{IEEEeqnarray}{lCl}
\dot{C}(0) & \geq & \sup_{\xi\neq 0, \Upsilon\in\Reals} \left\{\frac{Q\left(\frac{\Upsilon-\xi}{\sigma}\right)\log\frac{Q\left(\frac{\Upsilon-\xi}{\sigma}\right)}{Q\left(\frac{\Upsilon}{\sigma}\right)}}{\xi^2}\right.\nonumber\\
& & \qquad\qquad\,\, {} +\left.\frac{\left[1-Q\left(\frac{\Upsilon-\xi}{\sigma}\right)\right]\log\frac{1-Q\left(\frac{\Upsilon-\xi}{\sigma}\right)}{1-Q\left(\frac{\Upsilon}{\sigma}\right)}}{\xi^2}\right\}\nonumber\\
& = & \sup_{\xi\neq 0, \Upsilon\in\Reals} \left\{\frac{Q\left(\frac{\Upsilon-\xi}{\sigma}\right)\log\frac{1}{Q\left(\frac{\Upsilon}{\sigma}\right)}}{\xi^2}\right.\nonumber\\
& & \qquad\qquad\,\, {} + \frac{\left[1-Q\left(\frac{\Upsilon-\xi}{\sigma}\right)\right]\log\frac{1}{1-Q\left(\frac{\Upsilon}{\sigma}\right)}}{\xi^2} \nonumber\\
& & \qquad\qquad\qquad\qquad\qquad {} - \left.\vphantom{\frac{Q\left(\frac{\Upsilon-\xi}{\sigma}\right)\log\frac{Q\left(\frac{\Upsilon-\xi}{\sigma}\right)}{Q\left(\frac{\Upsilon}{\sigma}\right)}}{\xi^2}} \frac{H_b\left(Q\left(\frac{\Upsilon-\xi}{\sigma}\right)\right)}{\xi^2}\right\}.\IEEEeqnarraynumspace\label{eq:1}
\end{IEEEeqnarray}
We now change variables by defining $\mu \triangleq \xi - \Upsilon$
and by replacing the supremum over $(\xi, \Upsilon)$ with the supremum
over $(\xi, \mu)$. This latter supremum we lower-bound by taking $\xi$
to infinity while holding $\mu$ fixed.
%We choose $\Upsilon=\xi-\mu$ for some fixed $\mu\in\Reals$ and
%lower-bound the RHS of \eqref{eq:1} by letting $\xi$ tend to
%infinity. 
This yields for the last two terms on the RHS of \eqref{eq:1}
\begin{equation}
\label{eq:2}
\lim_{\xi\to\infty} \frac{H_b\left(Q\left(-\frac{\mu}{\sigma}\right)\right)}{\xi^2} = 0
\end{equation}
and
\begin{equation}
\label{eq:3}
\lim_{\xi\to\infty} \frac{\left[1-Q\left(-\frac{\mu}{\sigma}\right)\right]\log\frac{1}{1-Q\left(\frac{\xi-\mu}{\sigma}\right)}}{\xi^2} = 0.
\end{equation}
We use the upper bound on the $Q$-function \eqref{eq:QUB} to lower-bound the first term on the RHS of \eqref{eq:1} as
\begin{IEEEeqnarray}{lCl}
\IEEEeqnarraymulticol{3}{l}{\lim_{\xi\to\infty} \frac{Q\left(-\frac{\mu}{\sigma}\right)\log\frac{1}{Q\left(\frac{\xi-\mu}{\sigma}\right)}}{\xi^2}}\nonumber\\
\quad & \geq & Q\left(-\frac{\mu}{\sigma}\right) \lim_{\xi\to\infty} \frac{\frac{1}{2}\log(2\pi)+\log\frac{\xi-\mu}{\sigma}+\frac{(\xi-\mu)^2}{2\sigma^2}}{\xi^2}\nonumber\\
& = & Q\left(-\frac{\mu}{\sigma}\right)\frac{1}{2\sigma^2}.\label{eq:4}
\end{IEEEeqnarray}
Combining \eqref{eq:2}--\eqref{eq:4} with \eqref{eq:1} yields
\begin{equation}
\dot{C}(0) \geq Q\left(-\frac{\mu}{\sigma}\right)\frac{1}{2\sigma^2}
\end{equation}
from which we obtain \eqref{eq:CUCLB} by letting $\mu$ tend to infinity. This proves Theorem~\ref{thm:No2dB}.

Note that \eqref{eq:CUEKL} is achieved by binary on-off keying \cite{verdu90}. By showing that \eqref{eq:CUEKL} is lower-bounded by $1/(2\sigma^2)$ as we take $\xi$ to infinity, we thus implicitly show that $\dot{C}(0)$ is achieved by binary on-off keying where the nonzero mass point tends to infinity as $\const{P}$ tends to zero.

\subsection{Proof of Theorem~\ref{thm:flash}}
\label{sub:proof_thm4}
We first argue that in order  to prove Theorem~\ref{thm:flash} it
suffices to show that for every fixed $\nu > 0$ 
\begin{equation}
\label{eq:proof_thm4_suplim}
\sup_{\xi^2\leq\nu,\set{D}}\frac{D\bigl(P_{Y|X=\xi}\bigm\| P_{Y|X=0}\bigr)}{\xi^2} < \frac{1}{2\sigma^2}.
\end{equation}
Suppose then that this strict inequality holds for every \mbox{$\nu >
0$}. Consider a family of quantizers and input distributions
parametrized by $\const{P}$ with $\E{X^2} \leq \const{P}$.
By~\cite[Eq.~(15)]{verdu90}, it follows that for every $\nu>0$
% and every
%distribution on $X$ satisfying $\E{X^2}\leq\const{P}$,
\begin{IEEEeqnarray}{lCl}
\IEEEeqnarraymulticol{3}{l}{\frac{I(X;Y)}{\const{P}}} \nonumber\\
\,\,\, & \leq & \int \frac{D\bigl(P_{Y|X=x}\bigm\| P_{Y|X=0}\bigr)}{x^2} \frac{x^2}{\const{P}} \d P_X(x) \nonumber\\
& =  & \int_{x^2\leq\nu} \frac{D\bigl(P_{Y|X=x}\bigm\|
  P_{Y|X=0}\bigr)}{x^2} \frac{x^2}{\const{P}} \d P_X(x) \nonumber\\
& & {}  + \int_{x^2>\nu} \frac{D\bigl(P_{Y|X=x}\bigm\| P_{Y|X=0}\bigr)}{x^2} \frac{x^2}{\const{P}} \d P_X(x)\nonumber\\ 
& \leq & \sup_{\xi^2\leq\nu,\set{D}}
\biggl\{ \frac{D\bigl(P_{Y|X=\xi}\bigm\| P_{Y|X=0}\bigr)}{\xi^2} \biggr\} \frac{\E{X^2
    \I{X^2\leq\nu}}}{\const{P}} \nonumber\\
& & {} + \sup_{\xi^2>\nu,\set{D}} \biggl\{\frac{D\bigl(P_{Y|X=\xi}\bigm\| P_{Y|X=0}\bigr)}{\xi^2} \biggr\} \frac{\E{X^2 \I{X^2>\nu}}}{\const{P}} \nonumber\\
& = & \sup_{\xi^2\leq\nu,\set{D}}
\biggl\{ \frac{D\bigl(P_{Y|X=\xi}\bigm\| P_{Y|X=0}\bigr)}{\xi^2} \biggr\} \frac{\E{X^2
    \I{X^2\leq\nu}}}{\const{P}} \nonumber\\
& & {} + \frac{1}{2 \sigma^{2}} \frac{\E{X^2 \I{X^2>\nu}}}{\const{P}}   \label{eq:proof_thm4_1_dobi}
\end{IEEEeqnarray}
where the last step follows because the capacity per unit-energy can be achieved by binary on-off keying where the nonzero
mass point tends to infinity (see Section~\ref{sub:proof_thm3}), so
\begin{equation}
\sup_{\xi^2>\nu,\set{D}} \frac{D\bigl(P_{Y|X=\xi}\bigm\| P_{Y|X=0}\bigr)}{\xi^2} = \frac{1}{2\sigma^2}.
\end{equation}
Taking the limit as $\const{P}$ tends to zero on both sides of \eqref{eq:proof_thm4_1_dobi}  yields
\begin{IEEEeqnarray}{lCl}
\IEEEeqnarraymulticol{3}{l}{\varliminf_{\const{P}\downarrow 0} \frac{I(X;Y)}{\const{P}}} \nonumber\\
 & \leq & \varliminf_{\const{P}\downarrow 0} \Biggl(\frac{1}{2 \sigma^{2}} \frac{\E{X^2 \I{X^2>\nu}}}{\const{P}} \nonumber\\
& & {} + \sup_{\xi^2\leq\nu,\set{D}}
\biggl\{ \frac{D(P_{Y|X=\xi} \| P_{Y|X=0})}{\xi^2} \biggr\} \frac{\E{X^2
    \I{X^2\leq\nu}}}{\const{P}}\Biggr) \nonumber\\
    & \leq & \frac{1}{2\sigma^2} \label{eq:proof_thm4_1}
\end{IEEEeqnarray}
where $\varliminf$ denotes the \emph{limit inferior}. Here the last step follows from~\eqref{eq:proof_thm4_suplim} and
from the average-power constraint
\begin{equation}
  \label{eq:amos_sigh10}
  \frac{\E{X^2 \I{X^2>\nu}}}{\const{P}} + \frac{\E{X^2
    \I{X^2\leq\nu}}}{\const{P}} \leq 1.
\end{equation}
Since the inequality in~\eqref{eq:proof_thm4_suplim} is strict for
every $\nu > 0$, it follows from~\eqref{eq:amos_sigh10} that the last
line in~\eqref{eq:proof_thm4_1} can hold with equality only if for every $\nu > 0$
\begin{equation}
  \lim_{\const{P}\downarrow 0} \frac{\E{X^2 \I{X^2>\nu}}}{\const{P}} =1. \label{eq:proof_thm4_flash}
\end{equation}
Thus, if \eqref{eq:proof_thm4_suplim} holds, then every family of distributions of $X$ satisfying
$\E{X^2}\leq\const{P}$ that achieves
\begin{equation}
  \lim_{\const{P}\downarrow 0} \frac{I(X;Y)}{\const{P}} = \frac{1}{2\sigma^2}
\end{equation}
must be flash signaling, thus proving Theorem~\ref{thm:flash}.

Having established that in order to prove Theorem~\ref{thm:flash} it
suffices to show that \eqref{eq:proof_thm4_suplim} holds for
every $\nu > 0$, we now proceed to do so.
We first note that, for every $\xi\neq 0$, the supremum
in~\eqref{eq:proof_thm4_suplim} over all quantizers $\set{D}$ can be
replaced with the supremum over all threshold quantizers. Indeed, let
\begin{IEEEeqnarray}{lCl}
\set{W} & \triangleq & \Bigl\{(\omega_1,\omega_2)\in[0,1]^2\colon \nonumber\\
& & \quad {} \omega_{1}=\Prob\bigl(\tilde{Y}\in\set{D}\bigm|X=\xi\bigr), \nonumber\\
& & \quad {} \omega_{2}=\Prob\bigl(\tilde{Y}\in\set{D}\bigm|X=0\bigr),\,\set{D}\subset\Reals\Bigr\}
\end{IEEEeqnarray}
denote the set of possible conditional probability distributions
$\bigl(P_{Y|X=\xi},P_{Y|X=0}\bigr)$ that different quantizers can
induce. Applying the methods of Section~\ref{sec:KreinMilman}, it can
be shown that the extreme points of $\overline{\set{W}}$ correspond to
threshold quantizers. (Recall that $\overline{\set{W}}$ denotes the closure of the convex hull of $\set{W}$.) Indeed, for binary inputs, the support
function $f(\cdot)$ is given by \eqref{eq:supp_int} with
$\lambda_3=0$, $\xi_1=\xi$, and $\xi_2=0$. The quantization region $\set{D}^{\star}(\bflambda)$ that achieves the
supremum in \eqref{eq:supp_int} consists of the set of
$\tilde{y}\in\Reals$ for which $g_{\bflambda}(\tilde{y})$ in
\eqref{eq:glambda_lambda30} is nonnegative. Since
$g_{\bflambda}(\cdot)$ has at most one zero, it follows that $\set{D}^{\star}(\bflambda)$
consists of at most two regions, i.e., it is a threshold quantizer. Using that the relative entropy on the LHS of \eqref{eq:proof_thm4_suplim} is convex in
$\bigl(P_{Y|X=\xi},P_{Y|X=0}\bigr)$ \cite[Th.~2.7.2]{coverthomas91},
it follows by the same arguments as in Section~\ref{sec:KreinMilman}
that, for every $\xi\neq 0$, \mbox{$D\bigl(P_{Y|X=\xi}\bigm\| P_{Y|X=0}\bigr)$} is maximized by some
threshold quantizer.

We next note that we can assume, without loss of optimality, that the
threshold $\Upsilon$ of the quantizer is nonnegative. Consequently, the supremum
over $\set{D}$ on the LHS of \eqref{eq:proof_thm4_suplim} can be replaced by a supremum over threshold quantizers
of nonnegative thresholds $\Upsilon \geq 0$. Indeed, for $x\in\Reals$,
\begin{IEEEeqnarray}{lCl}
\Prob\bigl(\tilde{Y}\geq\Upsilon\bigm|X=x\bigr)& = & 1-\Prob\bigl(\tilde{Y}\geq -\Upsilon\bigm|X=-x\bigr) \IEEEeqnarraynumspace \end{IEEEeqnarray}
and consequently,
\begin{IEEEeqnarray}{lCl}
\IEEEeqnarraymulticol{3}{l}{\left.D\bigl(P_{Y|X=\xi}\bigm\| P_{Y|X=0}\bigr)\right|_{\set{D}=\{\tilde{y}\in\Reals\colon \tilde{y}\geq\Upsilon\}}}\nonumber\\
\qquad & = & \left.D\bigl(P_{Y|X=-\xi}\bigm\| P_{Y|X=0}\bigr)\right|_{\set{D}=\{\tilde{y}\in\Reals\colon \tilde{y}\geq-\Upsilon\}}.\IEEEeqnarraynumspace
\end{IEEEeqnarray}
Thus, to every pair $(\xi,\Upsilon)$ corresponds
another pair $(-\xi,-\Upsilon)$ achieving the same relative entropy.  Since $\xi$ and $-\xi$ have the same magnitude, this implies that both pairs give rise to the same value for
\begin{equation*}
\frac{D\bigl(P_{Y|X=\xi}\bigm\| P_{Y|X=0}\bigr)}{\xi^2}
\end{equation*}
hence we can assume without loss of generality that
$\Upsilon\geq 0$.

We continue by defining the random variable $U$ as
\begin{equation}
U \triangleq \tilde{Y} \I{\tilde{Y}\geq 0}.
\end{equation}
Note that, for $\Upsilon\geq 0$, the quantizer's output can be expressed as $Y=\I{U\geq\Upsilon}$. It thus follows from the Data Processing Inequality for Relative Entropy \cite[Sec.~2.9]{coverthomas91} that
\begin{IEEEeqnarray}{lCl}
\IEEEeqnarraymulticol{3}{l}{D\bigl(P_{Y|X=\xi}\bigm\| P_{Y|X=0}\bigr)} \nonumber\\
\quad & \leq & D\bigl(P_{U|X=\xi}\bigm\| P_{U|X=0}\bigr) \nonumber\\
& = & \frac{1}{\sqrt{2\pi\sigma^2}}\int^{\infty}_{0} e^{-\frac{(\tilde{y}-\xi)^2}{2\sigma^2}} \log\frac{e^{-\frac{(\tilde{y}-\xi)^2}{2\sigma^2}}}{e^{-\frac{\tilde{y}^2}{2\sigma^2}}}\d\tilde{y} \nonumber\\
& & {} + \frac{1}{\sqrt{2\pi\sigma^2}} \left(\displaystyle \int^{0}_{-\infty} e^{-\frac{(\tilde{y}-\xi)^2}{2\sigma^2}}\d\tilde{y}\right) \log\frac{\displaystyle \int^{0}_{-\infty} e^{-\frac{(\tilde{y}-\xi)^2}{2\sigma^2}}\d\tilde{y}}{\displaystyle \int^{0}_{-\infty} e^{-\frac{\tilde{y}^2}{2\sigma^2}}\d\tilde{y}} \nonumber\\
& \triangleq & \Psi(\xi) \label{eq:proof_thm4_DUB}
\end{IEEEeqnarray}
irrespective of the threshold $\Upsilon\geq 0$. Here the last equality should be viewed as the definition of $\Psi(\xi)$. By applying the Log-Sum Inequality \cite[Th.~2.7.1]{coverthomas91} to $\Psi(\xi)$, we obtain
\begin{IEEEeqnarray}{lCl}
\Psi(\xi) & \leq &
\frac{1}{\sqrt{2\pi\sigma^2}}\int^{\infty}_{-\infty}
e^{-\frac{(\tilde{y}-\xi)^2}{2\sigma^2}}
\log\frac{e^{-\frac{(\tilde{y}-\xi)^2}{2\sigma^2}}}{e^{-\frac{\tilde{y}^2}{2\sigma^2}}}\d\tilde{y} \nonumber\\
& = & \frac{\xi^2}{2\sigma^2} \label{eq:proof_thm4_3}
\end{IEEEeqnarray}
with equality if, and only if,
\begin{equation}
\label{eq:proof_thm4_condlogsum}
\frac{e^{-\frac{(\tilde{y}-\xi)^2}{2\sigma^2}}}{e^{-\frac{\tilde{y}^2}{2\sigma^2}}}
= 2\, Q\biggl(\frac{\xi}{\sigma}\biggr), \quad \textnormal{for almost
  every $\tilde{y} \leq 0$}.
\end{equation}
Since \eqref{eq:proof_thm4_condlogsum} holds only for $\xi=0$, this yields 
% the function
% $\tilde{y}\mapsto
% e^{-\frac{\xi^2}{2\sigma^2}}e^{\frac{\tilde{y}\xi}{\sigma^2}}$ is
% monotonic, condition \eqref{eq:proof_thm4_condlogsum} is not
% satisfied, implying that
\begin{equation}
\label{eq:proof_thm4_strict}
\Psi(\xi) < \frac{\xi^{2}}{2\sigma^2}, \quad \xi\neq 0.
\end{equation}
Note that~\eqref{eq:proof_thm4_strict} and \eqref{eq:proof_thm4_3}
give an upper bound on the relative entropy that does not depend on
the threshold. By combining \eqref{eq:proof_thm4_DUB} and
\eqref{eq:proof_thm4_strict}, and recalling that for every $\xi\neq 0$
the relative entropy in \eqref{eq:proof_thm4_suplim} is maximized by
some threshold quantizer, we obtain
\begin{IEEEeqnarray}{lCl}
\sup_{\set{D}} \frac{D\bigl(P_{Y|X=\xi}\bigm\|
  P_{Y|X=0}\bigr)}{\xi^2} & \leq & \frac{\Psi(\xi)}{\xi^2} <
\frac{1}{2\sigma^2}, \quad \xi\neq 0. \IEEEeqnarraynumspace
\label{eq:proof_thm4_finite}
\end{IEEEeqnarray}
Since the function $\xi \mapsto \xi^{-2} \Psi(\xi)$ is continuous on $\Reals
\setminus \{0\}$ and, as shown in Appendix~\ref{app:proof_thm4}, satisfies
\begin{equation}
  \lim_{\xi\to 0} \frac{\Psi(\xi)}{\xi^2} = 
  \frac{1}{2\sigma^2}\biggl(\frac{1}{2}+\frac{1}{\pi}\biggr) <
  \frac{1}{2\sigma^2}
  \label{eq:amos_Xmas}
\end{equation}
we obtain \eqref{eq:proof_thm4_suplim} by maximizing \eqref{eq:proof_thm4_finite} over $\xi^2\leq\nu$. This proves Theorem~\ref{thm:flash}.

\subsection{Proof of Corollary~\ref{cor:unboundthres}}
\label{sub:proof_cor5}
To prove Corollary~\ref{cor:unboundthres} we need to show that for every $\nu>0$ and every threshold quantizer with threshold $0\leq\Upsilon\leq\nu$,
\begin{equation}
\sup_{\xi\neq 0, 0\leq\Upsilon\leq\nu}\frac{D\bigl(P_{Y|X=\xi}\bigm\| P_{Y|X=0}\bigr)}{\xi^2} < \frac{1}{2\sigma^2}.\end{equation}
By~\eqref{eq:proof_thm4_finite} we have that for every $\xi \neq 0$
and every $\nu > 0$
\begin{equation}
\sup_{0\leq\Upsilon\leq\nu}\frac{D\bigl(P_{Y|X=\xi}\bigm\|
  P_{Y|X=0}\bigr)}{\xi^2}  \leq \frac{\Psi(\xi)}{\xi^{2}} 
< \frac{1}{2\sigma^2}
\end{equation}
where $\xi \mapsto \xi^{-2} \Psi(\xi)$ is continuous on $\Reals
\setminus \{0\}$ and satisfies~\eqref{eq:amos_Xmas}.
%\eqref{eq:proof_thm4_suplim} we have
%\begin{equation}
%\sup_{\xi^2\leq\nu', 0\leq\Upsilon\leq\nu}\frac{D\bigl(P_{Y|X=\xi}\bigm\| P_{Y|X=0}\bigr)}{\xi^2}  < \frac{1}{2\sigma^2}, \quad \textnormal{for every $\nu'>0$.}
%\end{equation}
To conclude the proof of the corollary it thus remains to show that for every $\nu>0$
\begin{equation}
\label{eq:proof_cor5_claim}
\varlimsup_{\xi^2\to\infty} \sup_{0\leq\Upsilon\leq\nu} \frac{D\bigl(P_{Y|X=\xi}\bigm\| P_{Y|X=0}\bigr)}{\xi^2} < \frac{1}{2\sigma^2}
\end{equation}
where $\varlimsup$ denotes the \emph{limit superior}. This can be done by noting that for $0\leq\Upsilon\leq\nu$
%However, we have for $0\leq\Upsilon\leq\nu$
\begin{IEEEeqnarray}{lCl}
\IEEEeqnarraymulticol{3}{l}{D\bigl(P_{Y|X=\xi}\bigm\| P_{Y|X=0}\bigr)}\nonumber\\
\quad & = & Q\biggl(\frac{\Upsilon-\xi}{\sigma}\biggr)\log\frac{1}{Q\Bigl(\frac{\Upsilon}{\sigma}\Bigr)} - H_b\Biggl(Q\biggl(\frac{\Upsilon-\xi}{\sigma}\biggr)\Biggr) \nonumber\\
& & {} +\Biggl[1-Q\biggl(\frac{\Upsilon-\xi}{\sigma}\biggr)\Biggr]\log\frac{1}{1-Q\Bigl(\frac{\Upsilon}{\sigma}\Bigr)} \nonumber\\
& \leq & \log\frac{1}{Q\Bigl(\frac{\Upsilon}{\sigma}\Bigr)} + \log\frac{1}{1-Q\Bigl(\frac{\Upsilon}{\sigma}\Bigr)} \nonumber\\
& \leq & \log\frac{1}{Q\Bigl(\frac{\nu}{\sigma}\Bigr)} + \log 2 \label{eq:proof_cor5_1}
\end{IEEEeqnarray} 
where the second step follows because $0\leq Q(x)\leq 1$, $x\in\Reals$ and $H_b(p)\geq 0$, $0\leq p\leq 1$, and where the last step follows because $x\mapsto Q(x)$ is monotonically decreasing in $x\in\Reals$ and because $0\leq\Upsilon\leq\nu$.
Computing the limiting ratio of the RHS of \eqref{eq:proof_cor5_1} to $\xi^2$ as $\xi^2$ tends to infinity yields for every $\nu>0$
\begin{equation}
\lim_{\xi^2\to\infty} \sup_{0\leq\Upsilon\leq\nu} \frac{D\bigl(P_{Y|X=\xi}\bigm\| P_{Y|X=0}\bigr)}{\xi^2} = 0
\end{equation}
thus establishing~\eqref{eq:proof_cor5_claim}. This proves Corollary~\ref{cor:unboundthres}.

\section{Proofs: Peak-Power-Limited Channels}
\label{sec:dobi_PP}

\subsection{Proof of Proposition~\ref{note:PP}}
\label{sub:proof_note2}
The peak-power-limited Gaussian channel with one-bit output
quantization is a memoryless channel with a continuous input taking
values in $\bigl[-\sqrt{\const{P}},\sqrt{\const{P}}\bigr]$ and a
binary output. It thus follows from Dubins's Theorem that, for every
quantization region $\set{D}$, the capacity-achieving
input distribution is discrete with two mass points
\cite[Sec.~II-C]{witsenhausen80}. We shall denote these two mass
points by $\xi_1$ and $\xi_2$.

We next argue that threshold quantizers are optimal. Let $\set{W}$ denote the set of all possible channel laws, i.e.,
\begin{IEEEeqnarray}{lCl}
\set{W} & \triangleq & \Bigl\{(\omega_1,\omega_2)\in[0,1]^2\colon \nonumber\\
& & \quad {} \omega_{\ell}=\Prob\bigl(\tilde{Y}\in\set{D}\bigm|X=\xi_{\ell}\bigr), \set{D}\subset\Reals\Bigr\}.
\end{IEEEeqnarray}
Applying the methods of Section~\ref{sec:KreinMilman} to binary
channel inputs, it can be shown that the extreme points of $\overline{\set{W}}$
correspond to threshold quantizers \eqref{eq:dobi_thresholdD}
or complements thereof. (For more details, see also Section~\ref{sub:proof_thm4}.)
%Indeed, for binary inputs, the support
%function $f(\cdot)$ is given by \eqref{eq:supp_int} but with
%$\lambda_3=0$. The quantization region $\set{D}$ that achieves the
%supremum in \eqref{eq:supp_int} consists of the set of
%$\tilde{y}\in\Reals$ for which $g_{\bflambda}(\tilde{y})$ in
%\eqref{eq:glambda_lambda30} is nonnegative. Since
%$g_{\bflambda}(\cdot)$ has only one zero at
%\eqref{eq:zeros_lambda30}, it follows that the quantization region
%consists of two regions. 
By the same arguments as in
Section~\ref{sec:KreinMilman}, it follows that for every binary random
variable $X$, the mutual information $I(X;Y)$ is maximized by some
threshold quantizer.
%\footnote{The same result can be shown using the Neyman-Pearson lemma \cite{neymanpearson32}.}

The capacity of the peak-power-limited Gaussian channel with one-bit output quantization is thus given by
\begin{equation}
\label{eq:dobi_CPP_sup}
C_{\textnormal{PP}}(\const{P}) = \sup_{(\vect{p},\bfxi),\Upsilon\in\Reals} I\bigl(\vect{p},\vect{W}(\Upsilon|\bfxi)\bigr)
\end{equation}
where $(\vect{p},\bfxi)$ denotes the two-mass-points distribution with masses \[\vect{p}=(p_1,p_2)\in[0,1]^2\] and locations \[\bfxi=(\xi_1,\xi_2)\in[-\sqrt{\const{P}},\sqrt{\const{P}}]^2\] and where $\vect{W}(\Upsilon|\bfxi)$ denotes the channel law corresponding to the threshold quantizer $\eqref{eq:dobi_thresholdD}$ and to the mass points $(\xi_1,\xi_2)$:
\begin{equation}
W(\Upsilon|\xi_{\ell}) = \Prob\bigl(\tilde{Y}\geq \Upsilon\bigm| X=\xi_{\ell}\bigr), \quad \ell=1,2.
\end{equation}
Following the steps in Section~\ref{sub:dobi_sup=max}, it can be further shown that the supremum on the RHS of \eqref{eq:dobi_CPP_sup} is achieved.

%$\set{P}_{\textnormal{PP}}(\const{P})$ denotes the set of all binary probability vectors and all mass points $\bfxi\in\Reals^2$ satisfying $\xi_{\ell}^2\leq\const{P}$, $\ell=1,2$; and where $\vect{W}(\Upsilon|\bfxi)$ denotes the channel law given by \eqref{eq:WUps}, namely
%\begin{equation*}
%\Prob\bigl(Y=1\bigm| X=\xi_{\ell}\bigr) = Q\biggl(\frac{\Upsilon-\xi_{\ell}}{\sigma}\biggr), \quad \ell=1,2.
%\end{equation*}

In the following, we demonstrate that there is no loss in optimality in assuming that the mass points of the capacity-achieving input distribution are located at $-\sqrt{\const{P}}$ and $\sqrt{\const{P}}$. Indeed, suppose that the optimal mass points are located at
\begin{equation}
\label{eq:proof_note1_xis}
-\sqrt{\const{P}} \leq \xi_1 < \xi_2<\sqrt{\const{P}}.
\end{equation}
Then, it follows from the strict monotonicity of the $Q$-function that 
%for every $\Upsilon\in\Reals$ and every $\xi_1\geq-\sqrt{\const{P}}$ we have
\begin{equation}
Q\biggl(\frac{\Upsilon-\xi_1}{\sigma}\biggr) < Q\biggl(\frac{\Upsilon-\xi_2}{\sigma}\biggr) < Q\biggl(\frac{\Upsilon-\sqrt{\const{P}}}{\sigma}\biggr).
\end{equation}
Since $W(\Upsilon|\xi_1)$ does not depend on $\xi_2$, this implies that for every $\Upsilon$ and $\xi_1$, the channel law $\vect{W}(\Upsilon|\bfxi)$ can be written as a convex combination of $\vect{W}(\Upsilon|\bfpsi)$ and $\vect{W}(\Upsilon|\bfzeta)$, where $\bfpsi=(\xi_1,\xi_1)$ and $\bfzeta=\bigl(\xi_1,\sqrt{\const{P}}\bigr)$. By the convexity of mutual information in the channel law, and by noting that $I\bigl(\vect{p},\vect{W}(\Upsilon|\bfpsi)\bigr)=0$, it follows that
\begin{equation}
I\bigl(\vect{p},\vect{W}(\Upsilon|\bfxi)\bigr) \leq I\bigl(\vect{p},\vect{W}(\Upsilon|\bfzeta)\bigr)
\end{equation}
for every $\Upsilon$ and $(\vect{p},\bfxi)$ satisfying \eqref{eq:proof_note1_xis}. Thus, $\xi_2=\sqrt{\const{P}}$ achieves the capacity. By repeating the same arguments for $\xi_1$, we obtain that the mass points of the capacity-achieving input distribution are located at $-\sqrt{\const{P}}$ and $\sqrt{\const{P}}$. It follows that the capacity can be expressed as
\begin{equation}
\label{eq:proof_note1_PPBAC}
C_{\textnormal{PP}}(\const{P}) = \max_{\Upsilon\in\Reals} C_{\Upsilon}(\const{P})
\end{equation}
where $C_{\Upsilon}(\const{P})$ denotes the capacity of the binary asymmetric channel with crossover probabilities
\begin{subequations}
\begin{IEEEeqnarray}{lCl}
W(0|1) & = & Q\biggl(\frac{\sqrt{\const{P}}-\Upsilon}{\sigma}\biggr) \label{eq:proof_note1_BACpq_a}\\
W(1|0) & = & Q\biggl(\frac{\sqrt{\const{P}}+\Upsilon}{\sigma}\biggr). \label{eq:proof_note1_BACpq_b}
\end{IEEEeqnarray}
\end{subequations}
For every $\Upsilon\in\Reals$, the capacity of the binary asymmetric channel can be computed as
\begin{equation}
\label{eq:proof_note1_BACC}
C_{\Upsilon}(\const{P}) = \log\Bigl(1+e^{-\theta}\Bigr) + \theta\,W(1|0) - H_b\bigl(W(1|0)\bigr)
\end{equation}
where
\begin{equation}
\theta \triangleq \frac{H_b\bigl(W(0|1)\bigr)-H_b\bigl(W(1|0)\bigr)}{1-W(0|1)-W(1|0)}.
\end{equation}
Combining \eqref{eq:proof_note1_BACC}, \eqref{eq:proof_note1_BACpq_a}, and \eqref{eq:proof_note1_BACpq_b} with \eqref{eq:proof_note1_PPBAC} yields
\begin{IEEEeqnarray}{lCl}
\label{eq:proof_note1_almost}
C_{\textnormal{PP}}(\const{P}) & = & \max_{\Upsilon\in\Reals} \Biggl\{ \log\Bigl(1+e^{-\Theta(\const{P},\Upsilon)}\Bigr) \nonumber\\
\IEEEeqnarraymulticol{3}{r}{\quad {}  + Q\Biggl(\frac{\sqrt{\const{P}}+\Upsilon}{\sigma}\Biggr) \Theta(\const{P},\Upsilon)  - H_b\Biggl(Q\Biggl(\frac{\sqrt{\const{P}}+\Upsilon}{\sigma}\Biggr)\Biggr) \Biggr\}\IEEEeqnarraynumspace}
\end{IEEEeqnarray}
where
\begin{equation}
\Theta(\const{P},\Upsilon) \triangleq \frac{H_b\Bigl(Q\Bigl(\frac{\sqrt{\const{P}}-\Upsilon}{\sigma}\Bigr)\Bigr)-H_b\Bigl(Q\Bigl(\frac{\sqrt{\const{P}}+\Upsilon}{\sigma}\Bigr)\Bigr)}{1-Q\Bigl(\frac{\sqrt{\const{P}}-\Upsilon}{\sigma}\Bigr)-Q\Bigl(\frac{\sqrt{\const{P}}+\Upsilon}{\sigma}\Bigr)}.
\end{equation}
Proposition~\ref{note:PP} follows then by noting that the RHS of
\eqref{eq:proof_note1_almost} is symmetric in $\Upsilon\in\Reals$, so the maximization in \eqref{eq:proof_note1_almost} can be restricted to $\Upsilon\geq 0$ without reducing \eqref{eq:proof_note1_almost}.

\subsection{Proof of Proposition~\ref{note:PPCUC}}
\label{sub:proof_note3}
It was shown in the previous section that the capacity is achieved with a threshold quantizer and a binary input distribution having mass points at $\sqrt{\const{P}}$ and $-\sqrt{\const{P}}$. Thus, the capacity can be expressed as
\begin{IEEEeqnarray}{lCl}
C_{\textnormal{PP}}\bigl(\const{P}\bigr) & = & \max_{\Upsilon\geq 0} \Biggl\{ H_b\Biggl(p_+ Q\biggl(\frac{\Upsilon-\const{A}}{\sigma}\biggr)+p_- Q\biggl(\frac{\Upsilon+\const{A}}{\sigma}\biggr)\Biggr) \nonumber\\
\IEEEeqnarraymulticol{3}{r}{\,\,\,\, {} - p_+ H_b\Biggl(Q\biggl(\frac{\Upsilon-\const{A}}{\sigma}\biggr)\Biggr)  - p_- H_b\Biggl(Q\biggl(\frac{\Upsilon+\const{A}}{\sigma}\biggr)\Biggr)\Biggr\} \IEEEeqnarraynumspace}\label{eq:proof_note3_C}
\end{IEEEeqnarray}
for some probabilities $0<p_+<1$ and $0<p_-<1$ satisfying $p_++p_-=1$. To simplify notation, we have introduced \mbox{$\const{A}\triangleq\sqrt{\const{P}}$} and we have made the dependence of $p_+$ and $p_-$ on $\Upsilon$ not explicit.

Expanding $H_b(\cdot)$ as a Taylor series around $Q\bigl(\Upsilon/\sigma\bigr)$, we obtain for the first term on the RHS of \eqref{eq:proof_note3_C}
\begin{IEEEeqnarray}{lCl}
\IEEEeqnarraymulticol{3}{l}{H_b\Biggl(p_+ Q\biggl(\frac{\Upsilon-\const{A}}{\sigma}\biggr)+p_- Q\biggl(\frac{\Upsilon+\const{A}}{\sigma}\biggr)\Biggr)}\nonumber\\
 \,\,\,\, & = & H_b\Biggl(Q\biggl(\frac{\Upsilon}{\sigma}\biggr)\Biggr) + \log\frac{1-Q\Bigl(\frac{\Upsilon}{\sigma}\Bigr)}{Q\Bigl(\frac{\Upsilon}{\sigma}\Bigr)}\times\nonumber\\
 & & \qquad {} \times\biggl[p_+ Q\biggl(\frac{\Upsilon-\const{A}}{\sigma}\biggr) +p_- Q\biggl(\frac{\Upsilon+\const{A}}{\sigma}\biggr)-Q\biggl(\frac{\Upsilon}{\sigma}\biggr)\biggr] \nonumber\\
& & {} - \frac{1}{2Q\bigl(\frac{\Upsilon}{\sigma}\bigr)\bigl[1-Q\bigl(\frac{\Upsilon}{\sigma}\bigr)\bigr]}\times\nonumber\\
& & \quad {} \times \Biggl[p_+ Q\biggl(\frac{\Upsilon-\const{A}}{\sigma}\biggr) + p_- Q\biggl(\frac{\Upsilon+\const{A}}{\sigma}\biggr)-Q\biggl(\frac{\Upsilon}{\sigma}\biggr)\Biggr]^2 \nonumber\\
& & {} + \mathsf{R}_H(\const{A},\Upsilon,p_+) \IEEEeqnarraynumspace\label{eq:proof_note3_Taylor1}
\end{IEEEeqnarray}
where 
\begin{IEEEeqnarray}{lCl}
\mathsf{R}_H(\const{A},\Upsilon,p_+) & \triangleq & \frac{1-2\tilde{p}}{6 \tilde{p}\,(1-\tilde{p})} \times\nonumber\\
\IEEEeqnarraymulticol{3}{r}{\quad \times\Biggl[p_+ Q\biggl(\frac{\Upsilon-\const{A}}{\sigma}\biggr) +p_- Q\biggl(\frac{\Upsilon+\const{A}}{\sigma}\Biggr)-Q\biggl(\frac{\Upsilon}{\sigma}\biggr)\Biggr]^3 \IEEEeqnarraynumspace}\label{eq:proof_note3_H}
\end{IEEEeqnarray}
for some $\tilde{p}\in\bigl[Q\bigl((\Upsilon+\const{A})/\sigma\bigr),Q\bigl((\Upsilon-\const{A})/\sigma\bigr)\bigr]$. Expanding the $Q$-function as a Taylor series around $\Upsilon/\sigma$ yields
\begin{IEEEeqnarray}{lCl}
\IEEEeqnarraymulticol{3}{l}{p_+ Q\biggl(\frac{\Upsilon-\const{A}}{\sigma}\biggr)+p_- Q\biggl(\frac{\Upsilon+\const{A}}{\sigma}\biggr)-Q\biggl(\frac{\Upsilon}{\sigma}\biggr)} \nonumber\\
\quad & = & (p_+-p_-) \frac{\const{A}}{\sigma} \frac{1}{\sqrt{2\pi}} e^{-\frac{\Upsilon^2}{2\sigma^2}} + \mathsf{R}_Q(\const{A},\Upsilon,p_+) \label{eq:proof_note3_Q}
\end{IEEEeqnarray}
where
\begin{equation}
\label{eq:proof_note3_RQ_def}
\mathsf{R}_Q(\const{A},\Upsilon,p_+) \triangleq \frac{\const{A}^2}{2\sigma^2} \frac{\tilde{x}}{\sqrt{2\pi\sigma^2}} e^{-\frac{\tilde{x}^2}{2\sigma^2}}
\end{equation}
for some $\tilde{x}\in[\Upsilon-\const{A},\Upsilon+\const{A}]$.
Note that 
\begin{equation}
\bigl|\tilde{x}\exp\bigl(-\tilde{x}^2/(2\sigma^2)\bigr)\bigr|\leq \sigma/\sqrt{e}
\end{equation}
so $\mathsf{R}_Q(\const{A},\Upsilon,p_+)$ satisfies
\begin{equation}
\label{eq:proof_note3_RQ}
\bigl|\mathsf{R}_Q(\const{A},\Upsilon,p_+)\bigr| \leq \frac{\const{A}^2}{2\sigma^2 \sqrt{2\pi e}}, \qquad 0\leq p_+ \leq 1.
\end{equation}
Combining \eqref{eq:proof_note3_Q} with \eqref{eq:proof_note3_Taylor1}, we obtain for the first term on the RHS of \eqref{eq:proof_note3_C}
\begin{IEEEeqnarray}{lCl}
\IEEEeqnarraymulticol{3}{l}{H_b\Biggl(p_+ Q\biggl(\frac{\Upsilon-\const{A}}{\sigma}\biggr)+p_- Q\biggl(\frac{\Upsilon+\const{A}}{\sigma}\biggr)\Biggr)}\nonumber\\
\quad & = & H_b\Biggl(Q\biggl(\frac{\Upsilon}{\sigma}\biggr)\Biggr)+\log\frac{1-Q\Bigl(\frac{\Upsilon}{\sigma}\Bigr)}{Q\Bigl(\frac{\Upsilon}{\sigma}\Bigr)}\times\nonumber\\
& & \qquad {} \times\biggl[p_+ Q\biggl(\frac{\Upsilon-\const{A}}{\sigma}\biggr)+p_- Q\biggl(\frac{\Upsilon+\const{A}}{\sigma}\biggr)-Q\biggl(\frac{\Upsilon}{\sigma}\biggr)\biggr] \nonumber\\
& & {} - \frac{1}{2Q\bigl(\frac{\Upsilon}{\sigma}\bigr)\bigl[1-Q\bigl(\frac{\Upsilon}{\sigma}\bigr)\bigr]}\times\nonumber\\
& & \quad {} \times\biggl[ (p_+-p_-) \frac{\const{A}}{\sigma}\frac{1}{\sqrt{2\pi}}e^{-\frac{\Upsilon^2}{2\sigma^2}} + \mathsf{R}_Q(\const{A},\Upsilon,p_+)\biggr]^2 \nonumber\\
& & {} + \mathsf{R}_H(\const{A},\Upsilon,p_+) \nonumber\\
& = & H_b\Biggl(Q\biggl(\frac{\Upsilon}{\sigma}\biggr)\Biggr)+\log\frac{1-Q\Bigl(\frac{\Upsilon}{\sigma}\Bigr)}{Q\Bigl(\frac{\Upsilon}{\sigma}\Bigr)}\times\nonumber\\
& & \qquad {} \times\biggl[p_+ Q\biggl(\frac{\Upsilon-\const{A}}{\sigma}\biggr)+p_- Q\biggl(\frac{\Upsilon+\const{A}}{\sigma}\biggr)-Q\biggl(\frac{\Upsilon}{\sigma}\biggr)\biggr] \nonumber\\
& & {} - \frac{\const{A}^2}{\sigma^2}\frac{e^{-\frac{\Upsilon^2}{\sigma^2}}}{4\pi Q\bigl(\frac{\Upsilon}{\sigma}\bigr)\bigl[1-Q\bigl(\frac{\Upsilon}{\sigma}\bigr)\bigr]}  (p_+-p_-)^2 \nonumber\\
& & {} + \mathsf{K}(\const{A},\Upsilon,p_+) + \mathsf{R}_H(\const{A},\Upsilon,p_+) \label{eq:proof_note3_first}
\end{IEEEeqnarray}
where
\begin{IEEEeqnarray}{lCl}
\mathsf{K}(\const{A},\Upsilon,p_+) & \triangleq & - \frac{2(p_+-p_-)\frac{1}{\sqrt{2\pi}}e^{-\frac{\Upsilon^2}{2\sigma^2}} \frac{\const{A}}{\sigma}\mathsf{R}_Q(\const{A},\Upsilon,p_+)}{2Q\bigl(\frac{\Upsilon}{\sigma}\bigr)\bigl[1-Q\bigl(\frac{\Upsilon}{\sigma}\bigr)\bigr]}\nonumber\\
& & {} - \frac{\bigl|\mathsf{R}_Q(\const{A},\Upsilon,p_+)\bigr|^2}{2Q\bigl(\frac{\Upsilon}{\sigma}\bigr)\bigl[1-Q\bigl(\frac{\Upsilon}{\sigma}\bigr)\bigr]}.
\end{IEEEeqnarray}

Taylor-series expansions for the last two terms on the RHS of \eqref{eq:proof_note3_C} follow directly from \eqref{eq:proof_note3_first} by setting $p_+$ to $1$ and to $0$. Thus, by applying \eqref{eq:proof_note3_first} to \eqref{eq:proof_note3_C}, and by using that $p_++p_-=1$, we obtain
\begin{IEEEeqnarray}{lCl}
C_{\textnormal{PP}}(\const{P}) & = & \max_{\Upsilon\geq 0} \left\{\frac{\const{A}^2}{\sigma^2}\frac{e^{-\frac{\Upsilon^2}{\sigma^2}}}{4\pi Q\bigl(\frac{\Upsilon}{\sigma}\bigr)\bigl[1-Q\bigl(\frac{\Upsilon}{\sigma}\bigr)\bigr]} \bigl[1 - (p_+-p_-)^2\bigr]\right.\nonumber\\
& & \quad\qquad {}+ \mathsf{K}(\const{A},\Upsilon,p_+) + \mathsf{R}_H(\const{A},\Upsilon,p_+)\nonumber\\
& & \quad\qquad {} - p_+ \bigl[\mathsf{K}(\const{A},\Upsilon,1)+\mathsf{R}_H(\const{A},\Upsilon,1)\bigr] \nonumber\\
& & \quad\qquad\! {} \left.\vphantom{\frac{e^{-\frac{\Upsilon^2}{\sigma^2}}}{4\pi Q\bigl(\frac{\Upsilon}{\sigma}\bigr)\bigl[1-Q\bigl(\frac{\Upsilon}{\sigma}\bigr)\bigr]}} - p_- \bigl[\mathsf{K}(\const{A},\Upsilon,0)+\mathsf{R}_H(\const{A},\Upsilon,0)\bigr] \right\}. \label{eq:proof_note3_sec} \IEEEeqnarraynumspace
\end{IEEEeqnarray}
As shown in Appendix~\ref{app:proof_note3}, we have
\begin{subequations}
\begin{IEEEeqnarray}{rCll}
\lim_{\const{A}\downarrow 0} \sup_{\Upsilon\geq 0} \frac{|\mathsf{R}_H(\const{A},\Upsilon,p_+)|}{\const{A}^2} & = & 0, \quad & 0\leq p_+ \leq 1\label{eq:proof_note3_claim1} \\
\lim_{\const{A} \downarrow 0} \sup_{\Upsilon\geq 0} \frac{|\mathsf{K}(\const{A},\Upsilon,p_+)|}{\const{A}^2} & = & 0, \quad & 0\leq p_+\leq 1.\IEEEeqnarraynumspace \label{eq:proof_note3_claim2}
\end{IEEEeqnarray}
\end{subequations}
Using \eqref{eq:proof_note3_claim1}, \eqref{eq:proof_note3_claim2}, and the Triangle Inequality, \eqref{eq:proof_note3_sec} can thus be upper-bounded by
\begin{IEEEeqnarray}{lCl}
C_{\textnormal{PP}}(\const{P}) & \leq & \sup_{\Upsilon\geq 0} \frac{\const{A}^2}{\sigma^2}\frac{e^{-\frac{\Upsilon^2}{\sigma^2}}\bigl[1 - (p_+-p_-)^2\bigr]}{4\pi Q\bigl(\frac{\Upsilon}{\sigma}\bigr)\bigl[1-Q\bigl(\frac{\Upsilon}{\sigma}\bigr)\bigr]} + o\bigl(\const{A}^2\bigr) \IEEEeqnarraynumspace\label{eq:proof_note3_sec_2}
\end{IEEEeqnarray}
where $\lim_{\const{A}\downarrow 0} o\bigl(\const{A}^2\bigr)/\const{A}^2=0$. Consequently, dividing \eqref{eq:proof_note3_sec_2} by $\const{P}=\const{A}^2$ and computing the limit as $\const{P}$ tends to zero, yields
\begin{IEEEeqnarray}{lCl}
\lim_{\const{P}\downarrow 0} \frac{C_{\textnormal{PP}}(\const{P})}{\const{P}}  & \leq & \sup_{\Upsilon\geq 0} \frac{1}{\sigma^2}\frac{e^{-\frac{\Upsilon^2}{\sigma^2}}\bigl[1 - (p_+-p_-)^2\bigr]}{4\pi Q\bigl(\frac{\Upsilon}{\sigma}\bigr)\bigl[1-Q\bigl(\frac{\Upsilon}{\sigma}\bigr)\bigr]} \nonumber\\
& \leq & \sup_{\Upsilon\geq 0} \frac{e^{-\frac{\Upsilon^2}{\sigma^2}}}{4\pi Q\bigl(\frac{\Upsilon}{\sigma}\bigr)\bigl[1-Q\bigl(\frac{\Upsilon}{\sigma}\bigr)\bigr]} \frac{1}{\sigma^2} \label{eq:proof_note3_CUC}
\end{IEEEeqnarray}
where the second inequality holds with equality for \mbox{$p_+=p_-=1/2$}.

\begin{figure}[t!]
\centering
\psfrag{U}[c][c]{$u$}
\psfrag{g}[b][b]{$g(u)$}
\begin{center}
\epsfig{file=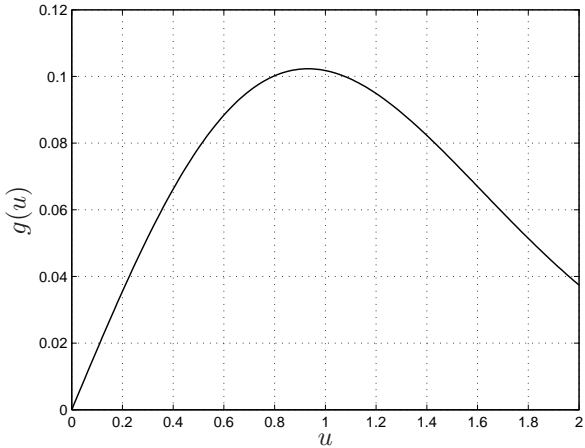,width=0.48\textwidth}
\end{center}
 \caption{The function $u \mapsto g(u)$ for $0\leq u \leq 2$.}
 \label{fig:g_plot}
\end{figure}

It remains to show that the maximum on the RHS of \eqref{eq:proof_note3_CUC} is attained for $\Upsilon=0$. To this end, we argue that the function
\begin{equation}
f(\Upsilon) \triangleq \frac{e^{-\frac{\Upsilon^2}{\sigma^2}}}{Q\bigl(\frac{\Upsilon}{\sigma}\bigr)\bigl[1-Q\bigl(\frac{\Upsilon}{\sigma}\bigr)\bigr]}, \quad \Upsilon\geq 0
\end{equation} 
is monotonically decreasing in $\Upsilon\geq 0$. Indeed, the first derivative of $f(\cdot)$ is given by
\begin{IEEEeqnarray}{lCl}
f'(\Upsilon) & = & - \frac{\frac{1}{\sigma} e^{-\frac{\Upsilon^2}{\sigma^2}}}{\bigl[Q\bigl(\frac{\Upsilon}{\sigma}\bigr)\bigr]^2\bigl[1-Q\bigl(\frac{\Upsilon}{\sigma}\bigr)\bigr]^2} \, g\biggl(\frac{\Upsilon}{\sigma}\biggr), \quad \Upsilon\geq 0 \IEEEeqnarraynumspace \label{eq:proof_note3_fprime}
\end{IEEEeqnarray}
where
\begin{equation}
g(u) \triangleq 2u Q(u)[1-Q(u)]-\frac{e^{-\frac{u^2}{2}}}{\sqrt{2\pi}}[1-2 Q(u)]
\end{equation}
for $u\geq 0$. For $u\geq 2$, we lower-bound the $Q$-function as \cite[Prop.~19.4.2]{lapidoth09}
\begin{equation}
Q(u) > \frac{3}{4}\frac{1}{\sqrt{2\pi}u} e^{-\frac{u^2}{2}}, \quad u \geq 2
\end{equation}
to obtain
\begin{IEEEeqnarray}{lCl}
g(u) & > & \frac{3}{2}\frac{e^{-\frac{u^2}{2}}}{\sqrt{2\pi}}[1-Q(u)] - \frac{e^{-\frac{u^2}{2}}}{\sqrt{2\pi}}[1-2 Q(u)] \nonumber\\
& = & \frac{e^{-\frac{u^2}{2}}}{\sqrt{8\pi}}[1+Q(u)] \nonumber\\
& > & 0.
\end{IEEEeqnarray}
%\begin{IEEEeqnarray}{lCl}
%g(u)  & \geq & \frac{1}{\sqrt{2\pi}} e^{-\frac{u^2}{2}} \biggl[\frac{1}{2}-\sqrt{\frac{2}{\pi}}\frac{1}{4u} e^{-\frac{u^2}{2}}\biggr] \nonumber\\
%& \geq & \frac{1}{\sqrt{2\pi}} e^{-\frac{u^2}{2}} \biggl[\frac{1}{2}-\sqrt{\frac{2}{\pi}}\frac{1}{8}e^{-2}\biggr] \nonumber\\
%& \approx & 0.1941\cdot e^{-\frac{u^2}{2}} \nonumber\\
%& > & 0, \quad u\geq 2. \label{eq:proof_note3_g}
%\end{IEEEeqnarray}
%Here the second step follows because $\frac{1}{u} \exp(-u^2/2)$ is monotonically decreasing in $u\geq 2$. 
For $0\leq u \leq 2$, it can be shown numerically that $g(u)\geq 0$; see Figure~\ref{fig:g_plot}. 

It thus follows that $g\bigl(\Upsilon/\sigma\bigr)\geq 0$, $\Upsilon/\sigma \geq 0$ and hence, by \eqref{eq:proof_note3_fprime}, $f'(\Upsilon)\leq 0$, $\Upsilon\geq 0$. Consequently,
\begin{equation}
\label{eq:proof_note3_fmax}
\max_{\Upsilon\geq 0} f(\Upsilon) = f(0) = 4
\end{equation}
which together with \eqref{eq:proof_note3_CUC} yields
\begin{equation}
\label{eq:doby_doby_doby}
\lim_{\const{P}\downarrow 0} \frac{C_{\textnormal{PP}}(\const{P})}{\const{P}} \leq \frac{1}{\pi\sigma^2}.
\end{equation}
Noting that the RHS of \eqref{eq:doby_doby_doby} is achieved for $p_+=p_-=1/2$ and a symmetric threshold quantizer (cf.\ \eqref{eq:CUEsym}), this proves Proposition~\ref{note:PPCUC}.

\section{Proofs: Fading Channels}
\label{sec:noncoherent_proofs}

\subsection{Proof of Theorem~\ref{prop:coherent}}
\label{sub:proof_prop7}
We will lower-bound the RHS of \eqref{eq:CUC_coherent} by restricting the supremum to radial quantizers
\begin{equation}
\label{eq:circsymthres}
\set{D} = \bigl\{\tilde{y}\in\Complex\colon |\tilde{y}|\geq\Upsilon\bigr\}, \quad \Upsilon> 0
\end{equation}
and thus demonstrate that
\begin{equation}
\label{eq:tobi_CUCFC_LB}
\dot{C}(0) \geq \frac{1}{\sigma^2}.
\end{equation}
Together with the upper bound \eqref{eq:CUC_unquant}, this will prove Theorem~\ref{prop:coherent}.

To prove \eqref{eq:tobi_CUCFC_LB}, note that, conditioned on $(H,X)=(h,x)$, the squared magnitude of $\sqrt{2/\sigma^2}\tilde{Y}$ has a noncentral chi-square distribution with $2$ degrees of freedom and noncentrality parameter $\frac{2}{\sigma^2} |h|^2 |x|^2$ \cite[p.~8]{simon02}. Consequently, a radial quantizer induces the channel \cite[Sec.~2-E]{simon02}
\begin{IEEEeqnarray}{lCl}
\IEEEeqnarraymulticol{3}{l}{\Prob\bigl(Y=1\bigm|H=h,X=x\bigr)}\nonumber\\\
\qquad\qquad & = & Q_1\Biggl(\sqrt{\frac{2}{\sigma^2}}|h| |x|, \sqrt{\frac{2}{\sigma^2}}\Upsilon\Biggr)
\end{IEEEeqnarray}
for $h\in\Complex$, $x\in\Complex$, and $\Upsilon>0$, where $Q_1(\cdot,\cdot)$ denotes the first-order Marcum $Q$-function \cite[Eq.~(2.20)]{simon02}. For $x=0$ this becomes
\begin{equation}
\Prob\bigl(Y=1\bigm|H=h,X=0\bigr) = e^{-\frac{\Upsilon^2}{\sigma^2}}
\end{equation}
for $h\in\Complex$ and $\Upsilon>0$. This yields
\begin{subequations}
\begin{IEEEeqnarray}{lCl}
\IEEEeqnarraymulticol{3}{l}{D\bigl(P_{Y|H,X=\xi}\bigm\| P_{Y|H,X=0}\bigm| P_H\bigr)}\nonumber\\
\,\, & = & \E{Q_1\Biggl(\sqrt{\frac{2}{\sigma^2}}|H||\xi|, \sqrt{\frac{2}{\sigma^2}}\Upsilon\Biggr)\log\frac{1}{e^{-\frac{\Upsilon^2}{\sigma^2}}}} \nonumber\\
& & {} + \E{\Biggl\{1-Q_1\Biggl(\sqrt{\frac{2}{\sigma^2}}|H||\xi|, \sqrt{\frac{2}{\sigma^2}}\Upsilon\Biggr)\Biggr\}\log\frac{1}{1-e^{-\frac{\Upsilon^2}{\sigma^2}}}} \nonumber\\
& &  {} - \E{H_b\Biggl(Q_1\Biggl(\sqrt{\frac{2}{\sigma^2}}|H||\xi|,\sqrt{\frac{2}{\sigma^2}}\Upsilon\Biggr)\Biggr)} \label{eq:proof_prop7_3a}\\
& \geq & \E{Q_1\Biggl(\sqrt{\frac{2}{\sigma^2}}|H||\xi|, \sqrt{\frac{2}{\sigma^2}}\Upsilon\Biggr) \frac{\Upsilon^2}{\sigma^2}} - \log 2 \label{eq:proof_prop7_3b}
\end{IEEEeqnarray}
\end{subequations}
where \eqref{eq:proof_prop7_3b} follows because the second term in
\eqref{eq:proof_prop7_3a} is nonnegative, and because the binary
entropy function is upper-bounded by $\log 2$.

By applying \eqref{eq:proof_prop7_3b} to \eqref{eq:CUC_coherent}, we obtain
\begin{IEEEeqnarray}{lCl}
\dot{C}(0) & \geq & \sup_{\substack{\xi\neq 0,\\\Upsilon>0}} \Biggl\{ \E{Q_1\Biggl(\sqrt{\frac{2}{\sigma^2}}|H||\xi|, \sqrt{\frac{2}{\sigma^2}}\Upsilon\Biggr) \frac{\Upsilon^2}{|\xi|^2\sigma^2}} \nonumber\\
& & \quad\qquad\qquad\qquad\qquad\qquad\qquad {} - \frac{1}{|\xi|^2}\log 2\Biggr\}.\IEEEeqnarraynumspace\label{eq:proof_prop7_3}
\end{IEEEeqnarray}
We lower-bound the supremum on the RHS of \eqref{eq:proof_prop7_3} by choosing $\Upsilon = \mu |h| |\xi|$ for some fixed $0<\mu<1$ and by taking $|\xi|$ to infinity. We then lower-bound the first-order Marcum \mbox{$Q$-function} using \cite[Sec.~C-2, Eq.~(C.24)]{simon02}
\begin{IEEEeqnarray}{lCl}
\IEEEeqnarraymulticol{3}{l}{Q_1(\alpha,\beta)} \nonumber\\
\quad & \geq & 1- \frac{1}{2}\Biggl[\exp\biggl(-\frac{(\alpha-\beta)^2}{2}\biggr)-\exp\biggl(-\frac{(\alpha+\beta)^2}{2}\biggr)\Biggr]\IEEEeqnarraynumspace
\end{IEEEeqnarray}
for $\alpha>\beta\geq 0$. This yields
\begin{IEEEeqnarray}{lCl}
\IEEEeqnarraymulticol{3}{l}{\dot{C}(0)}\nonumber\\
\,\, & \geq & \frac{\mu^2\E{|H|^2}}{\sigma^2} \nonumber\\
 & & {}  - \lim_{|\xi|\to\infty} \frac{1}{2|\xi|^2}\E{\exp\biggl(-\frac{|H|^2|\xi|^2}{\sigma^2}(1-\mu)^2\biggr)\frac{\mu^2|H|^2|\xi|^2}{\sigma^2}} \nonumber\\
& & {} +\lim_{|\xi|\to\infty} \frac{1}{2|\xi|^2}\E{\exp\biggl(-\frac{|H|^2|\xi|^2}{\sigma^2}(1+\mu)^2\biggr)\frac{\mu^2|H|^2|\xi|^2}{\sigma^2}} \label{eq:proof_prop7_before5}\nonumber\\
& \geq & \frac{\mu^2\E{|H|^2}}{\sigma^2} - \lim_{|\xi|\to\infty} \frac{\mu^2}{2|\xi|^2 e\,(1-\mu)^2} \nonumber\\
& = &  \frac{\mu^2\E{|H|^2}}{\sigma^2} \label{eq:proof_prop7_5}
\end{IEEEeqnarray}
where the second step follows because $0\leq xe^{-\alpha x}\leq 1/(e\alpha)$ for every $x\geq 0$ and $\alpha>0$. This establishes \eqref{eq:tobi_CUCFC_LB} because $H$ is of unit variance and $\mu$ can be arbitrarily close to $1$.

\subsection{Proof of Theorem~\ref{prop:noncoherent}}
\label{sub:proof_prop8}
By the Data Processing Inequality for Relative Entropy, the relative entropy on the RHS of \eqref{eq:CUC_noncoherent} is
upper-bounded by the relative entropy corresponding to the unquantized
channel, i.e., \cite[Eq.~(64)]{verdu02}
\begin{equation}
\frac{D\bigl(P_{Y|X=\xi}\bigm\| P_{Y|X=0}\bigr)}{|\xi|^2} \leq \frac{1}{\sigma^2} - \frac{\log\Bigl(1+\frac{|\xi|^2}{\sigma^2}\Bigr)}{|\xi|^2}. \label{eq:proof_prop8_1}
\end{equation}
Consequently, the capacity per unit-energy \eqref{eq:CUC_noncoherent} is
strictly smaller than $1/\sigma^2$ unless the supremum on the RHS of
\eqref{eq:CUC_noncoherent} is approached as $|\xi|$ tends to
infinity. It thus remains to show that
\begin{equation}
\label{eq:proof_prop8_claim}
\varlimsup_{|\xi|\to\infty} \sup_{\set{D}} \frac{D\bigl(P_{Y|X=\xi}\bigm\| P_{Y|X=0}\bigr)}{|\xi|^2} < \frac{1}{\sigma^2}.
\end{equation}
To this end, we first note that, for every $\xi\neq 0$, the supremum
in \eqref{eq:proof_prop8_claim} over all quantizers $\set{D}$ can be
replaced with the supremum over all \emph{radial} quantizers
\eqref{eq:circsymthres}. Indeed, for every quantization region
satisfying
\begin{equation*}
%\label{eq:proof_prop8_beta}
\Prob\bigl(Y=1\bigm|X=\xi\bigr) = \beta, \quad 0<\beta< 1
\end{equation*}
the relative entropy
\begin{IEEEeqnarray}{lCl}
\IEEEeqnarraymulticol{3}{l}{D\bigl(P_{Y|X=\xi}\bigm\| P_{Y|X=0}\bigr)}\nonumber\\
\quad & = & \beta\log\frac{1}{\Prob\bigl(Y=1\bigm| X=0\bigr)} \nonumber\\
& & {}  + (1-\beta)\log\frac{1}{1-\Prob\bigl(Y=1|X=0\bigr)} - H_b(\beta) \IEEEeqnarraynumspace\label{eq:proof_prop8_D_beta}
\end{IEEEeqnarray}
is a convex function of $\Prob\bigl(Y=1\bigm|X=0\bigr)$. Thus, for every $0<\beta<1$, the RHS of \eqref{eq:proof_prop8_D_beta} is maximized for the quantization region that minimizes (or maximizes) $\Prob\bigl(Y=1\bigm| X=0\bigr)$ while holding $\Prob\bigl(Y=1\bigm|X=\xi\bigr)=\beta$ fixed. By the Neyman-Pearson Lemma \cite{neymanpearson32}, such a quantization region has the form
\begin{equation}
\label{eq:proof_prop8_NP_D_1}
\set{D}^{\star} = \biggl\{\tilde{y}\in\Complex\colon \frac{f(\tilde{y}|0)}{f(\tilde{y}|\xi)}\leq \Lambda \biggr\}, \quad \Lambda>0
\end{equation}
(or the complement thereof), where $f(\tilde{y}|x)$ denotes the conditional density of $\tilde{Y}$, conditioned on $X=x$, and where $\Lambda$ is such that $\Prob\bigl(\tilde{Y}\in\set{D}^{\star}\bigm|X=\xi\bigr)=\beta$.
(Note that for every $0<\beta< 1$ there exists such a $\Lambda$ since, for the channel model \eqref{eq:complex_channel}, $\Prob\bigl(\tilde{Y}\in\set{D}^{\star}\bigm|X=\xi\bigr)$ is a continuous, strictly increasing function of $\Lambda>0$.) 
The likelihood ratio on the RHS of \eqref{eq:proof_prop8_NP_D_1} is given by
\begin{equation}
\frac{f(\tilde{y}|0)}{f(\tilde{y}|\xi)} = \biggl(1+\frac{|\xi|^2}{\sigma^2}\biggr) e^{-\frac{|\tilde{y}|^2}{\sigma^2}\frac{|\xi|^2}{\sigma^2+|\xi|^2}}, \quad \tilde{y}\in\Complex
\end{equation}
so \eqref{eq:proof_prop8_NP_D_1} is a radial quantizer with threshold
\begin{equation}
\Upsilon = \sigma\sqrt{\biggl(1+\frac{\sigma^2}{|\xi|^2}\biggr)\log\left(\frac{1+\frac{|\xi|^2}{\sigma^2}}{\Lambda}\right)}.
\end{equation}
Thus, for every $0<\beta<1$, the RHS of \eqref{eq:proof_prop8_D_beta} is maximized by a radial quantizer whose threshold is a function of $\beta$. This implies that, for every nonzero $\xi$, the relative entropy
$D(P_{Y|X=\xi} \| P_{Y|X=0})$ is maximized by a radial
quantizer. Such a quantizer induces the channel
\begin{equation}
\Prob\bigl(Y=1\bigm| X=x\bigr) = \exp\biggl(-\frac{\Upsilon^2}{|x|^2+\sigma^2}\biggr)
\end{equation}
for $x\in\Complex$ and $\Upsilon> 0$. Consequently,
\begin{IEEEeqnarray}{lCl}
\IEEEeqnarraymulticol{3}{l}{D\bigl(P_{Y|X=\xi}\bigm\| P_{Y|X=0}\bigr)} \nonumber\\
\quad & = & e^{-\frac{\Upsilon^2}{|\xi|^2+\sigma^2}} \log\frac{1}{e^{-\frac{\Upsilon^2}{\sigma^2}}} \nonumber\\
& & {} + \biggl[1-e^{-\frac{\Upsilon^2}{|\xi|^2+\sigma^2}}\biggr] \log\frac{1}{1-e^{-\frac{\Upsilon^2}{\sigma^2}}} - H_b\biggl(e^{-\frac{\Upsilon^2}{|\xi|^2+\sigma^2}}\biggr)\nonumber\\
& \leq &  \frac{\Upsilon^2}{\sigma^2} e^{-\frac{\Upsilon^2}{|\xi|^2+\sigma^2}} - \biggl[1-e^{-\frac{\Upsilon^2}{\sigma^2}}\biggr] \log\biggl(1-e^{-\frac{\Upsilon^2}{\sigma^2}}\biggr) \nonumber\\
& \leq & \frac{\Upsilon^2}{\sigma^2} e^{-\frac{\Upsilon^2}{|\xi|^2+\sigma^2}} + \frac{1}{e} \label{eq:proof_prop8_2}
\end{IEEEeqnarray}
where the second step follows because $H_b(\cdot)\geq 0$ and $\exp\bigl(-\Upsilon^2/(|\xi|^2+\sigma^2)\bigr) \geq \exp\bigl(-\Upsilon^2/\sigma^2\bigr)$; and the third step follows because $-x\log x\leq\frac{1}{e}$, $0<x<1$.

The first term on the RHS of \eqref{eq:proof_prop8_2} is maximized for \mbox{$\Upsilon^2=|\xi|^2+\sigma^2$}, which yields
\begin{equation}
\frac{\Upsilon^2}{\sigma^2} e^{-\frac{\Upsilon^2}{|\xi|^2+\sigma^2}} \leq \frac{|\xi|^2}{e\,\sigma^2}+\frac{1}{e}, \quad \Upsilon>0.
\end{equation}
The RHS of \eqref{eq:proof_prop8_2} is thus upper-bounded by
\begin{equation}
\label{eq:proof_prop8_3}
D\bigl(P_{Y|X=\xi}\bigm\| P_{Y|X=0}\bigr) \leq \frac{|\xi|^2}{e\, \sigma^2} + \frac{2}{e}.
\end{equation}
Dividing both sides of \eqref{eq:proof_prop8_3} by $|\xi|^2$, and computing the limit as $|\xi|$ tends to infinity, yields
\begin{equation}
\varlimsup_{|\xi|\to\infty}\sup_{\set{D}} \frac{D\bigl(P_{Y|X=\xi}\bigm\| P_{Y|X=0}\bigr)}{|\xi|^2} \leq \frac{1}{e\,\sigma^2} < \frac{1}{\sigma^2}.
\end{equation}
This proves Theorem~\ref{prop:noncoherent}.

\section{Summary and Conclusion}
\label{sec:conclusion}
It is well-known that quantizing the output of the discrete-time,
average-power-limited, Gaussian channel using a symmetric threshold
quantizer reduces the capacity per unit-energy by a factor of $2/\pi$,
a loss which translates to a power loss of approximately 2dB. We have shown
that this loss can be avoided by using asymmetric threshold quantizers
with corresponding asymmetric signal constellations. Moreover, the capacity per unit-energy can be achieved by a PPM scheme. For this scheme, the error probability can be analyzed
directly using the Union Bound and the standard upper bound  on
the $Q$-function~\eqref{eq:QUB}. There is no need to resort to conventional methods used
to prove coding theorems such as the method of types,
information-spectrum methods, or random coding exponents.

The above results demonstrate that the 2dB power loss incurred on the
Gaussian channel with symmetric one-bit output quantization is not due
to the hard decisions but due to the suboptimal quantizer. In
fact, if we employ an asymmetric threshold quantizer, and if we use
asymmetric signal constellations, then hard-decision decoding achieves
the capacity per unit-energy of the Gaussian channel.

The above results also demonstrate that a threshold quantizer is asymptotically optimal as the SNR tends to zero. This is not only true asymptotically: for every fixed SNR, we have shown that, among all one-bit quantizers, a threshold quantizer is optimal.

We have also shown that the capacity per unit-energy can only be
achieved by flash-signaling input distributions. Since such
signaling leads to poor spectral efficiencies, a significant loss in spectral efficiency is unavoidable. Thus, while one-bit output quantization does not reduce the capacity per unit-energy, it does reduce the spectral efficiency.

For Rayleigh-fading channels, we have shown that, in the coherent case, a one-bit quantizer does not reduce the capacity per unit-energy, provided that we allow the quantizer to depend on the fading level. This is no longer true in the noncoherent
case: here all one-bit output quantizers reduce the capacity per
unit-energy.

\appendices

%\section{Appendix to Section~\ref{sub:dobi_3MP}}
\section{}
\label{app:amos}
\begin{lemma}
\label{lem:amos_cont}
Let $\set{D}$ be a Borel subset of the reals, and let the sequence of real
numbers $\{x_{k}\}$ converge to $\xi$. Let $Z$ be a zero-mean
Gaussian random variable of positive variance $\sigma^{2}$. Then
\begin{equation}
  \label{eq:amos_cont10}
  \lim_{k \to \infty} \Prob\bigl(x_{k} + Z \in \set{D}\bigr) = \Prob\bigl(\xi + Z \in \set{D}\bigr).
\end{equation}
\end{lemma}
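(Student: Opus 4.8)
The plan is to reduce the claim to pointwise convergence of Gaussian densities, exploiting that $x_k + Z$ and $\xi + Z$ are absolutely continuous with respect to Lebesgue measure. Writing $f_k(y) \triangleq \frac{1}{\sqrt{2\pi\sigma^2}} e^{-(y-x_k)^2/(2\sigma^2)}$ and $f(y) \triangleq \frac{1}{\sqrt{2\pi\sigma^2}} e^{-(y-\xi)^2/(2\sigma^2)}$, we have for every Borel set $\set{D}$
\[
\Prob\bigl(x_k + Z \in \set{D}\bigr) = \int_{\set{D}} f_k(y)\,\d y, \qquad \Prob\bigl(\xi + Z \in \set{D}\bigr) = \int_{\set{D}} f(y)\,\d y .
\]
Since $x_k \to \xi$ and the map $t \mapsto e^{-(y-t)^2/(2\sigma^2)}$ is continuous, $f_k(y) \to f(y)$ for every $y \in \Reals$.

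First I would invoke Scheff\'e's lemma. Each $f_k$ and $f$ is a probability density on $\Reals$, and $f_k \to f$ pointwise, so $\int_{\Reals} \bigl|f_k(y) - f(y)\bigr|\,\d y \to 0$ as $k \to \infty$. Hence, uniformly over all Borel sets $\set{D}$,
\[
\bigl|\Prob(x_k + Z \in \set{D}) - \Prob(\xi + Z \in \set{D})\bigr| = \biggl|\int_{\set{D}} \bigl(f_k(y) - f(y)\bigr)\,\d y\biggr| \leq \int_{\Reals} \bigl|f_k(y) - f(y)\bigr|\,\d y ,
\]
which tends to zero as $k \to \infty$, establishing \eqref{eq:amos_cont10}.

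Alternatively, one can argue by dominated convergence and avoid citing Scheff\'e. Since $\{x_k\}$ converges it is bounded, say $|x_k| \leq B$ for all $k$; then for $|y| > B$ we have $(y - x_k)^2 \geq (|y| - B)^2$, so $f_k(y) \leq g(y)$, where $g(y) \triangleq \frac{1}{\sqrt{2\pi\sigma^2}}$ for $|y| \leq B$ and $g(y) \triangleq \frac{1}{\sqrt{2\pi\sigma^2}} e^{-(|y|-B)^2/(2\sigma^2)}$ for $|y| > B$. The function $g$ is Lebesgue-integrable, and $f_k(y)\I{y\in\set{D}} \to f(y)\I{y\in\set{D}}$ pointwise, so the Dominated Convergence Theorem yields $\int_{\set{D}} f_k \to \int_{\set{D}} f$.

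There is no substantive obstacle; the only point worth stressing is that $\set{D}$ is an arbitrary Borel set, so its boundary need not be Lebesgue-null and a naive appeal to weak convergence of $x_k + Z$ (which controls only continuity sets) does not suffice. The argument sidesteps this precisely because $x_k + Z$ has a density: $L^1$ convergence of the densities controls the probability of \emph{every} measurable event at once, and in particular gives a bound that is uniform in $\set{D}$.
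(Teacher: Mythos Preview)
Your proof is correct and essentially identical to the paper's: both reduce to pointwise convergence of the Gaussian densities and invoke Scheff\'e's theorem to upgrade this to $L^1$ convergence, which controls the probability of every Borel set. Your additional dominated-convergence alternative and the remark about why weak convergence alone does not suffice are extras beyond what the paper gives.
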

\begin{proof}
 Let $f(\cdot)$ denote the density of a zero-mean,
 variance-$\sigma^{2}$ Gaussian random variable, so
 \begin{equation*}
    \Prob\bigl(x_{k} + Z \in \set{D}\bigr) = \int_{\set{D}} f(\tilde{y} - x_{k})\d\tilde{y}.
       \end{equation*}
  Since $f(\cdot)$ is continuous, and since the sequence $\{x_{k}\}$ converges to $\xi$, it follows that the sequence of
  densities \mbox{$\tilde{y} \mapsto f(\tilde{y} - x_{k})$} converges to $\tilde{y}
  \mapsto f(\tilde{y} - \xi)$. The result follows then by noting that, for every $k$,
  \begin{equation}
  \Prob\bigl(x_{k} + Z \in \Reals\bigr) = \Prob\bigl(\xi + Z \in \Reals\bigr) = 1
  \end{equation}
  and from Scheffe's Theorem \cite[Th.~16.12]{billingsley95}.
 \end{proof}

From~Lemma~\ref{lem:amos_cont} we conclude that $x \mapsto \Prob(Y=1| X = x)$ is continuous. Since it also bounded, it
follows that $\Prob(Y=1)$ is continuous in the input distribution under
the weak topology. Since the binary entropy function is a
continuous bounded function, this implies that $H(Y)$ is continuous in the input
distribution. By the same lemma, it follows that also the mapping $x \mapsto
H_b\bigl(\Prob(Y=1|X=x)\bigr)$ is continuous and bounded, so
$H(Y|X)$ is also continuous in the input distribution. We thus have the following lemma.
\begin{lemma}
  \label{lem:amos_cont2}
  For every fixed quantizer $\set{D}$, the functionals $H(Y)$,
  $H(Y|X)$, and $I(X;Y)$ are continuous in the input distribution
  under the weak topology.
\end{lemma}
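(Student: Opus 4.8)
The plan is to build everything on Lemma~\ref{lem:amos_cont} together with the elementary fact that bounded continuous functions are ``respected'' by weak convergence of probability measures. Write $w(x) \triangleq \Prob\bigl(Y=1 \bigm| X=x\bigr) = \Prob(x+Z \in \set{D})$. The first step is to record that $w(\cdot)$ is continuous on $\Reals$: this is immediate from Lemma~\ref{lem:amos_cont}, which gives sequential continuity of $w$ at every point, and since $\Reals$ is metrizable, sequential continuity is continuity. Moreover $w$ is bounded, since $0 \leq w(x) \leq 1$ for every $x$.

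Next I would treat $H(Y)$. Since the output is binary, $H(Y) = H_b\bigl(\Prob(Y=1)\bigr)$ with $\Prob(Y=1) = \E{w(X)}$. If a sequence of input distributions $\{P_{X,i}\}$ converges weakly to $P_X$, then $\E{w(X_i)} \to \E{w(X)}$ by the very definition of weak convergence applied to the bounded continuous test function $w$; hence $P_X \mapsto \Prob(Y=1)$ is weakly continuous. Composing with $H_b$, which is continuous on the closed interval $[0,1]$ under the convention $0\log 0 \triangleq 0$, shows that $P_X \mapsto H(Y)$ is continuous in the weak topology.

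Then I would handle $H(Y|X)$ in the same spirit: $H(Y|X) = \E{H_b\bigl(w(X)\bigr)}$, and the integrand $x \mapsto H_b\bigl(w(x)\bigr)$ is continuous, being a composition of continuous maps, and is bounded by $\log 2$; so weak convergence of the input distributions again forces convergence of $H(Y|X)$. Finally, $I(X;Y) = H(Y) - H(Y|X)$ is a difference of two weakly continuous functionals, hence itself weakly continuous, which completes the proof.

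There is no genuine obstacle here; the substantive work is entirely front-loaded into Lemma~\ref{lem:amos_cont} (which in turn rests on Scheff\'e's Theorem). The only points that must be checked, and they are routine, are that the test functions $w$ and $H_b \circ w$ are bounded and continuous on all of $\Reals$, and that the binary entropy function extended to $[0,1]$ is continuous at the endpoints.
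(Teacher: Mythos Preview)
Your proposal is correct and follows essentially the same route as the paper: the paper also derives continuity of $x\mapsto\Prob(Y=1|X=x)$ from Lemma~\ref{lem:amos_cont}, uses boundedness plus weak convergence to get continuity of $\Prob(Y=1)$ and hence of $H(Y)$ via the continuous bounded $H_b$, treats $H(Y|X)=\E{H_b(w(X))}$ identically, and concludes for $I(X;Y)$ by subtraction. Your write-up is slightly more explicit about the routine checks (metrizability, boundedness by $\log 2$, endpoint continuity of $H_b$), but the argument is the same.
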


For proving the existence of a capacity-achieving input distribution we need
a compactness result:
\begin{lemma}
\label{lem:amos_cont3}
  Let $\const{A} > 0$ be fixed. Every sequence of probability measures
  on the interval $[-\const{A}, \const{A}]$ of second moment not
  exceeding $\const{P}$ has a subsequence that converges weakly to a
  probability distribution on the interval $[-\const{A}, \const{A}]$
  of second moment not exceeding $\const{P}$.
\end{lemma}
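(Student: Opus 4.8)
The plan is to run a standard weak-compactness argument: the interval $[-\const{A}, \const{A}]$ is compact, so the probability measures supported on it form a sequentially weakly compact family, and the second-moment bound survives passage to the weak limit because $x \mapsto x^{2}$ is a bounded continuous function on this interval.

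First I would recall that, $[-\const{A}, \const{A}]$ being a compact subset of $\Reals$, the family of all Borel probability measures supported on $[-\const{A}, \const{A}]$ is uniformly tight; hence, by Prokhorov's Theorem, it is sequentially compact in the topology of weak convergence. (Equivalently, one may apply Helly's Selection Theorem to the cumulative distribution functions, which all vary within $[-\const{A}, \const{A}]$.) Thus any given sequence $\{P_{n}\}$ of such measures admits a subsequence $\{P_{n_{k}}\}$ converging weakly to some Borel probability measure $P$; and since $[-\const{A}, \const{A}]$ is closed, the portmanteau theorem ensures that $P$ is again supported on $[-\const{A}, \const{A}]$.

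It then remains to check that the second-moment bound is inherited by $P$. The map $x \mapsto x^{2}$ is continuous on $\Reals$ and bounded by $\const{A}^{2}$ on $[-\const{A}, \const{A}]$, so it is a bounded continuous function on the common support of all the measures involved. By the definition of weak convergence, $\int x^{2}\,\d P_{n_{k}}(x) \to \int x^{2}\,\d P(x)$; and since $\int x^{2}\,\d P_{n_{k}}(x) \le \const{P}$ for every $k$ by hypothesis, letting $k \to \infty$ gives $\int x^{2}\,\d P(x) \le \const{P}$, as required.

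I do not expect any genuine obstacle here. The only subtlety worth flagging is that $x \mapsto x^{2}$ is unbounded on all of $\Reals$, so weak convergence alone would not in general preserve the second moment; it is precisely the restriction to the compact interval $[-\const{A}, \const{A}]$ --- i.e.\ the peak-power truncation introduced in \eqref{eq:am_proof_prop1_compact} --- that makes the second-moment functional weakly continuous. Everything else is a routine invocation of Prokhorov's (or Helly's) theorem together with the portmanteau characterization of weak convergence.
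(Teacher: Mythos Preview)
Your proposal is correct and follows essentially the same approach as the paper: invoke Prokhorov's Theorem on the compact interval $[-\const{A},\const{A}]$ to extract a weakly convergent subsequence, then use that $x\mapsto x^{2}$ is continuous and bounded on that interval to pass the second-moment constraint to the limit. The paper's proof is just a more terse version of exactly this argument.
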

\begin{proof}
By Prokhorov's Theorem, every sequence of probability measures on $[-\const{A},\const{A}]$ has a subsequence that converges weakly to some probability measure on $[-\const{A},
  \const{A}]$. The second moment of this limiting probability measure
  cannot exceed $\const{P}$ because the function $x \mapsto x^{2}$ is
  a continuous bounded function on the interval $[-\const{A},
  \const{A}]$. 
\end{proof}
Note that Lemma~\ref{lem:amos_cont3} continues to hold for sequences of probability measures on $\Reals$ of second moment not exceeding $\const{P}$, albeit with a slightly different proof. Thus, the amplitude constraint $\const{A}$ is not essential.

It follows from Lemmas~\ref{lem:amos_cont}--\ref{lem:amos_cont3} that the supremum in
\eqref{eq:amos_def_CDAP} defining $C_{\set{D},\const{A}}(\const{P})$
is achieved.

%\section{Appendix to Section~\ref{sub:dobi_threshold}}
\section{}
\label{app:increasing}
We show that, for $\xi<\theta$, the function $\xi\mapsto W\bigl(\Upsilon_1,\Upsilon_2\bigm| \xi\bigr)$ is strictly increasing. To this end, we note that
\begin{equation}
W\bigl(\Upsilon_1,\Upsilon_2\bigm| \xi\bigr) = Q\biggl(\frac{\theta-\Delta-\xi}{\sigma}\biggr)-Q\biggl(\frac{\theta+\Delta-\xi}{\sigma}\biggr)
\end{equation}
and take the derivative with respect to $\xi$. (Recall that \mbox{$\theta=(\Upsilon_1+\Upsilon_2)/2$} and $\Delta=(\Upsilon_2-\Upsilon_1)/2$.) This yields
\begin{IEEEeqnarray}{lCl}
\IEEEeqnarraymulticol{3}{l}{\frac{\partial}{\partial\xi} W\bigl(\Upsilon_1,\Upsilon_2\bigm| \xi\bigr)} \nonumber\\
\quad & = & \frac{1}{\sqrt{2\pi\sigma^2}}e^{-\frac{(\theta-\Delta-\xi)^2}{2\sigma^2}} - \frac{1}{\sqrt{2\pi\sigma^2}} e^{-\frac{(\theta+\Delta-\xi)^2}{2\sigma^2}}\nonumber\\
& = & \frac{1}{\sqrt{2\pi\sigma^2}} e^{-\frac{(\theta-\xi)^2+\Delta^2}{2\sigma^2}}\biggl[e^{\Delta\frac{\theta-\xi}{\sigma^2}}-e^{-\Delta\frac{\theta-\xi}{\sigma^2}}\biggr] \nonumber\\
& > & 0, \qquad \xi<\theta
\end{IEEEeqnarray}
thus proving the claim.

%\section{Appendix to Section~\ref{sub:proof_thm4}}
\section{}
\label{app:proof_thm4}
To show that
\begin{equation}
\lim_{\xi\to 0} \frac{\Psi(\xi)}{\xi^2} = \frac{1}{2\sigma^2}\biggl(\frac{1}{2}+\frac{1}{\pi}\biggr)
\end{equation}
we write $\Psi(\xi)$ as
\begin{IEEEeqnarray}{lCl}
\Psi(\xi) & = & \frac{1}{\sqrt{2\pi\sigma^2}} \int_0^{\infty} e^{-\frac{(\tilde{y}-\xi)^2}{2\sigma^2}}\biggl(\frac{\tilde{y}\xi}{\sigma^2}-\frac{\xi^2}{2\sigma^2}\biggr)\d\tilde{y} \nonumber\\
& & {} + Q\biggl(\frac{\xi}{\sigma}\biggr) \log\Biggl(2Q\biggl(\frac{\xi}{\sigma}\biggr)\Biggr)\nonumber\\
& = & \frac{\xi^2}{2\sigma^2}Q\biggl(-\frac{\xi}{\sigma}\biggr)+\frac{\xi}{\sqrt{2\pi\sigma^2}}\biggl(e^{-\frac{\xi^2}{2\sigma^2}}-1\biggr) \nonumber\\
& & {} + \Biggl[Q\biggl(\frac{\xi}{\sigma}\biggr) \log\Biggl(2Q\biggl(\frac{\xi}{\sigma}\biggr)\Biggr) + \frac{\xi}{\sqrt{2\pi\sigma^2}}\Biggr] \IEEEeqnarraynumspace \label{eq:app_proof_thm4_1}
\end{IEEEeqnarray}
and compute the limiting ratio of each term on the RHS of \eqref{eq:app_proof_thm4_1} to $\xi^2$ as $\xi$ tends to zero. For the first two terms, we have
\begin{equation}
\label{eq:app_proof_thm4_2}
\lim_{\xi\to 0} \frac{\frac{\xi^2}{2\sigma^2}Q\Bigl(-\frac{\xi}{\sigma}\Bigr)}{\xi^2} = \frac{1}{4\sigma^2}
\end{equation}
and
\begin{equation}
\label{eq:app_proof_thm4_3}
\lim_{\xi\to 0} \frac{\frac{\xi}{\sqrt{2\pi\sigma^2}}\Bigl(e^{-\frac{\xi^2}{2\sigma^2}}-1\Bigr)}{\xi^2} = 0.
\end{equation}
To evaluate the last term on the RHS of \eqref{eq:app_proof_thm4_1}, we express $\xi\mapsto Q\bigl(\xi/\sigma\bigr)$ as a Taylor series around zero
\begin{equation}
Q\biggl(\frac{\xi}{\sigma}\biggr) = \frac{1}{2} - \frac{\xi}{\sqrt{2\pi\sigma^2}} + o\bigl(\xi^2\bigr).
\end{equation}
With this, we obtain
\begin{IEEEeqnarray}{lCl}
\IEEEeqnarraymulticol{3}{l}{\Biggl[Q\biggl(\frac{\xi}{\sigma}\biggr) \log\Biggl(2Q\biggl(\frac{\xi}{\sigma}\biggr)\Biggr) + \frac{\xi}{\sqrt{2\pi\sigma^2}}\Biggr]}\nonumber\\
\quad & = & \biggl(\frac{1}{2}-\frac{\xi}{\sqrt{2\pi\sigma^2}}+o\bigl(\xi^2\bigr)\biggr) \log\Biggl(1-\frac{\xi}{\sigma}\sqrt{\frac{2}{\pi}}+o\bigl(\xi^2\bigr)\Biggr) \nonumber\\
& & {} +\frac{\xi}{\sqrt{2\pi\sigma^2}} \nonumber\\
& = & \biggl(\frac{1}{2}-\frac{\xi}{\sqrt{2\pi\sigma^2}}+o\bigl(\xi^2\bigr)\biggr) \Biggl(-\frac{\xi}{\sigma}\sqrt{\frac{2}{\pi}}-\frac{\xi^2}{\sigma^2}\frac{1}{\pi}+o\bigl(\xi^2\bigr)\Biggr) \nonumber\\
& & {} +\frac{\xi}{\sqrt{2\pi\sigma^2}} \nonumber\\
& = & \frac{\xi^2}{2\sigma^2} \frac{1}{\pi} + o\bigl(\xi^2\bigr)
\end{IEEEeqnarray}
where the second step follows because
\begin{equation}
\log(1+x)=x-\frac{1}{2}x^2+o\bigl(x^2\bigr).
\end{equation}
Consequently,
\begin{equation}
\lim_{\xi\to 0} \frac{Q\Bigl(\frac{\xi}{\sigma}\Bigr) \log\biggl(2Q\Bigl(\frac{\xi}{\sigma}\Bigr)\biggr) + \frac{\xi}{\sqrt{2\pi\sigma^2}}}{\xi^2} = \frac{1}{2\sigma^2}\frac{1}{\pi}. \label{eq:app_proof_thm4_4}
\end{equation}
The claim follows by combining \eqref{eq:app_proof_thm4_2}--\eqref{eq:app_proof_thm4_4} with \eqref{eq:app_proof_thm4_1}.

%\section{Appendix to Section~\ref{sub:proof_note3}}
\section{}
\label{app:proof_note3}

\subsection{Proof of \eqref{eq:proof_note3_claim1}}
\label{app:proof_note3_182}
To prove \eqref{eq:proof_note3_claim1}, namely
\begin{equation*}
\lim_{\const{A}\downarrow 0} \sup_{\Upsilon\geq 0}\frac{|\mathsf{R}_H(\const{A},\Upsilon,p_+)|}{\const{A}^2} = 0, \quad 0\leq p_+\leq 1
\end{equation*}
we fix some $\nu \geq 1$ and analyze the cases $0\leq\Upsilon\leq\nu$ and $\Upsilon>\nu$ separately. Since we are interested in the limit as $\const{A}$ tends to zero, there is no loss in generality in assuming that $\const{A}\leq 1$. 

If $0\leq \Upsilon\leq\nu$, then $\tilde{p}$ in \eqref{eq:proof_note3_H} is bounded by
\begin{equation}
%\label{eq:proof_note3_Upsleqnu}
Q\biggl(\frac{\nu+\const{A}}{\sigma}\biggr) \leq \tilde{p} \leq Q\biggl(-\frac{\const{A}}{\sigma}\biggr)
\end{equation}
which, by the assumption $\const{A}\leq 1$, implies that $\tilde{p}$ is bounded away from $0$ and $1$:
\begin{equation}
\label{eq:proof_note3_Upsleqnu}
Q\biggl(\frac{\nu+1}{\sigma}\biggr) \leq \tilde{p} \leq Q\biggl(-\frac{1}{\sigma}\biggr).
\end{equation}
Consequently, combining \eqref{eq:proof_note3_Q} with \eqref{eq:proof_note3_H} and using the Triangle Inequality yields for $0\leq\Upsilon\leq\nu$
\begin{IEEEeqnarray}{lCl}
\IEEEeqnarraymulticol{3}{l}{\bigl|\mathsf{R}_H(\const{A},\Upsilon,p_+)\bigr|}\nonumber\\
\quad & \leq & \biggl[ \frac{\const{A}}{2\sigma}\frac{|p_+-p_-|}{\sqrt{2\pi}}e^{-\frac{\Upsilon^2}{2\sigma^2}}+\bigl|\mathsf{R}_Q(\const{A},\Upsilon,p_+)\bigr|\biggr]^3 \frac{|1-2\tilde{p}|}{\tilde{p}^2(1-\tilde{p})^2} \nonumber\\
& \leq & \biggl[\frac{\const{A}}{2\sigma}\frac{1}{\sqrt{2\pi}}e^{-\frac{\Upsilon^2}{2\sigma^2}}+\bigl|\mathsf{R}_Q(\const{A},\Upsilon,p_+)\bigr|\biggr]^3 \frac{1}{\tilde{p}^2(1-\tilde{p})^2}  \nonumber\\
& \leq & \frac{ \const{A}^3\biggl[\frac{1}{2\sigma\sqrt{2\pi}} + \frac{\const{A}}{2\sigma^2\sqrt{2\pi e}}\biggr]^3}{\Bigl[Q\bigl(\frac{\nu+1}{\sigma}\bigr)\Bigl(1-Q\bigl(-\frac{1}{\sigma}\bigr)\Bigr)\Bigr]^2}. \label{eq:proof_note3_Usm_2}\IEEEeqnarraynumspace
\end{IEEEeqnarray}
Here the second step follows by upper-bounding $|1-2\tilde{p}|\leq 1$ and $|p_+-p_-|\leq 1$; and the third step follows from \eqref{eq:proof_note3_RQ} and \eqref{eq:proof_note3_Upsleqnu} and by upper-bounding $\exp\bigl(-\Upsilon^2/(2\sigma^2)\bigr)\leq 1$. Since the RHS of \eqref{eq:proof_note3_Usm_2} does not depend on $\Upsilon$, this yields
\begin{equation}
\label{eq:proof_note3_Usm}
\lim_{\const{A}\downarrow 0} \sup_{0\leq\Upsilon\leq\nu}\frac{|\mathsf{R}_H(\const{A},\Upsilon,p_+)|}{\const{A}^2} = 0, \quad 0\leq p_+\leq 1.
\end{equation}

For $\Upsilon>\nu$, we first upper-bound \eqref{eq:proof_note3_RQ_def} as
\begin{IEEEeqnarray}{lCl}
\bigl|\mathsf{R}_Q(\const{A},\Upsilon,p_+)\bigr| & \leq & \frac{\const{A}^2}{2\sigma^2}\frac{\Upsilon+\const{A}}{\sqrt{2\pi\sigma^2}} e^{-\frac{(\Upsilon-\const{A})^2}{2\sigma^2}} \nonumber\\
& \leq & \frac{\const{A}^2}{\sigma^2}\frac{\Upsilon}{\sqrt{2\pi\sigma^2}} e^{-\frac{(\Upsilon-1)^2}{2\sigma^2}} \label{eq:proof_note3_RQ2}
\end{IEEEeqnarray}
where the first step follows by upper-bounding $\tilde{x}\leq \Upsilon+\const{A}$ and $\exp\bigl(-\tilde{x}^2/(2\sigma^2)\bigr)\leq \exp\bigl(-(\Upsilon-\const{A})^2/(2\sigma^2)\bigr)$; and the second step follows because $\Upsilon>\nu$ and $\const{A}\leq 1$, so $\const{A}<\Upsilon$. Combining \eqref{eq:proof_note3_RQ2} with \eqref{eq:proof_note3_H} yields for $\Upsilon>\nu$
\begin{IEEEeqnarray}{lCl}
\IEEEeqnarraymulticol{3}{l}{\bigl|\mathsf{R}_H(\const{A},\Upsilon,p_+)\bigr|} \nonumber\\
\quad & \leq & \biggl[\frac{\const{A}}{2\sigma}\frac{|p_+-p_-|}{\sqrt{2\pi}}e^{-\frac{\Upsilon^2}{2\sigma^2}}+\bigl|\mathsf{R}_Q(\const{A},\Upsilon,p_+)\bigr|\biggr]^3 \frac{|1-2\tilde{p}|}{\tilde{p}^2(1-\tilde{p})^2} \nonumber\\
& \leq & \biggl[\frac{\const{A}}{2\sigma}\frac{1}{\sqrt{2\pi}}e^{-\frac{\Upsilon^2}{2\sigma^2}}+\frac{\const{A}^2}{\sigma^2}\frac{\Upsilon}{\sqrt{2\pi\sigma^2}} e^{-\frac{(\Upsilon-1)^2}{2\sigma^2}}\biggr]^3 \frac{1}{\tilde{p}^2(1-\tilde{p})^2} \nonumber\\
& \leq & \frac{\Upsilon^3}{(2\pi\sigma^2)^{\frac{3}{2}}} e^{-\frac{3(\Upsilon-1)^2}{2\sigma^2}}\biggl[\frac{\const{A}}{2}+\frac{\const{A}^2}{\sigma^2}\biggr]^3\frac{1}{\tilde{p}^2(1-\tilde{p})^2} \label{eq:proof_note3_1}
\end{IEEEeqnarray}
where the first step follows from the Triangle Inequality; the second step follows from \eqref{eq:proof_note3_RQ2} and because \mbox{$|p_+-p_-|\leq 1$} and $|1-2\tilde{p}|\leq 1$; and the last step follows because $\exp\bigl(-\Upsilon^2/(2\sigma^2)\bigr)\leq\Upsilon \exp\bigl(-(\Upsilon-1)^2/(2\sigma^2)\bigr)$ for $\Upsilon>1$.

We next note that, since $\Upsilon>\nu>\const{A}$ and $0\leq\const{A}\leq 1$, we have
\begin{equation}
\label{eq:proof_note3_ptilde_U}
\tilde{p} \leq Q\biggl(\frac{\Upsilon-\const{A}}{\sigma}\biggr) < \frac{1}{2}
\end{equation}
and
\begin{IEEEeqnarray}{lCl}
\tilde{p} & \geq & Q\biggl(\frac{\Upsilon+\const{A}}{\sigma}\biggr) \nonumber\\
& > & \biggl(1-\frac{\sigma^2}{(\Upsilon+\const{A})^2}\biggr) \frac{\sigma}{\sqrt{2\pi}(\Upsilon+\const{A})} e^{-\frac{(\Upsilon+\const{A})^2}{2\sigma^2}} \nonumber\\\
& > &  \biggl(1-\frac{\sigma^2}{\nu^2}\biggr) \frac{\sigma}{\sqrt{2\pi}(\Upsilon+1)} e^{-\frac{(\Upsilon+1)^2}{2\sigma^2}}, \quad \Upsilon>\nu \IEEEeqnarraynumspace\label{eq:proof_note3_ptilde_L}
\end{IEEEeqnarray}
where the second step follows from \cite[Prop.~19.4.2]{lapidoth09}. Consequently, using \eqref{eq:proof_note3_ptilde_U} and \eqref{eq:proof_note3_ptilde_L}, the RHS of \eqref{eq:proof_note3_1} can be upper-bounded by
\begin{IEEEeqnarray}{lCl}
\IEEEeqnarraymulticol{3}{l}{\bigl|\mathsf{R}_H(\const{A},\Upsilon,p_+)\bigr| }\nonumber\\
\quad & \leq & \frac{\Upsilon^3}{(2\pi\sigma^2)^{\frac{3}{2}}} e^{-\frac{3(\Upsilon-1)^2}{2\sigma^2}}\frac{4\biggl[\frac{\const{A}}{2}+\frac{\const{A}^2}{\sigma^2}\biggr]^3}{\Bigl(1-\frac{\sigma^2}{\nu^2}\Bigr)^2\frac{\sigma^2}{2\pi(\Upsilon+1)^2}e^{-\frac{(\Upsilon+1)^2}{\sigma^2}}} \nonumber\\
& = &  \frac{4\biggl[\frac{\const{A}}{2}+\frac{\const{A}^2}{\sigma^2}\biggr]^3}{\sqrt{2\pi}\sigma^5 \Bigl(1-\frac{\sigma^2}{\nu^2}\Bigr)^2} \Upsilon^3 (\Upsilon+1)^2 \times\nonumber\\
& & {} \times \exp\biggl(-\frac{3(\Upsilon-1)^2}{2\sigma^2}+\frac{(\Upsilon+1)^2}{\sigma^2}\biggr), \quad \Upsilon>\nu. \label{eq:proof_note3_2}\IEEEeqnarraynumspace
\end{IEEEeqnarray}
Since the function
\begin{equation*}
\Upsilon \mapsto \Upsilon^3 (\Upsilon+1)^2 \exp\biggl(-\frac{3(\Upsilon-1)^2}{2\sigma^2}+\frac{(\Upsilon+1)^2}{\sigma^2}\biggr)
\end{equation*}
is bounded in $\Upsilon>\nu$, this yields
\begin{equation}
\label{eq:proof_note3_Ula}
\lim_{\const{A}\downarrow 0} \sup_{\Upsilon>\nu}\frac{|\mathsf{R}_H(\const{A},\Upsilon,p_+)|}{\const{A}^2} = 0, \quad 0\leq p_+\leq 1.
\end{equation}
Combining \eqref{eq:proof_note3_Usm} and \eqref{eq:proof_note3_Ula} proves \eqref{eq:proof_note3_claim1}.

\subsection{Proof of \eqref{eq:proof_note3_claim2}}
\label{app:proof_note3_184}
To prove \eqref{eq:proof_note3_claim2}, namely
\begin{equation*}
\lim_{\const{A} \downarrow 0} \sup_{\Upsilon\geq 0} \frac{|\mathsf{K}(\const{A},\Upsilon,p_+)|}{\const{A}^2} = 0, \quad  0\leq p_+\leq 1
\end{equation*}
we fix some $\nu \geq 1$ and analyze the cases $0\leq\Upsilon\leq\nu$ and $\Upsilon>\nu$ separately. Without loss of generality, we assume that $\const{A}\leq 1$. If $0\leq\Upsilon\leq\nu$, then we have
\begin{equation}
\label{eq:app_note3_1}
Q\biggl(\frac{\nu}{\sigma}\biggr) \leq Q\biggl(\frac{\Upsilon}{\sigma}\biggr) \leq \frac{1}{2}
\end{equation}
which yields for every $0\leq p_+\leq 1$ and every $\const{A}\leq 1$
\begin{IEEEeqnarray}{lCl}
\IEEEeqnarraymulticol{3}{l}{\bigl|\mathsf{K}(\const{A},\Upsilon,p_+)\bigr|}\nonumber\\
\quad & = & \frac{\Bigl|\frac{\const{A}}{\sigma}\frac{2(p_+-p_-)}{\sqrt{2\pi}}e^{-\frac{\Upsilon^2}{2\sigma^2}} \mathsf{R}_Q(\const{A},\Upsilon,p_+) + \bigl|\mathsf{R}_Q(\const{A},\Upsilon,p_+)\bigr|^2\Bigr|}{2Q\bigl(\frac{\Upsilon}{\sigma}\bigr)\bigl[1-Q\bigl(\frac{\Upsilon}{\sigma}\bigr)\bigr]} \nonumber\\
& \leq & \frac{\frac{\const{A}}{\sigma}\frac{2|p_+-p_-|}{\sqrt{2\pi}} \bigl|\mathsf{R}_Q(\const{A},\Upsilon,p_+)\bigr| + \bigl|\mathsf{R}_Q(\const{A},\Upsilon,p_+)\bigr|^2}{Q\bigl(\frac{\nu}{\sigma}\bigr)}\nonumber\\
& \leq & \frac{1}{Q\bigl(\frac{\nu}{\sigma}\bigr)}\biggl[\frac{\const{A}^3}{\sigma^3}\frac{1}{2\pi\sqrt{e}} + \frac{\const{A}^4}{4\sigma^4 2 \pi e}\biggr], \quad 0\leq\Upsilon\leq \nu. \label{eq:app_note3_Usm_2}
\end{IEEEeqnarray}
Here the second step follows from \eqref{eq:app_note3_1}, from the upper bound $\exp\bigl(-\Upsilon^2/(2\sigma^2)\bigr)\leq 1$, $\Upsilon\in\Reals$, and from the Triangle Inequality; and the third step follows from \eqref{eq:proof_note3_RQ} and  because $|p_+-p_-|\leq 1$. Consequently,
\begin{equation}
\label{eq:app_note3_Usm}
\lim_{\const{A} \downarrow 0} \sup_{0\leq\Upsilon\leq \nu} \frac{|\mathsf{K}(\const{A},\Upsilon,p_+)|}{\const{A}^2} = 0, \quad  0\leq p_+\leq 1.
\end{equation}

If $\Upsilon> \nu$, then we have \cite[Prop.~19.4.2]{lapidoth09}
\begin{equation}
\label{eq:app_note3_2}
\frac{\sigma}{\sqrt{2\pi}\Upsilon} \biggl(1-\frac{\sigma^2}{\nu^2}\biggr) e^{-\frac{\Upsilon^2}{2\sigma^2}} < Q\biggl(\frac{\Upsilon}{\sigma}\biggr) < \frac{1}{2}
\end{equation}
and, by \eqref{eq:proof_note3_RQ2},
\begin{equation}
\bigl|\mathsf{R}_Q(\const{A},\Upsilon,p_+)\bigr| \leq  \frac{\const{A}^2}{\sigma^2}\frac{\Upsilon}{\sqrt{2\pi\sigma^2}} e^{-\frac{(\Upsilon-1)^2}{2\sigma^2}}, \quad \Upsilon>\nu. \label{eq:proof_note3_RQ2_2012}
\end{equation}
We thus obtain for $\Upsilon>\nu$
\begin{IEEEeqnarray}{lCl}
\IEEEeqnarraymulticol{3}{l}{\bigl|\mathsf{K}(\const{A},\Upsilon,p_+)\bigr|} \nonumber\\
\,\,\, & = & \frac{\Bigl|\frac{\const{A}}{\sigma}\frac{2(p_+-p_-)}{\sqrt{2\pi}}e^{-\frac{\Upsilon^2}{2\sigma^2}} \mathsf{R}_Q(\const{A},\Upsilon,p_+) + \bigl|\mathsf{R}_Q(\const{A},\Upsilon,p_+)\bigr|^2\Bigr|}{2Q\bigl(\frac{\Upsilon}{\sigma}\bigr)\bigl[1-Q\bigl(\frac{\Upsilon}{\sigma}\bigr)\bigr]} \nonumber\\
& \leq & \frac{\sqrt{2\pi} \Upsilon e^{\frac{\Upsilon^2}{2\sigma^2}}}{\sigma \bigl(1-\frac{\sigma^2}{\nu^2}\bigr)}\biggl[\frac{\const{A}}{\sigma}\frac{2|p_+-p_-|}{\sqrt{2\pi}}e^{-\frac{\Upsilon^2}{2\sigma^2}} \bigl|\mathsf{R}_Q(\const{A},\Upsilon,p_+)\bigr| \nonumber\\
& & \quad\qquad\qquad\qquad\qquad\qquad\qquad {} + \bigl|\mathsf{R}_Q(\const{A},\Upsilon,p_+)\bigr|^2\biggr] \nonumber\\
& \leq &  \frac{\sqrt{2\pi} \Upsilon e^{\frac{\Upsilon^2}{2\sigma^2}}}{\sigma \bigl(1-\frac{\sigma^2}{\nu^2}\bigr)}\biggl[\frac{\const{A}^3}{\sigma^3\pi}\frac{\Upsilon}{\sigma}e^{-\frac{\Upsilon^2}{2\sigma^2}-\frac{(\Upsilon-1)^2}{2\sigma^2}}  + \frac{\const{A}^4}{\sigma^4 2\pi}\frac{\Upsilon^2}{\sigma^2}e^{-\frac{(\Upsilon-1)^2}{\sigma^2}}\biggr] \nonumber\\
& \leq & \sqrt{\frac{2}{\pi}}\frac{1}{\bigl(1-\frac{\sigma^2}{\nu^2}\bigr)} \frac{\Upsilon^2}{\sigma^2} e^{\frac{\Upsilon^2}{2\sigma^2}-\frac{(\Upsilon-1)^2}{\sigma^2}}\biggl[1+\frac{\const{A}}{2\sigma}\frac{\Upsilon}{\sigma}\biggr]\frac{\const{A}^3}{\sigma^3} \nonumber\\
& \leq & \sqrt{\frac{2}{\pi}} \frac{1}{\bigl(1-\frac{\sigma^2}{\nu^2}\bigr)} \frac{\Upsilon^2}{\sigma^2} e^{\frac{\Upsilon^2}{2\sigma^2}-\frac{(\Upsilon-1)^2}{\sigma^2}}\biggl[1+\frac{\Upsilon^2}{2\sigma^2}\biggr]\frac{\const{A}^3}{\sigma^3} \label{eq:app_note3_almost}
\end{IEEEeqnarray}
where the second step follows from \eqref{eq:app_note3_2} and from the Triangle Inequality; the third step follows from \eqref{eq:proof_note3_RQ2_2012} and because $|p_+-p_-|\leq 1$; the fourth step follows by upper-bounding $\exp\bigl(-\Upsilon^2/(2\sigma^2)\bigr)\leq \exp\bigl((\Upsilon-1)^2/(2\sigma^2)\bigr)$; and the last step follows because $\Upsilon>\nu$ and $\const{A}\leq 1$, so $\const{A}\leq\Upsilon$.

Since the function
\begin{equation*}
\Upsilon \mapsto \frac{\Upsilon^2}{\sigma^2} e^{\frac{\Upsilon^2}{2\sigma^2}-\frac{(\Upsilon-1)^2}{\sigma^2}}\biggl[1+\frac{\Upsilon^2}{2\sigma^2}\biggr]
\end{equation*}
is bounded in $\Upsilon>\nu$, this yields 
\begin{equation}
\label{eq:app_note3_Ula}
\lim_{\const{A} \downarrow 0} \sup_{\Upsilon>\nu} \frac{|\mathsf{K}(\const{A},\Upsilon,p_+)|}{\const{A}^2} = 0, \quad  0\leq p_+\leq 1.
\end{equation}
Combining \eqref{eq:app_note3_Usm} and \eqref{eq:app_note3_Ula} proves \eqref{eq:proof_note3_claim2}.

\section*{Acknowledgment}
The authors wish to thank Paul~P.~Sotiriadis, who sparked their interest in the problem of quantization. They further wish to thank Tam\'as Linder, Alfonso~Martinez, and Sergio~Verd\'u for enlightening discussions and the Associate Editor Young-Han Kim and the anonymous referees for their valuable comments.

%\bibliographystyle{IEEEtran}           % in order of first citation
%\bibliography{/Users/tokoch/Library/texmf/tex/bibtex/header,/Users/tokoch/Library/texmf/tex/bibtex/bibliofile}
%\bibliography{/Users/koch/Library/texmf/tex/bibtex/header_short,/Users/koch/Library/texmf/tex/bibtex/bibliofile}

%\rule[0.5ex]{2em}{0.2pt}

\IEEEtriggeratref{14}

% Generated by IEEEtran.bst, version: 1.13 (2008/09/30)

\end{document}